\setlist{leftmargin=*}
\numberwithin{equation}{section}
\newtheoremstyle{corsivo}
   {\medskipamount}{\medskipamount}%
   {\itshape}{}%
   {\bfseries}{}%
   { }
   {\thmname{#1}\thmnumber{\@ifnotempty{#1}{ }\@upn{#2}}%
    \thmnote{ {\bfseries\boldmath(#3)}}.}%
\theoremstyle{corsivo}
\newtheorem{theorem}{Theorem}[section]
\newtheorem{lemma}[theorem]{Lemma}
\newtheorem{proposition}[theorem]{Proposition}
\newtheorem{definition}[theorem]{Definition}
\newtheorem{assumption}[theorem]{Assumption}
\newtheoremstyle{dritto}
   {\medskipamount}{\medskipamount}%
   {\rmfamily}{}%
   {\bfseries}{}%
   { }
   {\thmname{#1}\thmnumber{\@ifnotempty{#1}{ }\@upn{#2}}%
    \thmnote{ {\bfseries\boldmath(#3)}}.}%
\theoremstyle{dritto}
\newtheorem{remark}[theorem]{Remark}
\newcommand{\sub}[1]{_{\mathrm{#1}}}
\newcommand{\eps}{\varepsilon}
\newcommand{\Id}{\mathds{1}}   
\newcommand{\di}{\mathrm{d}}
\newcommand{\X}{\mathcal{X}}
\newcommand{\N}{\mathbb{N}}
\newcommand{\Z}{\mathbb{Z}}
\newcommand{\R}{\mathbb{R}}
\newcommand{\C}{\mathbb{C}}
\newcommand{\Do}{\mathcal{D}}
\newcommand{\Hi}{\mathcal{H}}
\newcommand{\tuv}{\mathcal{T}}
\newcommand{\UB}{\mathcal{U}\sub{BF}}
\newcommand{\scal}[2]{\left\langle  #1 ,#2 \right\rangle}                
\newcommand{\norm}[1]{\left\| #1 \right\|}
\newcommand{\set}[1]{ \left\{  #1 \right\}} 
\DeclareMathOperator{\Tr}{Tr}         
\DeclareMathOperator{\tr}{tr}           
\DeclareMathOperator{\re}{Re}
\DeclareMathOperator{\dist}{dist}
\newcommand{\ie}{{\sl i.\,e.\ }}   
\newcommand{\eg}{{\sl e.\,g.\ }} 
\newcommand{\alev}{{\sl a.\,e.\ }} 
\newcommand{\Or}{{\mathrm{O}}}
\newcommand{\abs}[1]{\left\lvert#1\right\rvert}
\newcommand{\virg}[1]{``#1''}
\renewcommand{\(}{\left(}
\renewcommand{\)}{\right)}
\newcommand{\FC}{{\mathcal{C}}}
\newcommand{\fjj}{\Tilde{f}_{jj}}
\newcommand{\zjj}{\zeta_{jj}}
\newcommand{\Jj}{\Tilde{J_j}}
\newcommand{\fjjsing}{\fjj^{\text{sing}}}
\newcommand{\fjjreg}{\fjj^{\text{reg}}}
\newcommand{\Bril}{{\FC_1^*}}
\let\oldfootnote\footnote
\renewcommand{\footnote}[1]{\oldfootnote{\  #1}}
\title{
Longitudinal conductivity at integer quantum Hall transitions 
}
\author[1]{Giovanna Marcelli}
\author[2]{Lorenzo Pigozzi}
\author[3]{Marcello Porta}
\affil[1]{Dipartimento di Matematica e Fisica, Università di Roma Tre, Largo S. L. Murialdo 1, 00146
Roma, Italy}
\affil[2]{Institute of Science and Technology Austria, Am Campus 1, 3400 Klosterneuburg,
Austria}
\affil[3]{Mathematics Area, SISSA, Via Bonomea 265, 34136 Trieste, Italy}
\date{\today}
\begin{document}

\maketitle

\begin{abstract} 
We consider a class of two-dimensional tight binding models displaying conical intersections of the Bloch bands at the Fermi level. The setting includes the case of generic transitions between quantum Hall phases. We consider the longitudinal conductivity, as given by Kubo formula, describing the variation of the current after introducing a space-homogeneous electric field, in an adiabatic way. We obtain an explicit expression for the longitudinal conductivity, completely determined by the number of conical intersections and by the shape of the cones. In particular, the formula reproduces the known quantized values found for graphene and for the critical Haldane model. Furthermore, we discuss the validity of Kubo formula in presence of conical intersections in the spectrum, starting from the time-dependent Schr\"odinger equation. For electric fields which are weak and slowly varying in space and in time, we prove the validity of linear response from quantum dynamics.
\end{abstract}

\maketitle
\tableofcontents
\goodbreak

\section{Introduction}
\label{sec:intro} 
The Integer Quantum Hall Effect (IQHE) is a paradigmatic example of universal transport phenomenon in condensed matter physics. Experimentally observed in \cite{klitzing}, since then it has been the focus of extensive research in theoretical and in mathematical physics \cite{G}. The IQHE can be explained via the formalism of linear response theory: when a gapped Hamiltonian is adiabatically perturbed by an electric field of small intensity, the resulting transverse conductivity, defined via Kubo formula, is quantized in integer multiples of $e^{2} / h$, with $e$ the electric charge and $h$ Planck's constant. The first theoretical works on the IQHE date back to the seminal papers \cite{laughlin, tknn, avron}. Since then, mathematically rigorous results established the quantization of the Hall conductivity for general lattice models \cite{BES, AS2} and the emergence of Hall plateaux as a consequence of disorder and Anderson localization \cite{AG}. Concerning the validity of Kubo formula from quantum dynamics, it can be proved as a consequence of the adiabatic theorem \cite{ASY}. Actually, for the Hall effect, Kubo formula turns out to be exact, beyond linear response \cite{KS}; see also the recent generalization \cite{MM}. The validity of Kubo formula has been obtained for gapped models in the continuum \cite{ES,M}, and more recently for lattice systems in presence of strong disorder \cite{dREF}. Concerning gapless models, the validity of Kubo formula for one-dimensional systems and for the edge transport properties of quantum Hall systems has been derived in \cite{PS}, for non-interacting systems. In the last few years, the proof of quantization of the Hall conductivity has been extended to interacting fermions as well \cite{HM, GMPhall, BBdRF}. Also, the validity of Kubo formula has been justified in the many-body setting \cite{BdRF, BdRFrev, MT, henheik, GLMP24}, and its exactness for quantum Hall systems has been proved in \cite{BRFL, WMMMT}. Concerning interacting gapless one-dimensional systems, the validity of Kubo formula has been recently obtained in \cite{PSS}, combining the approach of \cite{GLMP24} with  rigorous renormalization group methods.

It is a well-known fact that two local, gapped Hamiltonians associated with different values of the Hall conductivity cannot be smoothly deformed one into the other, without closing the spectral gap at some point in the interpolation; this gap closure is referred to as a quantum Hall phase transition. Understanding the nature of this phase transition is an interesting problem in condensed matter physics. Let us consider the simplest case, that is non-interacting and translation-invariant systems. There, the spectral properties of the Hamiltonian can be analyzed in momentum space, via Bloch theory. In many concrete situations, it is found that at the transition point the spectral gap closes with conical intersections of the Bloch bands. For instance, this is the case for the Haldane model \cite{haldane}, or for the model considered in \cite{ludwig}. Recently, it has been proved that conical intersections at the Fermi level arise generically \cite{drouot} at Hall transitions, for a wide class of non-interacting, translation-invariant systems.

Models with conical intersections at the Fermi level are an example of semimetals, and display interesting transport properties. Let us consider the longitudinal conductivity, defined as the linear response of the average current to a parallel electric field. This quantity is exactly zero for gapped systems, but might be non-zero for gapless models. Specifically, the longitudinal conductivity has been computed in a number of semimetallic systems, such as graphene \cite{stauber}, the critical Haldane model \cite{haldane}, or in \cite{ludwig}. In these works, it is found that the longitudinal conductivity, defined via Kubo formula, is equal to $n / 16$, where $n$ is the number of conical intersections at the Fermi level. More recently, this remarkable formula has also been rigorously established for weakly interacting fermions, in the case of graphene \cite{GMP12} and for the critical Haldane-Hubbard model \cite{GMJP16, GMPhald}, using rigorous renormalization group methods and Ward identities. 

The purpose of this paper is twofold. In the first part, we compute the Kubo longitudinal conductivity for a general class of non-interacting lattice models, while in the second part we discuss the validity of Kubo formula for suitable time-dependent perturbations. Besides translation-invariance, our main assumption (spelled out more precisely in Assumption \ref{assum:H}) is the existence of conical intersections at the Fermi level for the Bloch bands, without any further hypothesis on the structure of the Bloch Hamiltonian. In particular, combined with \cite{drouot}, our theorem allows to determine the longitudinal conductivity at the generic transition between different Hall phases. In the second part, we discuss the validity of Kubo formula from quantum dynamics, for a smaller class of Hamiltonians displaying conical intersections, and for a suitable class of external fields which are weak and slowly varying in space-time. 

The first result of this paper can be written as follows.
\medskip

\noindent{\it Under Assumption \ref{assum:H}, the longitudinal conductivity in the direction $j=1,2$ is given by:}
    \begin{equation}
    \label{eqn:longitudinal-conductivity}
        \sigma_{jj} = \frac{1}{16}\sum_{l=1}^n\frac{s_{1j}^2 + s_{2j}^2}{|\det S_l|},
    \end{equation}
{\it where $n$ is the number of intersections between the Bloch bands at Fermi energy $\mu$ and $S_l = \big( s_{ij}\big)_{i,j=1,2}$,  is a $2\times 2$ invertible matrix that defines the shape of the cone near the $l$-th intersection, see  (\ref{eqn:conical}) (we omit the dependence on the cone label $l$ in $s_{ij}$ to simplify the notation). In particular, for isotropic cones, the right-hand side of (\ref{eqn:longitudinal-conductivity}) equals $\frac{n}{16}$.}
\medskip

Expression (\ref{eqn:longitudinal-conductivity}) depends solely on the eigenvalue structure of the Bloch Hamiltonian near the Fermi level and not on the Bloch functions. This is somewhat opposite to what happens for the IQHE, where the transverse conductivity is determined by the topology of the Bloch functions, and not by the energy bands. Thus, the longitudinal conductivity at criticality is identified as a non-topological transport property. In Remark \ref{rem:hald} we exhibit an explicit example of non-universality, by computing the longitudinal conductivity for the strained Haldane model.  The proof of (\ref{eqn:longitudinal-conductivity}) is based on an adaptation of the strategy introduced in \cite{GMP12} for interacting graphene, and then applied in \cite{GMJP16, GMPhald} to the critical Haldane-Hubbard model. Here we consider non-interacting fermions, however for a large class of Bloch Hamiltonians displaying conical intersections at the Fermi level, that allows to cover the generic case for quantum Hall transitions \cite{drouot}. With respect to the previous works \cite{GMP12, GMJP16}, our computation shows that the final expression only depends on the spectrum of a general $N\times N$ Bloch Hamiltonian close to the Fermi points, while the computations of \cite{GMJP16, GMPhald} used the explicit, asymptotic Dirac structure of the $2\times 2$ Bloch Hamiltonian close to the Fermi points.


A similar expression for the conductivity of two-dimensional systems exhibiting conical intersections of the Bloch bands at the Fermi level has been obtained in \cite{CFLSD}. The setting is however slightly different from the one of the present paper. In \cite{CFLSD}, the external field is introduced instantaneously, and the definition of conductivity involves a time-averaging assumption of the linear response. The assumptions of \cite{CFLSD} are more restrictive than ours (even though we believe that they could be relaxed): time-reversal symmetry is assumed, which rules out applications to the Hall effect, and stronger symmetry and regularity assumptions of the Bloch bands close to the conical intersections are required. The work \cite{CFLSD} considers isotropic cones, and obtains a universal result. We consider more general shapes of the conical crossing, and our formula shows under which circumstances the longitudinal conductivity takes a non-universal value.

Furthermore, our formulation of linear response can be rigorously justified from quantum dynamics, in a physically relevant regime. This is the content of the second part of the paper. We consider a smaller class of unperturbed Hamiltonians (we assume finite range hopping), on a finite lattice with side $L$ and with periodic boundary conditions, and we couple the system to an external vector potential $\theta A(\theta x) e^{\eta t}$, with $A(\cdot)$ the lattice restriction of a smooth and fast decaying function on $\mathbb{R}^{2}$. We assume that $\theta \in \mathbb{R}$ and $\eta \in \mathbb{R}_{+}$ are small, eventually to be taken to zero, but after the $L\to \infty$ limit. This scaling regime will facilitate the analysis of the real-time Duhamel expansion, which, due to the absence of a spectral gap for the unperturbed Hamiltonian, would othwerwise diverge at every order in the zero temperature and thermodynamic limit. Our choice of scaling is related to the one considered in  \cite{Frohlich}, where the derivation of effective actions for non-relativistic many-body systems is discussed from a non-rigorous viewpoint, for an external vector potential of the same type we consider but with $\theta = \eta$.

Given a local observable $O$, we denote by $\langle O\rangle_{\beta,L}$ its finite volume and finite temperature average at equilibrium, and by $\langle U(t;-\infty)^{*} O U(t;-\infty) \rangle_{\beta,L}$ its evolution under the dynamics generated by the time-dependent Hamiltonian. We prove the following result.

\medskip

\noindent{\it For $\eta, \theta$ small enough, such that $|\theta| \ll \eta / |\log \eta|$, and for $\beta, L$ large enough, we have:
\begin{equation}\label{eq:thm2inf}
\langle U(0;-\infty)^{*} O U(0;-\infty) \rangle_{\beta,L} - \langle O \rangle_{\beta,L} = (\langle U(0;-\infty)^{*} O U(0;-\infty) \rangle_{\beta,L} - \langle O \rangle_{\beta,L})^{\text{(1)}} + O(\theta^{2} \log \eta)\;,
\end{equation}
where the first term on the right-hand side is the linear response, as given by the first order of the Duhamel expansion in the vector potential for the perturbed dynamics, and where the error term is uniform in $\beta, L$.
}

We stress that the inverse temperature $\beta$ has to be taken larger than some inverse power of $\eta$, which prevents to use the result to study fixed positive temperatures as $\eta \to 0^{+}$. Notice that the full response of a physical observable $O$ must also take into account the coupling of $O$ with the electromagnetic field, defined via the Peierls' substitution. Choosing $O$ to be the current density operator, it is known that the first order variation of the state combined with the first order variation of the observable reproduces Kubo formula for the conductivity matrix, see {\it e.g.} \cite{GMP12, GMPhald, GMPhall}.  Observe that, for our choice of vector potential, the electric field at time zero is of order $\eta \theta$, which is much larger than the error terms in (\ref{eq:thm2inf}) provided $|\theta| \ll \eta / |\log\eta|$. Thus, (\ref{eq:thm2inf}) proves the validity of linear response in this range of parameters. It is useful to point out that, as the proof of Eq. (\ref{eq:thm2inf}) will show, the second order term in the Duhamel expansion becomes dimensionally comparable with the linear response for $\eta \sim \theta$, up to log-corrections. This does not take into account special cancellations in the perturbation theory, which might be present: in fact, for one-dimensional gapless systems, such cancellations indeed arise, as discussed in \cite{PS}, and more recently in \cite{PSS} for interacting systems. We defer the precise analysis of possible cancellations in the setting of the present paper to a future work. Let us also point out that the work \cite{CFLSD} discusses the validity of Kubo formula for gapless systems, for a certain range of parameters. There, it is shown that, for semimetallic systems, the averaged current over a period $T$ is well-approximated by linear response, provided the intensity $\varepsilon$ of the electric field is such that $\varepsilon T^{6} \ll 1$. Translated to our setting, $|\varepsilon| = \eta |\theta|$, and the time-scale $T$ plays the role of our $1/\eta$. Thus, the analysis of \cite{CFLSD} is relevant for $|\theta| \ll \eta^{5}$, which has to be compared with our range $|\theta| \ll \eta  / |\log \eta|$.

The proof of (\ref{eq:thm2inf}) is based on the analysis of the real-time Duhamel expansion, after rewriting each term as an integral over imaginary times. This representation has been proved in \cite{GLMP24} (see \cite{GMPhall, GMPweyl} for previous results at first and at second order), and its usefulness is that the oscillatory behavior of real-time correlations translates into decay properties of imaginary time correlations, which allow for a more efficient analysis of the Duhamel expansion. This identity has been used in \cite{GLMP24} to study transport in gapped many-body systems, and recently it has been the starting point of the study of transport in gapless systems as well; see \cite{PS} for the case of non-interacting $1d$ systems and the edge modes of $2d$ systems. As mentioned above, the analysis of \cite{PS} is based on the identification of certain cancellations at all orders in the Duhamel expansion, and on precise estimates of Euclidean correlations, which will not be needed in the present application, due to the higher co-dimension of the Fermi surface in the present work (it is $2$ here, and $1$ in \cite{PS}).

Concerning possible extensions of this paper, it would be interesting to study the critical transverse conductivity at the same level of generality of this work. See \cite{FGR} for recent work establishing the universality of the critical transverse conductivity for the Haldane model, that also takes into account the effect of many-body interactions. Furthermore, it would of course be very interesting to extend our results to the case of weakly interacting fermions. We expect this to be possible, using rigorous renormalization group methods; see \cite{GMP12, GMJP16, GMPhald, FGR} for the applications of these techniques to transport problems in two-dimensional interacting semimetallic systems. Finally, it would be interesting to extend the methods of the present paper to study spin transport, in the context of the quantum spin Hall effect; see {\it e.g.} \cite{MPTa, MPTe, MM2, FrMa} for recent rigorous results in this direction.

The paper is organized as follows. In Section \ref{sec:model} we define our setting and we state our results. In particular, in Section \ref{sec:reslin} we state Theorem \ref{thm:main}, about the longitudinal conductivity in presence of general conical intersections, while in Section \ref{sec:resder} we state Theorem \ref{thm:response}, about the derivation of linear response from quantum dynamics. Then, in Section \ref{sec:proof} we give the proof of Theorem \ref{thm:main}, while in Section \ref{sec:proofresp} we discuss the proof of Theorem \ref{thm:response}. Finally, in Appendix \ref{sec:aux} we collect some auxiliary technical results.

\section{Results} \label{sec:model}
\subsection{Linear response for quantum systems with conical intersections}\label{sec:linrisp}
\subsubsection{Setting}\label{sec:setting}
\paragraph{Periodic operators.} We consider infinite quantum systems having a \emph{crystalline structure}, namely their configuration space $\X$ is invariant under translations by any vector in a Bravais lattice $\Gamma$. We address  discrete models in $2$-dimension, namely $\X$ is a discrete set of points in $\R^2$. It can be assumed that the Bravais lattice $\Gamma$ is spanned over the integers by a basis $\set{v_1, v_2}\subset \R^2$.

The Hilbert space for a quantum particle is
$$
\Hi:=\ell^2(\X).
$$
The crystalline structure of the configuration space is lifted to a symmetry of the one-particle Hilbert space given by the unitary representation  of the \emph{translation operators} $T_\gamma$ with $\gamma\in\Gamma$, where
$$
\(T_\gamma\psi\)(x):=\psi(x-\gamma)\qquad \text{for every $\psi\in\Hi$}.
$$
An operator $A$ acting on $\Hi$ is called \emph{periodic} if $[A, T_\gamma] = 0$ for all $\gamma \in \Gamma$. As is well-known, the analysis of periodic operators is simplified by the use of their \emph{(modified)
Bloch--Floquet representation} (see \eg \cite[Chapter 5]{teufel} and references therein). At this point it is convenient to introduce some notation. 

We introduce the dual lattice $\Gamma^*:=\{ k\in\R^2\,:\, k\cdot\gamma\in 2\pi \Z \text{ for all $\gamma\in\Gamma$}  \}$. We denote by $\FC_1$ the \emph{centered fundamental cell of }  $\Gamma$, namely
\[
\FC_1:=\set{x\in\X\,:\, x=\sum_{j=1}^2\alpha_j v_j \text{ with $\alpha_j\in \left[-\frac{1}{2},\frac{1}{2}\right]$}}.
\]
Similarly, we define the \emph{unit fundamental cell of} $\Gamma^*$ by setting
\[
\FC_1^*:=\set{k\in\R^2\,:\, k=\sum_{j=1}^2\beta_j v^*_j \text{ with $\beta_j\in \left[-\frac{1}{2},\frac{1}{2}\right]$}},
\]
where $\{v_1^*, v_2^*\}$ is the dual basis of $\{v_1, v_2\}$, \ie $v_i^* \cdot v_j=2\pi \delta_{i,j}$.

The {\emph{(modified) Bloch--Floquet transform}} is initially defined on compactly supported functions $\psi\in \ell^2_c (\X)$ as
\begin{equation}
\label{eqn:UB}
(\UB\psi)(k,y) :=\frac{1}{\abs{\FC_1^*}^{1/2}}\sum_{\gamma \in \Gamma} e^{ -i k\cdot (y-\gamma)} (T_\gamma \psi)(y)\qquad \text{for all } k\in\R^2,\, y\in \X.
\end{equation}
Notice that for fixed $k\in \R^2$, the function $(\UB \psi)(k,\cdot)$ is periodic with respect to the translations by vectors in $\Gamma$, hence it can be considered as an element of $\ell^2(\X/\Gamma)\cong \C^N$, where $N<\infty$ is the cardinality of $\X/\Gamma$. On the other hand, for fixed $y\in\X$, the map $(\UB \psi)(\cdot,y)$ is \emph{pseudoperiodic} with respect to the translations by vectors in $\Gamma^*$:
\begin{equation}
\label{eqn:pseudoper}
(\UB \psi)(k+\gamma^*, y) = \left( \varrho_{\gamma^*} \, \UB \psi\right)(k , y)\qquad   \text{for all } \gamma^* \in \Gamma^*,
\end{equation}
where $(\varrho_{\gamma^*}\varphi)(y):=e^{-i \gamma^*\cdot y} \varphi(y)$ for every $\varphi\in\ell^2(\X/\Gamma)$, and $\Gamma^*\ni \gamma^* \mapsto\varrho_{\gamma^*}$ defines a unitary representation. 
It is useful to introduce the Hilbert space
\[
\Hi_\varrho:=\set{\phi\in L^2\sub{loc}\( \R^2, \ell^2(\X/\Gamma) \)\,:\, \phi(k+\gamma^*)=\varrho_{\gamma^*}\,\phi(k)\,\,\text{ for all $\gamma^*\in\Gamma^*$, for a.e. $k\in\R^2$}}
\]
equipped with the scalar product
\[
\scal{\phi}{\psi}_{\Hi_\varrho}:=\int_{\FC_1^*}dk\,\scal{\phi(k)}{\psi(k)}_{\ell^2(\X/\Gamma)}.
\]
The map defined by \eqref{eqn:UB} extends  to a unitary operator $\UB \colon \Hi \to \Hi_\varrho$. Its inverse transformation 
$\UB^{-1} \colon \Hi_\varrho\to\Hi$ is explicitly given by
$$
(\UB^{-1} \varphi)(x) =\frac{1}{\abs{\FC_1^*}^{1/2}}\int_{\FC_1^*}dk\,e^{i k\cdot x}\varphi(k,[x]),
$$
where $[\,\cdot\,]$ comes from to the a.e. unique decomposition $\X\ni x = \gamma_x+ [x]$, with $\gamma_x\in\Gamma$ and $[x] \in\FC_1$.
As anticipated before, this transform is advantageous in the analysis of periodic operators as they become
\emph{covariant fibered operator} acting in $\Hi_\varrho\subset L^2(\R^2, \ell^2(\X/\Gamma))=\int^{\oplus}_{\R^2}dk\, \ell^2(\X/\Gamma)$.
For a generic Hilbert space $\Hi_\sharp$ we denote by $\mathcal{L}(\Hi_\sharp)$ the set of all linear operators from $\Hi_\sharp$ to itself.
More specifically, given a periodic operator $A$ acting in $\Hi$, then one has that
\begin{equation} 
\label{eqn:A(k)}
\UB \, A \, \UB^{-1} = \int_{\R^2}^{\oplus}dk\, A(k),
\end{equation}
where each $A(k)$ is a element of $\mathcal{L}\(\ell^2(\X/\Gamma)\)\cong \mathcal{L}\(\C^N\)$, thus $A(k)$ is isomorphic to $N\times N$ matrix, and fulfills the \emph{covariance} property
\begin{equation}
\label{eqn:cov}
 A(k+\gamma^*) = \varrho_{\gamma^*} \, A(k) \, \varrho_{\gamma^*}^{-1} \qquad \text{for all } k\in\R^2, \; \gamma^*\in\Gamma^*. 
\end{equation}

\paragraph{The model.} Our goal is to investigate the response of a crystalline quantum system to the application of an external constant electric field of small intensity. Let us describe the system at equilibrium. Let $H$ be the Hamiltonian of the system, before the application of the electric field. The initial state of the system is defined by the \emph{Fermi projector} 
\begin{equation}
\label{eqn:fermiproj}
P_\mu:=\chi_{(-\infty,\mu]}(H)
\end{equation}
where $\mu$ is the \emph{Fermi energy} and $\chi_{(-\infty,\mu]}$ is the characteristic function of the set $(-\infty,\mu]$.

We shall consider Hamiltonians satisfying the following assumptions. In what follows, unless otherwise specified, we shall denote by $|k|$ the natural periodic norm of $k = (k_{1}, k_{2})$:
\begin{equation}
|k| = \min_{n_{1},n_{2} \in \mathbb{Z}} \big|k +  n_{1} v^*_1 +  n_{2} v^*_2\big|,
\end{equation}
where we recall that $(v^*_1, v^{*}_2)$ is the basis of the dual lattice $\Gamma^*$.
\begin{assumption} 
\label{assum:H}
\begin{enumerate}[label=$(\mathrm{H}_{\arabic*})$,ref=$(\mathrm{H}_{\arabic*})$]
\item \label{item:perselfadj}
The Hamiltonian $H$ of the unperturbed system is a periodic self-adjoint operator acting in $\Hi$ such that in Bloch--Floquet representation its fibration
\[
H\colon \R^2 \to \mathcal{L}(\ell^2(\X/\Gamma))\cong\mathcal{L}\(\C^N\),\quad k\mapsto H(k)
\]
is a $C^2$ covariant map.

\item \label{item:conical}
Let $\Lambda_+(k)$ be the smallest eigenvalue of $H(k)$ larger than or equal to $\mu$ and $\Lambda_-(k)$ is the largest eigenvalue of $H(k)$ smaller than or equal to $\mu$. The set $\left\{ k\in \FC_1^* \,:\, \mu \in \sigma(H(k)) \right\}$ consists of a finite number $n$ of distinct points $\omega_1,\dots,\omega_n$, and for each $\omega_l$ there exists an invertible $2\times 2$ matrix $S_l$ and a vector $a_l\in\R^2$:
\begin{equation}
\label{eqn:conical}
\Lambda_{\pm}(k)=\mu \pm \abs{S_l (k-\omega_l)}+a_l\cdot(k-\omega_l)+o\(\abs{k-\omega_l}\)\qquad \text{for $k\to \omega_l$.}
\end{equation}
In addition, we require that there exists a positive constant $C_*$ such that
\begin{equation}
\label{eqn:cone}
\abs{S_l(k)}-a_l\cdot k\geq C_*\abs{k}\qquad \text{for every $k\in\FC_1^*,\,1\leq l\leq n$}.
\end{equation}
Furthermore, we assume that for $k\neq \omega_{l}$ and for $|k-\omega_{l}|$ small enough, the multiplicity of $\Lambda_{\pm}(k)$ is $1$.
\end{enumerate}
\end{assumption}

With reference to Assumption \ref{item:conical}, the points $\omega_1,\dots,\omega_n$  are called the \emph{Fermi points}.

\begin{remark}
\label{rem:assum}

\begin{enumerate}[label=(\roman*), ref=(\roman*)]
\item \label{rem:1} The conical crossing assumption in the sense of equality \eqref{eqn:conical} is exactly the one singled out in \cite[Eq. (1.2)]{drouot}, relevant for generic transitions between Hall phases.

\item \label{rem:2} Observe that neglecting the remainder $ o\(\abs{k-\omega_l}\) $ equation \eqref{eqn:conical} describes a \emph{cone} in the classes of quadric surfaces under affine transformations. Indeed, let us recall that a generic affine transformation in $\R^3$ is given by 
\begin{equation}
(x_1,x_2,x_3)=x\mapsto x':=Ax+b
\end{equation}
where $A$ is an invertible $3\times 3$ matrix and $b$ is a generic vector in $\R^3$. The canonical equation of an affine cone is ${(x'_1)}^2+{(x'_2)}^2={(x'_3)}^2$.  We set
$$
A=\left(
\begin{array}{@{}c|c@{}}
S_l & \begin{array}{@{}c@{}} 0 \\ 0  \end{array} \\
\cline{1-1}
\multicolumn{1}{@{}c}{
  \begin{matrix} -a_l  \end{matrix}
} & 1
\end{array}
\right),\qquad
b=\begin{pmatrix}
0 \\
0 \\
-\mu 
\end{pmatrix},
$$
where $A$ is invertible since $S_l$ is so, and we use in $\R^3$ the coordinates $(k_1,k_2,z)$ while the new ones $(k'_1,k'_2,z')$ are obtained under the affine transformation induced by the above choices for $A$ and $b$. Clearly, ${(z')}^2={(k'_1)}^2+{(k'_2)}^2$ if and only if $z=\mu \pm \abs{S_l k}+a_l\cdot k$. A translation $(k_1,k_2)=k\mapsto k-\omega_l$ concludes this remark.
\item \label{rem:4} Let us show an explicit condition such that inequality \eqref{eqn:cone} is satisfied. Defining $\lambda_*:=\min\set{ \lambda_1,  \lambda_2\,:\,\text{ $\lambda_1$ and $\lambda_2$ are eigenvalues of $S_l^*S_l$ } }>0$. 
If $\sqrt{\lambda_*}-\abs{a_l}>0$ then inequality \eqref{eqn:cone} holds true. Indeed, being $V$ the unitary diagonalizing $S_l^*S_l$ we have that
\[
\abs{S_l k}^2=\scal{k}{S_l^*S_l k}=\scal{V k}{ V S_l^*S_l V^{-1} V k}\geq \lambda_*\abs{V k}^2=\lambda_*\abs{k}^2.
\]
Thus, by the Cauchy--Schwarz inequality we conclude that
\[
\abs{S_l k}-a_l\cdot k\geq (\sqrt{\lambda_*}-\abs{a_l})\abs{k}\qquad\text{for all $k\in\R^2$.}
\]

\item \label{it:spectraldec}
We denote by $\Lambda_1(k),\ldots,\Lambda_m(k)$ the eigenvalues of $H(k)$ smaller than $\mu$ \alev in $k$ and by $\Lambda_{m+1},\ldots,\Lambda_{N}(k)$ the eigenvalues bigger than $\mu$ \alev in $k$ (this labeling might includes repetition of the eigenvalues according to their multiplicity).
Notice that the number $m$ does not depend on $k$ due to Assumption \ref{assum:H}.

\item \label{it:eigenvalcont} 
The eigenvalues $\Lambda_1(k),\ldots,\Lambda_N(k)$ are continuous functions in $k$. 
In fact, they are roots of the characteristic polynomial of $H(k)$, denoted by $p_{H(k)}$.
The coefficients of $p_{H(k)}$ are $C^2$ in $k$ because $k\mapsto H(k)$ is $C^2$ according to hypothesis \ref{item:perselfadj}.
Therefore, the map $k \mapsto (\Lambda_1(k),\ldots,\Lambda_N(k))$ is continuous because the roots of a polynomial are continuous functions of its coefficients, see {\it e.g.} \cite{rootspolynomial}.
Then also the map $k\mapsto \Lambda_m(k) - \Lambda_{m+1}(k)\equiv \Lambda_-(k) - \Lambda_+(k)$ is continuous, and this will be used in Subsection \ref{sec:reular-part-fjj}.
\end{enumerate}
\end{remark}

\subsubsection{Linear response}\label{eq:linresp}

We couple the system to a spatially uniform electric field $E\in \R^2$, switched on adiabatically in time, via the Peierls' substitution. Denoting the adiabatic parameter with $0<\eta< 1$, we define the vector potential $A(t)$, for all $t\leq 0$:
\begin{equation}
\label{eqn:defnA(t)}
A(t):=-\int_{-\infty}^t ds\,E e^{\eta s}=-\frac{e^{\eta t}}{\eta} E.
\end{equation}
Next, we define the position operator in the $j$-th direction with $1\leq j\leq 2$ as 
\[
(X_j\psi)(x):=x_j \psi(x)\qquad\text{for all $\psi\in\Do(X_j)$}
\]
where $\Do(X_j)$ is its maximal domain. Setting $X:=(X_1,X_2)$, we introduce the gauge transformation
\begin{equation}
\label{eqn:gauge}
G(t) := e^{i A(t)\cdot X},
\end{equation}
and we define the time-dependent perturbed Hamiltonian $H(t)$ for all $t\leq 0$ as
\begin{equation}
\label{eqn:peierls-substitution}
H(t) = G(t) H G(t)^*.
\end{equation}
Clearly, $H(-\infty)\equiv H$. 
The Hamiltonian $H(t)$ is translation-invariant. By the standard properties of the Bloch--Floquet transform, the fibered Hamiltonian is:
\begin{equation}
\label{eqn:peierls-bloch}
H(t,k)= H(k-A(t)).
\end{equation}

Let us now discuss the evolution of the system, driven by the time-dependent electric field. The dynamics of the state is defined as the solution of the evolution equation: 
\begin{equation}
\label{eqn:pert-schrodinger}
\begin{cases}
 i\frac{\di}{\di t } \rho(t)  = [H(t), \rho(t)], \quad t\leq 0 \\
 \rho(-\infty) = P_\mu,
\end{cases}
\end{equation}
where $P_\mu$ is introduced in \eqref{eqn:fermiproj}. We will be interested in the response of the current operator, whose $j$-th component is:
\[
J_j(t):=i [H(t),X_j];
\]
its Bloch-Floquet fibration reads:
\begin{equation}
\label{eqn:currentk}    
J_j(t,k) = \partial_{k_j}H(t,k).
\end{equation}
For every $1\leq j,l\leq 2$, we shall introduce the conductivity matrix $\sigma_{jl}$ as the linear response coefficient of the current operator $J_j(t)$, defined via Kubo formula \cite{kubo}.

Let us recall the notion of trace per unit area, $\mathcal{T}(\,\cdot\,)$. For any $L\in 2\N+1$ we define:
\[
\FC_L:=\set{x\in\X\,:\, x=\sum_{j=1}^2\alpha_j v_j \text{ with $\alpha_j\in \left[-\frac{L}{2},\frac{L}{2}\right]$}}.
\]
Let $\chi_{\FC_L}\equiv \chi_L$ be the characteristic function of the set $\FC_L$. We define, for any operator\footnote{Since we are dealing with a discrete model, the operator $\chi_L A \chi_L$ is automatically trace class.} $A$ acting in $\Hi$:
\begin{equation}
\label{eqn:tuv}    
\tuv(A):=\lim_{\substack{L\to\infty\\ L\in 2\N +1}}\frac{1}{\abs{\FC_L}}\Tr\( \chi_L A \chi_L\),
\end{equation}
whenever the above limit exists. 

We are interested in the first-order term in the electric field of the expectation value of the current $J_j(0)$ in the $j$-th direction in the state $\rho(0)$ at time $t=0$, \ie when the perturbation is fully switched on\footnote{We adopt the notation $\Or_\eta(\abs{E}^2)$ to denote a function which can be bounded in norm by a pre-factor $\abs{E}^2$ but not uniformly in the adiabatic parameter $\eta$.}:
\begin{align*}
\tuv\(J_j(0)\rho(0)\)=&\frac{1}{(2\pi)^2}\int_{\FC_1^*}dk\,\Tr\( 
 J_j(0,k)\rho(0,k)\)\\
 =&\frac{1}{(2\pi)^2}\int_{\FC_1^*}dk\,\Tr\(\,J_j(k)P_\mu(k)\,\)+\frac{1}{\eta}\sum_{l=1}^2E_l\frac{1}{(2\pi)^2}\int_{\FC_1^*}dk\,\Tr(\,\partial^2_{jl}H(k) P_\mu(k)\,)\\
& -\frac{i }{\eta}\sum_{l=1}^2 E_l\frac{1}{(2\pi)^2}\int_{\FC_1^*}dk\,\int_{-\infty}^0\, dt\, e^{\eta t}\Tr\(\,J_j(k) \left[e^{iH(k)t}J_l(k)  e^{-iH(k)t}, P_\mu(k)\right] \,\)\\
&+O_\eta(\abs{E}^2),
\end{align*}
where first we have represented the trace per unit area as an integral in momentum space (see \eg \cite[Lemma 3]{panatisparberteufel}) and then we have expanded both the observable $J_j(0,k)$ and $\rho(0,k)$ in the electric field $E=(E_1,E_2)$.

It is convenient to introduce the momentum-space Heisenberg evolution as:
\begin{equation}
\label{eqn:interaction-transform}
\R\ni t\mapsto\tau_t(A(k)) := e^{iH(k)t}A(k)e^{-iH(k)t}.
\end{equation}
The above expansion motivates the following definition of the conductivity matrix.

\begin{definition}
\label{def:response coefficients}
    For all $1\leq j,l\leq 2$ the conductivity matrix $\sigma_{jl}$ 
    is defined as
    \begin{equation}
    \label{eqn:transport-coefficients}
        \sigma_{jl} := \lim_{\eta \to 0^+}\frac{1}{\eta}(f_{jl}(\eta) + \mathrm{s}_{jl})
    \end{equation}
    where
    \begin{align}
        \label{eqn:fjl-eta}
        f_{jl}(\eta) &:= \frac{i}{(2\pi)^2}\int_{\FC_1^*}dk\,\int_{-\infty}^0\, dt\, e^{\eta t}\Tr\(\,J_j(k) \left[P_\mu(k),\tau_t(J_l(k))\right] \,\) \\
        \label{eqn:schwinger-term}
        \mathrm{s}_{jl} &:= \frac{1}{(2\pi)^2}\int_{\FC_1^*} dk\Tr\left( \partial^2_{jl}H(k)P_\mu(k) \right).
    \end{align}
\end{definition}

The term $\mathrm{s}_{jl}$ is often called the Schwinger term, and later we will show how to express it in terms of $f_{jl}$.

\subsubsection{Evaluation of the linear response}\label{sec:reslin}

The next theorem gives an explicit expression for the longitudinal conductivity, that only depends on the conical structure of the energy bands at the Fermi level.

\begin{theorem}
\label{thm:main}
Under Assumption \ref{assum:H}, the longitudinal conductivity $\sigma_{jj}$ is given by, for $j=1,2$:
\begin{equation}
\label{eqn:main}
	\sigma_{jj} = \frac{1}{16}\sum_{l=1}^n \frac{s_{l,1j}^2 + s_{l,2j}^2}{|\det S_l|}
\end{equation}
where the scalar $s_{l,ij}$ is the $ij$-th element of the matrix $S_l$, appearing in the energy dispersion relation \eqref{eqn:conical} of the two eigenvalues $\Lambda_{\pm}$, near the Fermi energy $\mu$ around the Fermi point $\omega_l$.
\end{theorem}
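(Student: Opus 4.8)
The plan is to evaluate $\sigma_{jj}$ directly from Definition~\ref{def:response coefficients}, adapting the strategy of \cite{GMP12,GMJP16,GMPhald} to the present non-interacting setting. First one makes the spectral content of $f_{jj}(\eta)$ explicit: inserting $H(k)=\sum_a\Lambda_a(k)P_a(k)$ into \eqref{eqn:fjl-eta}, the commutator with $P_\mu(k)$ keeps only transitions between occupied and unoccupied bands, and the time integral $\int_{-\infty}^0 e^{\eta t}e^{\ii\omega t}\,dt=(\eta+\ii\omega)^{-1}$ gives
\[
f_{jj}(\eta)=\frac{2}{(2\pi)^2}\int_{\FC_1^*}dk\sum_{\substack{\Lambda_a(k)<\mu\\ \Lambda_b(k)>\mu}}\bigl\|P_a(k)J_j(k)P_b(k)\bigr\|_{\mathrm{HS}}^2\,\frac{\Lambda_a(k)-\Lambda_b(k)}{\eta^2+(\Lambda_a(k)-\Lambda_b(k))^2}.
\]
Next, since $k\mapsto\Tr(\partial_{jj}^2H(k)P_\mu(k))$ is $\Gamma^*$-periodic, an integration by parts in \eqref{eqn:schwinger-term} turns $\mathrm{s}_{jj}$ into $-(2\pi)^{-2}\int_{\FC_1^*}\Tr(J_j(k)\,\partial_jP_\mu(k))\,dk$; the standard identity $P_a(\partial_jP_\mu)P_b=(\Lambda_a-\Lambda_b)^{-1}P_aJ_jP_b$ for $\Lambda_a<\mu<\Lambda_b$ yields $\Tr(J_j\partial_jP_\mu)=2\sum_{\Lambda_a<\mu<\Lambda_b}\|P_aJ_jP_b\|_{\mathrm{HS}}^2/(\Lambda_a-\Lambda_b)$, hence $\mathrm{s}_{jj}=-f_{jj}(0)$, the integral being absolutely convergent because near each Fermi point the crossing-pair term behaves like $|k-\omega_l|^{-1}$, integrable in two dimensions, while elsewhere $\Lambda_+-\Lambda_-$ is bounded below by continuity (Remark~\ref{rem:assum}\,\ref{it:eigenvalcont}). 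Therefore
\[
\sigma_{jj}=\lim_{\eta\to0^+}\frac{f_{jj}(\eta)-f_{jj}(0)}{\eta}=-\lim_{\eta\to0^+}\frac{2\eta}{(2\pi)^2}\int_{\FC_1^*}dk\sum_{\substack{\Lambda_a(k)<\mu\\ \Lambda_b(k)>\mu}}\frac{\|P_a(k)J_j(k)P_b(k)\|_{\mathrm{HS}}^2}{(\Lambda_a(k)-\Lambda_b(k))\bigl(\eta^2+(\Lambda_a(k)-\Lambda_b(k))^2\bigr)}.
\]

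The next step is to localize this limit at the Fermi points. Every pair $(a,b)$ whose energy difference stays bounded away from $0$ on the integration region — that is, all pairs except the crossing pair $(\Lambda_-,\Lambda_+)$ on a small ball $B_\delta(\omega_l)$ — contributes a bounded integrand, hence an $O(\eta)\to 0$ term; thus $\sigma_{jj}=\sum_{l=1}^n\sigma_{jj}^{(l)}$, with $\sigma_{jj}^{(l)}$ the limit of the crossing-pair contribution on $B_\delta(\omega_l)$. On $B_\delta(\omega_l)$ the Riesz projection $Q=P_-+P_+$ onto the two bands near $\mu$ is $C^1$ (the pair being isolated from the rest of the spectrum), although $P_\pm$ themselves are singular at $\omega_l$; set $V:=\ran Q(\omega_l)$, a $2$-dimensional space, and $G_i:=Q(\omega_l)J_i(\omega_l)Q(\omega_l)$ for $i=1,2$. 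Since $P_\pm Q=P_\pm$, one has $P_-(k)J_j(k)P_+(k)=P_-(k)\bigl(Q(k)J_j(k)Q(k)\bigr)P_+(k)$, with $Q(k)J_j(k)Q(k)\to G_j$ as $k\to\omega_l$; moreover $Q(\omega_l)H(k)Q(\omega_l)=\mu\,\Id_V+\sum_i(k-\omega_l)_iG_i+o(|k-\omega_l|)$ as operators on $V$, so along $k=\omega_l+t\hat u$ the eigenprojections $P_\pm(k)$ converge to the eigenprojections $P_\pm^{(\hat u)}$ of $\sum_i\hat u_iG_i$ (whose two eigenvalues are distinct because $S_l$ is invertible). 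Writing $G_i=\alpha_i\Id_V+b_i\cdot\sigma$ in a fixed orthonormal basis of $V$ ($\sigma=(\sigma_1,\sigma_2,\sigma_3)$ the Pauli matrices, $b_i\in\R^3$), comparison of the eigenvalues of $\mu\,\Id_V+\sum_i(k-\omega_l)_iG_i$ with \eqref{eqn:conical} forces $\alpha_i=(a_l)_i$ and $b_i\cdot b_{i'}=(S_l^{\mathsf T}S_l)_{ii'}$. Hence $\lim P_-J_jP_+=P_-^{(\hat u)}(b_j\cdot\sigma)P_+^{(\hat u)}$ (the $\Id_V$-part being annihilated), and a short Pauli-algebra computation gives $\|P_-^{(\hat u)}(b_j\cdot\sigma)P_+^{(\hat u)}\|_{\mathrm{HS}}^2=|b_j|^2-(b_j\cdot\hat n)^2$ with $\hat n\parallel\sum_i\hat u_ib_i$, which after the change of variables $q=S_l(k-\omega_l)$ becomes $|S_le_j|^2-(\hat q\cdot S_le_j)^2$, $\hat q=q/|q|$, $e_j$ the $j$-th coordinate vector — a function of the angle only.

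To conclude, in the variable $q$ one has $\Lambda_+-\Lambda_-=2|q|+o(|q|)$ (using \eqref{eqn:cone} to control that this is comparable to $|q|$) and $dk=|\det S_l|^{-1}dq$. Rescaling $|q|=\eta s$ and passing to polar coordinates resolves the concentration of mass at $q=0$; dominated convergence — with a uniform bound absorbing the $o(|q|)$ remainders — yields $\sigma_{jj}^{(l)}=\tfrac{\pi}{4(2\pi)^2|\det S_l|}\int_0^{2\pi}\bigl(|S_le_j|^2-(\hat q(\theta)\cdot S_le_j)^2\bigr)d\theta$. Since $\int_0^{2\pi}(\hat q(\theta)\cdot S_le_j)^2\,d\theta=\pi|S_le_j|^2$ and $|S_le_j|^2=s_{l,1j}^2+s_{l,2j}^2$, this gives $\sigma_{jj}^{(l)}=\dfrac{s_{l,1j}^2+s_{l,2j}^2}{16|\det S_l|}$, and summing over $l$ produces exactly \eqref{eqn:main}.

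The routine part is the spectral bookkeeping of the first paragraph. The heart of the argument, and the main obstacle, is the local analysis at the Fermi points: (a) extracting the leading behavior of the current matrix element $\|P_-J_jP_+\|_{\mathrm{HS}}^2$ from the eigenvalue data \eqref{eqn:conical} alone — the crucial point being that the conical constraint pins down precisely the Gram matrix $S_l^{\mathsf T}S_l$ of the vectors $b_i$, which is exactly what survives into the final formula, whereas the $b_i$ themselves (and the Bloch eigenfunctions) do not enter; and (b) justifying the interchange of the limit $\eta\to 0^+$ with the $k$-integral in spite of the remainders in \eqref{eqn:conical}, which requires the rescaling $|q|=\eta s$ to blow up the singularity before a uniform $o(|q|)$-type estimate makes dominated convergence applicable. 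One must also handle possible band degeneracies, and the fact that $P_\pm(k)$ has only a directional limit at $\omega_l$ — this directional dependence being precisely what the angular integral over $\theta$ averages out, so that no information beyond $S_l$ is needed.
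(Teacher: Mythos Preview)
Your argument is correct and arrives at \eqref{eqn:main}, but by a genuinely different route from the paper's.

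The paper proceeds as follows: after the identification $\mathrm{s}_{jj}=-f_{jj}(0^+)$ (which you also obtain), it performs a complex time deformation (Proposition~\ref{prop:complex-deformation}) to rewrite $f_{jj}$ as the restriction of an \emph{even} function $\fjj$; the regular part is then shown to be differentiable, hence its derivative at $0$ vanishes by evenness.  For the singular part, the paper rewrites $\Tr(P_-J_jP_+J_jP_-)$ entirely in terms of the eigenvalues $\Lambda_\pm$ and their derivatives via the auxiliary operator $\tilde H=(H-\mu)(P_-+P_+)$ (Proposition~\ref{prp:longitudinal-conductivity-eigenvalues}), and then reaches the linearized cone through two integrations by parts (Proposition~\ref{prp:linearization-conductivity}).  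You instead bypass both devices: the localization is done directly from your explicit spectral formula by the elementary bound $\eta/[g(\eta^2+g^2)]=O(\eta)$ for gapped pairs, and the matrix element near each Fermi point is computed by passing to the two-dimensional range $V=\ran Q(\omega_l)$ and writing the effective Hamiltonian in Pauli form $G_i=\alpha_i\Id_V+b_i\cdot\sigma$.  The key observation --- that the conical hypothesis \eqref{eqn:conical} pins down exactly the Gram matrix $b_i\cdot b_{i'}=(S_l^{\mathsf T}S_l)_{ii'}$, which is precisely what the Pauli computation of $\|P_-^{(\hat u)}(b_j\cdot\sigma)P_+^{(\hat u)}\|_{\mathrm{HS}}^2=|b_j|^2-(b_j\cdot\hat n)^2$ needs --- makes transparent \emph{why} the answer depends only on $S_l$ and not on the Bloch eigenfunctions.

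What each approach buys: the paper's eigenvalue-only rewriting is the template that carries over to interacting systems \cite{GMP12,GMPhald}, where one has correlation functions but not eigenprojections, and where the complex deformation and integration-by-parts identities play the role of Ward identities.  Your Pauli-model reduction is more direct and more geometric in the present non-interacting setting, but is tied to having explicit access to the two-band subspace.  One point you should make fully precise is the directional limit $P_\pm(k)\to P_\pm^{(\hat u)}$: this requires pulling $P_\pm(k)$ back to $V$ via the $C^1$ Kato--Nagy unitary for $Q(k)$ and applying simple-eigenvalue perturbation theory to $\bigl(Q(\omega_l)H(k)Q(\omega_l)-\mu\bigr)/|k-\omega_l|$ on $V$; this also gives the uniformity in $\hat u$ (compactness of $S^1$) that your final dominated-convergence step needs. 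The paper avoids this issue altogether by never writing $P_\pm$ explicitly and working instead with the globally continuous eigenvalues.
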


Thus, the value of the longitudinal conductivity $\sigma_{jj}$ only depends on the shape of the conical intersection of the Bloch bands at the Fermi level. As our result shows, non-universality arises when the conical intersections are anisotropic. 
Moreover, observe that if for every Fermi point $\omega_l$ with $1\leq l\leq n$ the (real and invertible) matrix $S_l$ is orthogonal, up to the renormalization constant ${\abs{\det S}}^{-1/2}$, then each of the conical intersection contributes of $\frac{1}{16}$ to the quantized value of $\sigma_{jj}=\frac{n}{16}$. 
This is proven by the following elementary result, whose proof is postponed to Appendix \ref{sec:aux}.

\begin{lemma}
\label{lem:sufficient-longitudinal-quantization}
    Let $S$ be a $2\times 2$ real invertible matrix. Then we have that
    \begin{equation}
    \label{eqn:condS}
        \frac{s_{1j}^2 + s_{2j}^2}{|\det S|} = 1 \qquad \text{for every $j =1,2$}
    \end{equation}
    if and only if $s_{11}^2 + s_{21}^2 = s_{12}^2 + s_{22}^2$ and $(s_{11},s_{21})\cdot(s_{12}, s_{22})=0$, namely $S{\abs{\det S}}^{-1/2}$ is an orthogonal matrix. 
\end{lemma}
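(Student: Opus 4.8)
The plan is to reduce the whole statement to the Lagrange (Brahmagupta--Fibonacci) identity
\[
(a^2+b^2)(c^2+d^2) = (ac+bd)^2 + (ad-bc)^2,
\]
applied to the two columns of $S$. Write $c_1 := (s_{11},s_{21})$ and $c_2 := (s_{12},s_{22})$ for the columns of $S$, so that $s_{1j}^2+s_{2j}^2 = |c_j|^2$, $(s_{11},s_{21})\cdot(s_{12},s_{22}) = c_1\cdot c_2$, and $\det S = s_{11}s_{22}-s_{12}s_{21}$. With this notation the Lagrange identity reads $|c_1|^2\,|c_2|^2 = (c_1\cdot c_2)^2 + (\det S)^2$, and condition \eqref{eqn:condS} is equivalent to $|c_1|^2 = |c_2|^2 = |\det S|$. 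Note also that invertibility of $S$ forces $c_1,c_2\neq 0$, hence $|\det S|>0$; this is the only place where the invertibility hypothesis is used, and it is what makes the normalization $S|\det S|^{-1/2}$ meaningful.

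For the direction \eqref{eqn:condS} $\Rightarrow$ the two stated conditions, assume $|c_1|^2 = |c_2|^2 = |\det S| =: r^2 > 0$. Then $s_{11}^2+s_{21}^2 = s_{12}^2+s_{22}^2$ is immediate, and substituting $|c_1|^2|c_2|^2 = r^4$ and $(\det S)^2 = r^4$ into the Lagrange identity gives $(c_1\cdot c_2)^2 = r^4 - r^4 = 0$, i.e.\ $(s_{11},s_{21})\cdot(s_{12},s_{22}) = 0$. For the converse, assume $|c_1|^2 = |c_2|^2 =: r^2$ (which is $>0$ by invertibility) and $c_1\cdot c_2 = 0$; the Lagrange identity then yields $(\det S)^2 = r^2\cdot r^2 - 0 = r^4$, so $|\det S| = r^2 = |c_1|^2 = |c_2|^2$, which is exactly \eqref{eqn:condS}.

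It remains to identify the two conditions with orthogonality of $S|\det S|^{-1/2}$. Under them, the columns $c_1/r$ and $c_2/r$ of $S/r$ form an orthonormal basis of $\R^2$, so $S/r$ is orthogonal; and since $r^2 = |\det S|$ we have $r = |\det S|^{1/2}$, i.e.\ $S/r = S|\det S|^{-1/2}$. Conversely, an orthogonal matrix has unit-norm, mutually orthogonal columns, which after rescaling by $r$ give back $|c_1| = |c_2|$ and $c_1\cdot c_2 = 0$. There is no genuine obstacle in this argument; the only points requiring a little care are carrying the absolute value around $\det S$ correctly throughout, and invoking invertibility of $S$ to ensure $r>0$.
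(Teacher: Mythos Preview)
Your proof is correct and essentially parallels the paper's argument. The paper uses the geometric determinant formula $\det S = |c_1|\,|c_2|\sin\theta$ (with $\theta$ the angle between the columns), whereas you use its algebraic counterpart, the Lagrange identity $|c_1|^2|c_2|^2=(c_1\cdot c_2)^2+(\det S)^2$; these are of course the same relation written in different coordinates, so the two proofs are logically equivalent. If anything, your version is slightly cleaner in that it avoids introducing the auxiliary angle $\theta$ and makes the use of invertibility (to ensure $r>0$) explicit.
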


\begin{remark}\label{rem:hald}
\begin{enumerate}[label=(\roman*), ref=(\roman*)]

\item \label{rem:3} 
Theorem \ref{thm:main} covers the case of the Haldane model \cite{haldane} and its strained version \cite{mh} in the critical regime, where transitions between topological insulating phases occur (see \eg \cite{mamomopa} for a review without strain). These models have at most two Fermi points $\omega_1, \omega_2$.
If strain of intensity $\epsilon$ is applied in the armchair $y$ direction\footnote{I.e., one edge of the honeycomb structure is parallel to the $y$ axis and strain is applied along it.}, the matrix $S_l$ becomes diagonal, and the critical longitudinal conductivity is:
\begin{equation}
\sigma^{\mathrm{strain}}_{11} = \frac{1}{16}\frac{3+2\epsilon}{3-4\epsilon}, \qquad \sigma^{\mathrm{strain}}_{22} = \frac{1}{16}\frac{3-4\epsilon}{3+2\epsilon},
\end{equation}
where we have used \cite[Eqs. 14-17]{mh} for the form of the energy bands.
This shows that the longitudinal conductivity at the Hall transition is not a topological invariant, since away from criticality the topological phases of the Haldane model are independent of $\epsilon$; see \eg \cite[Figure 2]{mh}. See also \cite{strainrev} for a review about the effect of strain in the spectral and transport properties of graphene.

\item It would be interesting to prove that a similar breaking of universality is induced by many-body interactions, by using the renormalization group methods of \cite{GM, GMP12}. For instance, one could consider a non-interacting model with isotropic Dirac cones, perturbed by a many-body interaction that induces an anisotropic renormalization of the Fermi velocities. We plan to come back to this extension in a future work.
\end{enumerate}
\end{remark}

The proof of Theorem \ref{thm:main} is inspired by the analogous universality results obtained in \cite{GMP12, GMJP16, GMPhald}, for specific interacting models. A contribution of the present work is to show that, in a non-interacting setting, the longitudinal conductivity is insensitive to the form of the Bloch Hamiltonian, and it only depends on the conical intersections of the Bloch bands at the Fermi level. Furthermore, the result shows under which conditions the longitudinal conductivity deviates from quantization. 

The proof will be given in Section \ref{sec:proof}; let us briefly summarize the steps involved. We start by rewriting $\sigma_{jl}$ as the right derivative at $0$ of $(0,\infty)\ni\eta\to f_{jl}\in\R$, defined in \eqref{eqn:fjl-eta}. Then, we show that $f_{jj}$ is the restriction for positive $\eta$ of a suitable even function $\fjj\colon\R\to\R$, defined in \eqref{eqn:f_jj-tilde-definition}. The expression of this function is obtained from the original Duhamel integral, after a complex deformation of the integration domain, as in \eg \cite{GMPhall, GLMP24}. The $\eta$-dependence of $\tilde f_{jj}$ can be studied in a more efficient way. In particular, we prove that all the contributions to $\tilde f_{jj}(\eta)$ due to energies away from the Fermi level $\mu$ are smooth in $\eta$. Since $\tilde f_{jj}(\eta)$ is even in $\eta$, these contributions vanish as $\eta\to 0$. Finally, we are left with computing the contribution to $\sigma_{jj}$ associated with neighbourhoods of the conical intersections. As shown in Section \ref{sec:explicit-conductivity}, this contribution can be evaluated explicitly using the asymptotic form of the Bloch eigenvalues at the Fermi energy, recall \eqref{eqn:conical}.



\subsection{Kubo formula from quantum dynamics}


\subsubsection{Grand canonical formulation} 

\paragraph{Hamiltonian and Gibbs state.} Differently from Section \ref{sec:linrisp}, here we shall consider quantum systems on a finite lattice with periodic boundary conditions, and we will be interested in proving results that are uniform in the system's size. Also, we will consider the system at positive temperature, and we will take the zero temperature limit in a second moment. The setting of the previous section is recovered in the infinite volume limit, see \eg \cite[Appendix A]{FrMa}.

We shall represent the configuration space of the system $\mathcal{X}_{L}$ as $\mathcal{X}_{L} \cong \Lambda_{L} \times I_{N}$, where $\Lambda_{L} = \mathbb{Z}^{2} / L \mathbb{Z}^{2}$ is the lattice of the centers of the fundamental cells, and $I_{N} = \{1,\ldots, N\}$ collects the labels for the internal degrees of freedom (sublattice, spin, etc). We shall use the notation $\|x-y\|_{L}$ to denote the distance on the torus:
\begin{equation}
\|x-y\|_{L} := \min_{n_{1},n_{2} \in \mathbb{Z}} \| x-y + n_{1} L e_{1} + n_{2} L e_{2} \|\qquad \text{for any $x,y\in \Lambda_{L}$,}
\end{equation}
with $e_{1}, e_{2}$ the standard orthonormal basis of $\mathbb{R}^{2}$.

Let $H$ be a single-particle Hamiltonian on $\ell^{2}(\mathcal{X}_{L})$. We shall denote the integral kernel of this operator as $H_{\rho\rho'}(x,y)$ with $x,y\in \Lambda_{L}$ and $\rho,\rho'\in I_{N}$. As in Section \ref{sec:linrisp}, we assume translation invariance, $H_{\rho\rho'}(x,y) \equiv H_{\rho\rho'}(x-y)$. In addition to Section \ref{sec:linrisp}, we shall also assume that $H$ is finite-ranged, that is $H(x,y) = 0$ if $\|x-y\|_{L} > R$ for some $R$ independent of $L$. Observe that, since we are working on a finite lattice, the role of $\FC_{1}^{*}$ in Section \ref{sec:setting} is played by:
\begin{equation}\label{eq:CL}
\FC_{1,L}^{*} := \FC_{1}^* \cap \frac{2\pi}{L} \mathbb{Z}^{2}.
\end{equation}
We shall work in the second quantization formalism. The second quantization of $H$ is:
\begin{equation}
\mathcal{H} = \sum_{x,y \in \Lambda_{L}} \sum_{\rho,\rho'\in I_{N}} a^{*}_{x,\rho} H_{\rho\rho'}(x,y) a_{y,\rho'}\;,
\end{equation}
with $a_{x,\rho}, a^{*}_{x,\rho}$ the usual fermionic creation and annihilation operators, acting on the fermionic Fock space $\mathcal{F} = \mathbb{C}\oplus \bigoplus_{n\geq 1} \ell^{2}(\mathcal{X}_{L})^{\wedge n}$. We shall denote by $\tau_{t}(\cdot)$ the Heinsenberg dynamics generated by\footnote{Notice that this in a slight abuse of notation with respect to Section \ref{sec:linrisp}, where the same notation has been used to denote the Heisenberg evolution in first quantization.} $\mathcal{H}$:
\begin{equation}
\tau_{t}(\mathcal{A}) = e^{i\mathcal{H}t} \mathcal{A} e^{-i\mathcal{H}t}\;.
\end{equation}
In this second-quantized setting, the Peierls substitution for a general space-time dependent vector potential reads:
\begin{equation}\label{eq:peierls}
a^{*}_{x,\rho}a_{y,\rho'} \to a^{*}_{x,\rho}a_{y,\rho'} e^{i\int_{\ell_{\rho\rho'}(x,y)} d\ell \cdot A(t,\ell)}\;,
\end{equation}
where $\ell_{\rho\rho'}(x,y)$ is a straight oriented line connecting the point $x + r_{\rho}$ to the point $y + r_{\rho'}$, with $r_{\rho}, r_{\rho'}$ the relative coordinates of the particles within the fundamental cells labelled by $x$ and $y$. The time-dependent Hamiltonian is:
\begin{equation}
\mathcal{H}(t) = \sum_{x,y \in \Lambda_{L}} \sum_{\rho,\rho'\in I_{N}} a^{*}_{x,\rho} H^{A}_{\rho\rho'}(t; x,y) a_{y,\rho'}\;,
\end{equation}
with: 
\begin{equation}
H^{A}_{\rho\rho'}(t; x,y) := H_{\rho\rho'}(x,y) e^{i\int_{\ell_{\rho\rho'}(x,y)} d\ell \cdot A(t,\ell)}\;.
\end{equation}
We shall assume that the time-dependent vector potential $A$ has the form:
\begin{equation}
A(t,x) = A_{\theta}(x) e^{\eta t}\;,
\end{equation}
with $A_{\theta}(x) \equiv \theta A(\theta x)$ and where $\theta\in \mathbb{R}, \eta \in \mathbb{R}_{+}$ will eventually be sent to zero. Observe that the function $A(t,x)$ is defined for all $x\in \mathbb{R}^{2}$. The electric field generated by this vector potential is:
\begin{equation}\label{eq:Efield}
E(t,x) = - \theta \eta A(\theta x) e^{\eta t}\;.
\end{equation}
We shall suppose that $\|A(\cdot)\|_{L^{\infty}} \leq 1$, $\| A(0) \| = 1$, so that $\| E(0,0) \| = |\theta| \eta$. The parameter $\theta$ defines the amplitude and the space variation of the perturbation, while the parameter $\eta$ defines the time variation. In order to guarantee that the perturbed model is compatible with the periodic boundary conditions, we will suppose that $A(\theta x) = (A_{1}(\theta x), A_{2}(\theta x))$ is the periodization of a Schwartz function $A_{\infty}(\theta x)$ in $\mathbb{R}^{2}$:
\begin{equation}
A(\theta x) = \sum_{n_{1}, n_{2} \in \mathbb{Z}} A_{\infty}(\theta (x + n_{1} e_{1} L + n_{2} e_{2} L))\;.
\end{equation}
In Fourier space,
\begin{equation}\label{eq:hatA}
A_{\theta,\alpha}(x) = \frac{1}{L^{2}} \sum_{p\in \frac{2\pi}{L} \mathbb{Z}^{2}} e^{-ip\cdot x} \hat A_{\theta,\alpha}(p)\qquad \text{for all $x\in \mathbb{R}^{2} / L\mathbb{Z}^{2}$,}
\end{equation}
with $\hat A_{\theta,\alpha}(p) := (1/\theta)\hat A_{\infty,\alpha}(p/\theta)$. In particular, the following estimates hold:
\begin{equation}\label{eq:Aests}
\| \hat A_{\theta,\alpha}\|_{\ell^{\infty}}\leq \frac{C}{\theta}\;,\qquad \|\hat A_{\theta,\alpha}\|_{\ell^{1}} \leq C\theta\;.
\end{equation}
\paragraph{Dynamics in Fock space.} Let us denote by $\Gamma_{\beta,L}$ the grandcanonical Gibbs state of the system for $A=0$:
\begin{equation}
\Gamma_{\beta,L} = \frac{e^{-\beta (\mathcal{H} - \mu \mathcal{N})}}{\Tr_{\mathcal{F}} e^{-\beta (\mathcal{H} - \mu \mathcal{N})}}
\end{equation}
with $\mathcal{N}$ the number operator, $\mu$ the chemical potential and $\Tr_{\mathcal{F}}(\cdot)$ the trace over the fermionic Fock space. The evolution of the state is given by:
\begin{equation}\label{eq:dyn}
\begin{split}
i\partial_{t} \Gamma(t) &= [ \mathcal{H}(t), \Gamma(t) ]\;,\qquad t \leq 0\; \\
\Gamma(-\infty) &= \Gamma_{\beta,L}\;.
\end{split}
\end{equation}
For a given local observable $\mathcal{O}$, we will be interested in quantifying:
\begin{equation}
\Tr_{\mathcal{F}} \mathcal{O} \Gamma(0) - \Tr_{\mathcal{F}} \mathcal{O} \Gamma_{\beta,L}
\end{equation}
as $\beta, L\to \infty$ (the relative order will not be important), followed by the limits $\eta,\theta \to 0$. Our approach is based on a perturbative expansion. In order to set it up, let us rewrite the time-dependent Hamiltonian as:
\begin{equation}
\begin{split}
\mathcal{H}(t) &= \mathcal{H} + \sum_{n\geq 1} \frac{e^{n \eta t}}{n!} \mathcal{P}^{A}_{n} \\
&\equiv \mathcal{H} + \mathcal{P}(\eta t)\;,
\end{split}
\end{equation}
where the operator $\mathcal{P}^{A}_{n}$ is time-independent, and it is of order $n$ in $A$. This expression arises from the Taylor expansion of the exponential entering the definition of Peierls' substitution, Eq. (\ref{eq:peierls}). It is given by a sum of finite-ranged operators (the coupling with $A$ does not change the spatial range of the Hamiltonian). In momentum space, the operator $\mathcal{P}^{A}_{n}$ has the form:
\begin{equation}\label{eq:PnA}
\begin{split}
\mathcal{P}^{A}_{n} &= \frac{1}{L^{2}} \sum_{k\in \mathcal{C}_{1,L}^{*}} \frac{1}{L^{2n}} \sum_{p_{1}, \ldots, p_{n} \in \frac{2\pi}{L} \mathbb{Z}^{2}} \sum_{\alpha_{1}, \ldots, \alpha_{n} \in \{1,2\}} \sum_{\rho,\rho'\in I_{N}} \hat A_{\theta;\alpha_{1}}(p_{1}) \ldots \hat A_{\theta;\alpha_{n}}(p_{n}) \\&\qquad \cdot \hat a^{*}_{k + p_{1} + \ldots + p_{n},\rho} \hat a_{k,\rho'} \hat Q_{n;\underline{\alpha}, \rho, \rho'}(k, \underline{p})\;,
\end{split}
\end{equation} 
where $\hat a_{k}$ and $\hat a_{k}^{*}$ are the fermionic creation and annihilation operators in momentum space, 
\begin{equation}
\hat a_{k,\rho} := \sum_{x\in \Lambda_{L}} e^{ik\cdot x} a_{x,\rho}\;,\qquad \hat a^{*}_{k,\rho} := \sum_{x\in \Lambda_{L}} e^{-ik\cdot x} a^{*}_{x,\rho}\;,\qquad k \in \FC_{1,L}^{*},
\end{equation}
and $Q_{n;\underline{\alpha}, \rho, \rho'}(k, \underline{p})$ are suitable kernels, with the notation $\underline{p} = (p_{1}, \ldots, p_{n})$, $\underline{\alpha} = (\alpha_{1},\ldots, \alpha_{n})$. The explicit form of the kernels will not be needed; we will only use that, uniformly in all momenta:
\begin{equation}
|\hat Q_{n;\underline{\alpha}, \rho, \rho'}(k,\underline{p})|\leq C^{n}\;.
\end{equation}
The expansion (\ref{eq:PnA}) is obtained writing the argument of the exponent of the Peierls phase as, setting $x(\rho) = x + r_{\rho}$, $y(\rho') = y + r_{\rho'}$:
\begin{equation}\label{eq:intlin}
\begin{split}
\int_{\ell_{\rho\rho'}(x,y)} d\ell\cdot A_{\theta}(\ell) &= \int_{0}^{1} ds\, \frac{(x(\rho)-y(\rho'))}{\|x(\rho)-y(\rho')\|}\cdot \theta A(\theta (s x(\rho) + (1-s) y(\rho'))) \\
&= \sum_{\alpha=1,2} \frac{1}{L^{2}} \sum_{p\in \frac{2\pi}{L}\mathbb{Z^{2}} } \int_{0}^{1} ds\, \hat A_{\theta;\alpha}(p) \frac{(x(\rho)-y(\rho'))_{\alpha}}{\|x(\rho)-y(\rho')\|} e^{i p \ (s x(\rho) + (1-s)y(\rho'))} \\
&= \sum_{\alpha=1,2} \frac{1}{L^{2}} \sum_{p\in \frac{2\pi}{L}\mathbb{Z^{2}} } \hat A_{\theta;\alpha}(p) e^{ip\cdot y(\rho')} \frac{(x(\rho)-y(\rho'))_{\alpha}}{\|x(\rho)-y(\rho')\|}  \frac{e^{ip\cdot (x(\rho)-y(\rho'))} - 1}{ip\cdot (x(\rho)-y(\rho'))} \\
&\equiv \sum_{\alpha = 1,2} \frac{1}{L^{2}} \sum_{p\in \frac{2\pi}{L}\mathbb{Z^{2}} } \hat A_{\theta;\alpha}(p) e^{ip\cdot y} \eta^{\rho\rho'}_{\alpha,p}(x-y)\;.
\end{split}
\end{equation}
Observe that the left-hand side is trivially zero if $\|x(\rho) - y(\rho')\| = 0$. In this case, the function $\eta^{\rho\rho'}_{\alpha,p}(x-y)$ is defined as zero. Furthermore, $|\eta^{\rho\rho'}_{\alpha,p}(x-y)|\leq 1$. If $p\cdot (x(\rho) - y(\rho')) = 0$, the function $\eta^{\rho\rho'}_{\alpha,p}(x-y)$ is understood as being $e^{ip\cdot r_{\rho'}}(x(\rho) - y(\rho'))_{\alpha} / \| x(\rho) - y(\rho') \|$. Expanding the Peierls exponential in Taylor series, one obtains (\ref{eq:PnA}) after using the rewriting (\ref{eq:intlin}), and performing the lattice sums over the $x,y$ coordinates of the fermionic creation and annihilation operators.

\subsubsection{Validity of linear response}\label{sec:resder}

In the following, we shall restrict the attention to local observables $\mathcal{O} \equiv \mathcal{O}_{z}$ of the form:
\begin{equation}
\mathcal{O}_{z} = \sum_{\substack{x,y\in \Lambda_{L} \\ \{x,y\} \ni z}} \sum_{\rho, \rho'\in I_{N}} a^{*}_{x,\rho} O_{\rho,\rho'}(x,y) a_{y,\rho'}\;,
\end{equation}
with $O_{\rho,\rho'}(x,y) \equiv O_{\rho,\rho'}(x-y)$ and $O_{\rho,\rho'}(x,y)$ finite-ranged. This expression includes as special cases the density operator, or the current density operator associated with a finite-ranged Hamiltonian. More general local observables can be obtained summing over $z$ in appropriate subsets $X\subset \Lambda_{L}$. In what follows, we will denote by $\eta_{\beta}$ the best approximation of $\eta$ in $\frac{2\pi}{\beta} \mathbb{N}$, such that $\eta_{\beta} \geq \eta$. We will prove the following result.
\begin{theorem}\label{thm:response} For $\beta$ large enough, and for $|\theta| / \eta$ small enough uniformly in $\eta, \beta, L$:
\begin{equation}\label{eq:response}
\Tr_{\mathcal{F}} \mathcal{O}_{z} \Gamma(0) - \Tr_{\mathcal{F}} \mathcal{O}_{z} \Gamma_{\beta,L} = -i \int_{-\infty}^{0} dt\, e^{\eta_{\beta} t} \Tr_{\mathcal{F}} \Big[ \mathcal{O}_{z}, \tau_{t}\Big(\mathcal{P}^{A}_{1}\Big) \Big] \Gamma_{\beta,L} + R_{\mathcal{O}} 
\end{equation}
where 
\begin{equation}\label{eq:Rest}
\big| R_{\mathcal{O}} \big| \leq C_{1}\theta^{2} |\log \eta| + \frac{C_{2}}{\eta^{4} \beta}
\end{equation}
with constants $C_{1}, C_{2}$ independent of $\beta, L, \eta, \theta$.
\end{theorem}
\begin{remark}\label{rem:resp}
\begin{enumerate}
\item The first term in the right-hand side of (\ref{eq:response}) is the linear response of the observable $\mathcal{O}_{z}$. As the proof will show, the first order is bounded by $(\text{const.}) |\theta|$ uniformly in $L,\beta,\eta$.
Instead, the error term  $R_{\mathcal{O}}$ collects the error introduced by replacing the adiabatic parameter $\eta$ by $\eta_{\beta}$, and the terms of order higher than $1$ in $A$ appearing in the Duhamel expansion with  adiabatic parameter $\eta_{\beta}$.
%
%

\item Eq. (\ref{eq:Rest}) shows that the higher order terms vanish after dividing them by the intensity of the electric field $\theta \eta$, provided the limit $\beta \to \infty$ is taken before $\eta,\theta \to 0$, and provided:
\begin{equation}\label{eq:etatheta}
\frac{\theta}{\eta} |\log \eta| \to 0\;.
\end{equation}
Thus, the setting is relevant for describing the response of the system to approximately space-homogeneous electric fields (on length scale $\ll 1/\theta$). 

\item As mentioned in the introduction, the proof of Theorem \ref{thm:response} is based on the representation of the Duhamel expansion for the non-autonomous dynamics (\ref{eq:dyn}) in imaginary time, proven in \cite{GLMP24}. In \cite{GLMP24}, this representation has been used to study the response of gapped systems in presence of weak many-body interactions, using cluster expansion methods. Recently, this strategy has been used in \cite{PS} to study the response of non-interacting gapless systems as well, and also weakly interacting gapless systems \cite{PSS}. The advantage of the representation of perturbation theory introduced in \cite{GLMP24} is that imaginary-time correlations can be estimated in an efficient way. In particular, in certain scaling regimes, like the ones considered in \cite{PS, PSS} and in the current paper, the strategy allows to prove the convergence of the Duhamel series. We stress that the same result cannot be obtained via a direct application of standard adiabatic methods, since they crucially rely on the presence of a spectral gap. The proof of Theorem \ref{thm:response}  is based on estimates for imaginary-time correlation functions, which in the absence of interactions can be expressed in terms of products of the two-point function, using the fermionic Wick's rule. With respect to \cite{PS}, the proof of validity of linear response in the present setting turns out to be less involved, thanks to the higher co-dimension of the Fermi surface (which is $2$ in the present case, and $1$ in \cite{PS}), and to the chosen scaling of $\eta$ and $\theta$. 
\end{enumerate}
\end{remark}

Theorem \ref{thm:response} only describes the variation of the expectation value of the observable due to the change of the state. The full response of the system should also include the contribution due to the variation of the observable, after coupling it to the electromagnetic field via the Peierls' substitution (\ref{eq:peierls}). Let $O$ be the current operator in direction $j$. Then, $\mathcal{O}_{z} \equiv \mathcal{J}_{z,j}$ is the second quantization of the current density operator at $z$. Let $\mathcal{J}_{z,j}^{A}$ be the current coupled to the vector potential at time zero, as given by (\ref{eq:peierls}). The full variation of the average current is:
\begin{equation}
\Tr_{\mathcal{F}} \mathcal{J}^{A}_{z,j} \Gamma(0) - \Tr_{\mathcal{F}} \mathcal{J}_{z,j} \Gamma_{\beta,L}\;.
\end{equation}
We rewrite this difference as:
\begin{equation}\label{eq:jresp}
\begin{split}
&\Tr_{\mathcal{F}} \mathcal{J}^{A}_{z,j} \Gamma(0) - \Tr_{\mathcal{F}} \mathcal{J}_{z,j} \Gamma_{\beta,L} \\
&\quad= \Big(\Tr_{\mathcal{F}} \mathcal{J}_{z,j} \Gamma(0) - \Tr_{\mathcal{F}} \mathcal{J}_{z,j} \Gamma_{\beta,L}\Big) + \Big(\Tr_{\mathcal{F}} \mathcal{J}^{A}_{z,j} \Gamma(0) - \Tr_{\mathcal{F}} \mathcal{J}_{z,j} \Gamma(0)\Big)\;.
\end{split}
\end{equation}
The first term is studied using Theorem \ref{thm:response}. For the second term, we write:
\begin{equation}
\mathcal{J}^{A}_{z,j} - \mathcal{J}_{z,j} = \mathcal{J}^{A,(1)}_{z,j} + \mathcal{J}^{A,(\geq 2)}_{z,j}
\end{equation}
where $\mathcal{J}^{A,(1)}_{z,j}$ is the first order in the Taylor expansion in $A$ while $ \mathcal{J}^{A,(\geq 2)}_{z,j}$ are the higher order terms. It is easy to see that:
\begin{equation}\label{eq:tay}
\| \mathcal{J}^{A,(\geq 2)}_{z,j} \| \leq C \theta^{2}\;,
\end{equation}
for a constant $C$ that does not depend on $\beta,L,\eta,\theta$, and that depends on the range of the Hamiltonian. Therefore, we write the second term in (\ref{eq:jresp}) as, for $\beta$ large enough:
\begin{equation}
\begin{split}
\Big(\Tr_{\mathcal{F}} \mathcal{J}^{A}_{z,j} \Gamma(0) - \Tr_{\mathcal{F}} \mathcal{J}_{z,j} \Gamma(0)\Big)  &= \Tr_{\mathcal{F}} \mathcal{J}^{A,(1)}_{z,j} \Gamma(0) + R_{\mathcal{J},1} \\ 
& = \Tr_{\mathcal{F}} \mathcal{J}^{A,(1)}_{z,j} \Gamma_{\beta,L} + R_{\mathcal{J},1} + R_{\mathcal{J},2}\;,
\end{split}
\end{equation}
where the first step follows from the Taylor expansion of the Peierls' phase, while the second term follows from the application of Theorem \ref{thm:response}. Both error terms are bounded as  $| R_{\mathcal{J},i} |\leq C\theta^{2}$, by (\ref{eq:tay}) and by the fact that the linear response of the observable $\mathcal{J}^{A,(1)}_{z,j}$ is of order $\theta^{2}$.

Thus, using Theorem \ref{thm:response} for the first term in (\ref{eq:jresp}):
\begin{equation}\label{eq:Jresp}
\begin{split}
&\Tr_{\mathcal{F}} \mathcal{J}^{A}_{z,j} \Gamma(0) - \Tr_{\mathcal{F}} \mathcal{J}_{z,j} \Gamma_{\beta,L} \\
&\quad = -i \int_{-\infty}^{0} dt\, e^{\eta_{\beta} t} \Tr_{\mathcal{F}} \Big[ \mathcal{J}_{z,j}, \tau_{t}\Big(\mathcal{P}^{A}_{1}\Big) \Big] \Gamma_{\beta,L}  + \Tr_{\mathcal{F}} \mathcal{J}^{A,(1)}_{z,j} \Gamma_{\beta,L} + R_{\mathcal{J},1} + R_{\mathcal{J},2} + R_{\mathcal{J},3}
\end{split}
\end{equation}
where the last error term is bounded as in (\ref{eq:Rest}). The sum of the first two terms in (\ref{eq:Jresp}) turns out to be of order $\theta \eta$, unlike the two terms individually. Let us choose the vector potential $A$ to be directed in the $l$ direction. Let:
\begin{equation}\label{eq:Kjldef}
K_{jl}(\eta,\theta) := 
\lim_{\beta, L\to \infty} \frac{1}{\theta \eta} \Big(-i \int_{-\infty}^{0} dt\, e^{\eta_{\beta} t} \Tr_{\mathcal{F}} \Big[ \mathcal{J}_{0,j}, \tau_{t}\Big(\mathcal{P}^{A}_{1}\Big) \Big] \Gamma_{\beta,L}  + \Tr_{\mathcal{F}} \mathcal{J}^{A,(1)}_{0,j} \Gamma_{\beta,L} \Big)\;.
\end{equation}
For $\eta$ fixed, the order of the $\beta,L\to \infty$ limits is not important. Observe that we are evaluating the current at the point $z=0$, where $\| E(0,0) \| = |\theta| \eta$; thus, the prefactor in the right-hand side of (\ref{eq:Kjldef}) is the natural normalization needed to define the conductivity matrix. The function $K_{jl}(\eta,\theta)$ has been studied in many previous works devoted to the rigorous analysis of linear response for interacting gapless systems, see {\it e.g.} \cite{GMP12, GMPhall, GMPhald}. In particular, one can prove that:
\begin{equation}\label{eq:Klip}
\big|K_{jl}(\eta,\theta) - K_{jl}(\eta,\theta')\big|\leq \frac{C}{\eta}|\theta - \theta'|\;. 
\end{equation}
For non-interacting systems, this bound can also be proved proceeding as in Section \ref{sec:proofresp} of this work. Also, it is known that \cite{GMP12, GMPhall, GMPhald}:
\begin{equation}\label{eq:Klim}
\lim_{\eta \to 0^{+}} \lim_{\theta \to 0}K_{jl}(\eta,\theta) = \sigma_{jl}
\end{equation}
with $\sigma_{jl}$ as in Definition \ref{def:response coefficients}. By translation-invariance of the equilibrium Gibbs state, one can actually replace the evaluation at $z=0$ with the trace per unit volume (taken after the $\beta,L\to \infty$ and $\theta \to 0$ limits). More generally, Eqs. (\ref{eq:Klip}), (\ref{eq:Klim}) imply that the same limit in (\ref{eq:Klim}) is attained choosing $\theta$ such that $|\theta| / \eta \to 0$ as $\eta \to 0^{+}$, which is compatible with the parameter range for which we prove the validity of linear response, recall Eq. (\ref{eq:etatheta}). Thus, Theorem \ref{thm:response} allows to prove the validity of linear response, with longitudinal conductivity as in Theorem \ref{thm:main}.

\section{Proof of Theorem \ref{thm:main}}
\label{sec:proof}
\subsection{$\sigma_{jl}$ as right derivative of $f_{jl}$}
\label{ssec:1}
We start by rewriting the linear response coefficient $\sigma_{jl}$ as the right derivative in $0$ of the function $f_{jl}$, defined in \eqref{eqn:fjl-eta}.

\begin{proposition}
\label{prop:right-derivative}
    Under Assumption \ref{assum:H} for any $ j,l=1, 2$ we have:
    \begin{equation}
    \label{eqn:rewsch}
        \mathrm{s}_{jl}=-f_{jl}(0^+):=-\lim_{\eta \to 0^+}f_{jl}(\eta).
    \end{equation}
    As a consequence, the conductivity $\sigma_{jl}$ can be written as
    \begin{equation}
    \label{eqn:linear-response-right-derivative}
        \sigma_{jl} = \lim_{\eta \to 0^+}\frac{1}{\eta}\( f_{jl}(\eta) - f_{jl}(0^+) \).
    \end{equation}
\end{proposition}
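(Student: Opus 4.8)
The plan is to reduce the whole statement to the single identity \eqref{eqn:rewsch}, $\mathrm{s}_{jl}=-f_{jl}(0^+)$: once this is known, \eqref{eqn:linear-response-right-derivative} follows at once from Definition~\ref{def:response coefficients}, since then $f_{jl}(\eta)+\mathrm{s}_{jl}=f_{jl}(\eta)-f_{jl}(0^+)$. To establish \eqref{eqn:rewsch} I would diagonalize fiberwise: for every $k$ outside the finite set of Fermi points write $H(k)=\sum_a E_a(k)\Pi_a(k)$ with eigenvalues $E_a(k)$ and orthogonal eigenprojections $\Pi_a(k)$, so that $\tau_t(J_l(k))=\sum_{a,b}e^{i(E_a(k)-E_b(k))t}\,\Pi_a(k)J_l(k)\Pi_b(k)$. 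Using $[P_\mu(k),\Pi_a(k)J_l(k)\Pi_b(k)]=(n_a-n_b)\,\Pi_a(k)J_l(k)\Pi_b(k)$ with $n_a:=\chi_{(-\infty,\mu]}(E_a(k))$, together with $\int_{-\infty}^0 e^{\eta t}e^{i(E_a-E_b)t}\,dt=(\eta+i(E_a-E_b))^{-1}$ for $\eta>0$, carrying out the time integral in \eqref{eqn:fjl-eta}, using cyclicity of the (finite-dimensional) trace, and relabelling the $a\leftrightarrow b$ terms (the pairs with $n_a=n_b$ do not contribute) yields
\[
f_{jl}(\eta)=\frac{i}{(2\pi)^2}\int_{\FC_1^*}dk\sum_{E_a(k)<\mu<E_b(k)}\left[\frac{\Tr\big(J_j(k)\Pi_a(k)J_l(k)\Pi_b(k)\big)}{\eta+i(E_a(k)-E_b(k))}-\frac{\Tr\big(J_j(k)\Pi_b(k)J_l(k)\Pi_a(k)\big)}{\eta+i(E_b(k)-E_a(k))}\right].
\]

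Next I would pass to the limit $\eta\to0^+$ under the $k$-integral. For $E_a(k)<\mu<E_b(k)$ one has $|E_a(k)-E_b(k)|\ge\Lambda_+(k)-\Lambda_-(k)>0$, hence $|\eta+i(E_a-E_b)|^{-1}\le(\Lambda_+(k)-\Lambda_-(k))^{-1}$ uniformly in $\eta>0$, which together with $\|J_j(k)\|,\|J_l(k)\|\le C$ dominates the integrand by $C(\Lambda_+(k)-\Lambda_-(k))^{-1}$. By \eqref{eqn:conical}--\eqref{eqn:cone}, near each Fermi point $\omega_l$ one has $\Lambda_+(k)-\Lambda_-(k)\ge c\,|k-\omega_l|$, so the dominating function has only an $|k-\omega_l|^{-1}$ singularity there, which is integrable in two dimensions; away from the Fermi points $\Lambda_+-\Lambda_-$ is continuous (Remark~\ref{rem:assum}\ref{it:eigenvalcont}) and strictly positive, hence bounded below on the relevant compact set. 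Thus $(\Lambda_+-\Lambda_-)^{-1}\in L^1(\FC_1^*)$, dominated convergence applies, and in the limit the two summands combine to give
\[
f_{jl}(0^+)=\frac{1}{(2\pi)^2}\int_{\FC_1^*}dk\sum_{E_a(k)<\mu<E_b(k)}\frac{\Tr\big(J_j(k)\Pi_a(k)J_l(k)\Pi_b(k)\big)+\Tr\big(J_j(k)\Pi_b(k)J_l(k)\Pi_a(k)\big)}{E_a(k)-E_b(k)}.
\]

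It then remains to identify the right-hand side with $-\mathrm{s}_{jl}$. Using $\partial^2_{jl}H(k)=\partial_lJ_j(k)$ I would write $\Tr(\partial^2_{jl}H(k)P_\mu(k))=\partial_l\Tr(J_j(k)P_\mu(k))-\Tr(J_j(k)\,\partial_lP_\mu(k))$. The scalar map $k\mapsto\Tr(J_j(k)P_\mu(k))$ is $\Gamma^*$-periodic — by the covariance \eqref{eqn:cov} of $H$, $J_j$, $P_\mu$ and cyclicity — and bounded, and it is $C^1$ off the Fermi points with gradient of size $O(|k-\omega_l|^{-1})$ near $\omega_l$; hence, excising disks $B_\eps(\omega_l)$, applying the divergence theorem on $\FC_1^*\setminus\bigcup_l B_\eps(\omega_l)$ and letting $\eps\to0$, the flux through $\partial\FC_1^*$ cancels by periodicity while the flux through each $\partial B_\eps(\omega_l)$ is $O(\eps)$, so $\int_{\FC_1^*}\partial_l\Tr(J_jP_\mu)\,dk=0$. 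For the remaining term, for $k$ off the Fermi points (where $\mu\notin\sigma(H(k))$) I would differentiate the Riesz formula $P_\mu(k)=\frac{1}{2\pi i}\oint_{\mathcal{C}}(z-H(k))^{-1}\,dz$ and evaluate the resulting contour integral in the eigenbasis, obtaining $\partial_lP_\mu(k)=\sum_{E_a(k)<\mu<E_b(k)}(E_a(k)-E_b(k))^{-1}\big(\Pi_a(k)J_l(k)\Pi_b(k)+\Pi_b(k)J_l(k)\Pi_a(k)\big)$. Inserting this into $\mathrm{s}_{jl}=-\frac{1}{(2\pi)^2}\int_{\FC_1^*}dk\,\Tr(J_j(k)\,\partial_lP_\mu(k))$ reproduces exactly $-f_{jl}(0^+)$ computed above, which is \eqref{eqn:rewsch}; then \eqref{eqn:linear-response-right-derivative} is immediate.

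I expect the interchange of $\lim_{\eta\to0^+}$ with the $k$-integral to be the main obstacle: this is the point where the spectral gap genuinely closes, at the Fermi points, and one must check that the resulting singularity of the integrand is only of order $|k-\omega_l|^{-1}$, hence $L^1$ in $\R^2$ — this is the one place where the quantitative cone bound \eqref{eqn:cone}, rather than just the leading expansion \eqref{eqn:conical}, is really needed. A secondary, more technical point is the vanishing of $\int_{\FC_1^*}\partial_l\Tr(J_jP_\mu)\,dk$: although $P_\mu(k)$ is in general discontinuous at the Fermi points, only the boundedness of $\Tr(J_jP_\mu)$ is needed there for the excised boundary contributions to vanish as $\eps\to0$.
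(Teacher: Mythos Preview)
Your proposal is correct. The route differs from the paper's only in organization, not in substance. The paper first uses the identity $[P_\mu,J_l]=-[\partial_lP_\mu,H]$ (obtained by differentiating $[P_\mu,H]=0$) to rewrite the integrand as a total time derivative $-i\frac{d}{dt}\Tr(J_j\,\tau_t(\partial_lP_\mu))$, integrates by parts in $t$, and obtains $f_{jl}(\eta)=\int_{\FC_1^*}\Tr(J_j\,\partial_lP_\mu)\,dk-\mathrm{R}(\eta)$ with a remainder $\mathrm{R}(\eta)\to0$; the spectral decomposition is then invoked only to control $\mathrm{R}(\eta)$. You instead diagonalize from the outset, carry out the $t$-integral term by term, pass to the limit, and only afterwards recognise the result as $\int_{\FC_1^*}\Tr(J_j\,\partial_lP_\mu)\,dk$ via the eigenbasis formula for $\partial_lP_\mu$. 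Both approaches rely on the same two ingredients: the $\Gamma^*$-periodicity of $\Tr(J_jP_\mu)$ to kill the total-derivative term, and the $L^1$ bound $(\Lambda_+-\Lambda_-)^{-1}\lesssim\sum_l|k-\omega_l|^{-1}$ near the Fermi points for dominated convergence (the paper packages this as Lemma~\ref{lem:norm-derivative-projector-estimate}, $\nabla_kP_\mu\in L^1$). Your excision argument for $\int_{\FC_1^*}\partial_l\Tr(J_jP_\mu)\,dk=0$ is in fact more explicit than the paper's one-line appeal to periodicity; the paper's integration-by-parts-in-$t$ route, on the other hand, avoids having to spell out the spectral formula for $\partial_lP_\mu$.
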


The proof relies on estimates for the derivatives of the the Fermi projector, collected in the next lemma, whose proof is deferred to Appendix \ref{sec:aux}.

\begin{lemma}
\label{lem:norm-derivative-projector-estimate}
   Under Assumption \ref{assum:H} we have that $k\mapsto P_\mu(k)$ is $C^2$ in $\FC_1^*\setminus\{\omega_1,\ldots,\omega_n\}$. Also, there exists a finite constant $C$ such that:
 \begin{equation}
     \norm{\nabla_k P_\mu(k) }\leq C\sum_{l=1}^n\frac{1}{\abs{k-\omega_l}}\qquad\text{for all $k\in \FC_1^*\setminus\{\omega_1,\ldots,\omega_n\}$}.
 \end{equation} In particular, $\nabla_k P_\mu\in L^1(\FC_1^*)$.
\end{lemma}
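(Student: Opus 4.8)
The plan is to represent $P_\mu(k)$ as a Riesz contour integral and to bound $\nabla_k P_\mu(k)$ through the Hellmann--Feynman formula, so that the only quantity entering the denominators is the spectral gap of $H(k)$ at $\mu$. I would start by noting that $M:=\sup_{k\in\FC_1^*}\norm{H(k)}<\infty$ by covariance and compactness of $\FC_1^*$, and that for every $k\in\FC_1^*\setminus\{\omega_1,\dots,\omega_n\}$ the Fermi energy $\mu$ lies in a spectral gap of $H(k)$: all eigenvalues $\leq\mu$ are in fact $\leq\Lambda_-(k)<\mu$, and all those $>\mu$ are $\geq\Lambda_+(k)>\mu$. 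Hence a \emph{single} contour --- e.g.\ the counterclockwise boundary $\partial R$ of the rectangle $R=[-M-1,\mu]\times[-1,1]\subset\C$ --- lies in the resolvent set of $H(k)$ and encircles exactly the eigenvalues $\leq\mu$, simultaneously for all such $k$, so that
\[
P_\mu(k)=\frac{1}{2\pi\iu}\oint_{\partial R}(z-H(k))^{-1}\,\di z,\qquad k\in\FC_1^*\setminus\{\omega_1,\dots,\omega_n\}.
\]
Since $k\mapsto H(k)$ is $C^2$ by \ref{item:perselfadj} and the resolvent $(z-H(k))^{-1}$ together with its $k$-derivatives up to second order is jointly continuous in $(z,k)\in\partial R\times(\FC_1^*\setminus\{\omega_1,\dots,\omega_n\})$, differentiating under the (compact) contour integral gives the $C^2$ regularity claim directly.

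Next I would extract the quantitative dependence on the gap $g(k):=\Lambda_+(k)-\Lambda_-(k)>0$. The key local input is that $g$ does not degenerate faster than linearly at the Fermi points: summing the two dispersion relations in \eqref{eqn:conical} and using \eqref{eqn:cone} for both $k-\omega_l$ and $\omega_l-k$ yields $\Lambda_+(k)-\mu\geq C_*\abs{k-\omega_l}+o(\abs{k-\omega_l})$ and $\mu-\Lambda_-(k)\geq C_*\abs{k-\omega_l}+o(\abs{k-\omega_l})$, so $g(k)\geq\tfrac{C_*}{2}\abs{k-\omega_l}$ on a small punctured ball $B(\omega_l,r)\setminus\{\omega_l\}$ (with the balls disjoint); on the compact complement $\FC_1^*\setminus\bigcup_l B(\omega_l,r)$, $g$ is continuous by Remark \ref{rem:assum}\ref{it:eigenvalcont} and strictly positive, hence bounded below by some $c_0>0$. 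For the derivative itself, differentiating $P_\mu(k)^2=P_\mu(k)$ shows that $\partial_j P_\mu(k)$ is off-diagonal for the splitting $\id=P_\mu(k)+(\id-P_\mu(k))$; computing residues in $\partial_j P_\mu(k)=\tfrac{1}{2\pi\iu}\oint_{\partial R}(z-H(k))^{-1}(\partial_j H(k))(z-H(k))^{-1}\,\di z$, written in an orthonormal eigenbasis $H(k)=\sum_a\Lambda_a(k)\ket{\phi_a(k)}\bra{\phi_a(k)}$, makes this explicit as the Hellmann--Feynman formula
\[
\partial_j P_\mu(k)=T_j(k)+T_j(k)^*,\qquad T_j(k):=\sum_{a\,:\,\Lambda_a\leq\mu}\ \sum_{b\,:\,\Lambda_b>\mu}\frac{\scal{\phi_a(k)}{(\partial_j H(k))\,\phi_b(k)}}{\Lambda_a(k)-\Lambda_b(k)}\,\ket{\phi_a(k)}\bra{\phi_b(k)}.
\]
On every summand $\Lambda_b(k)-\Lambda_a(k)\geq g(k)$, so estimating $\norm{T_j(k)}$ by its Hilbert--Schmidt norm gives $\norm{\partial_j P_\mu(k)}\leq C\,g(k)^{-1}$, with $C$ depending only on $N$ and $\sup_k\norm{\partial_j H(k)}<\infty$. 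Feeding in the two lower bounds on $g$, and observing that $\sum_l\abs{k-\omega_l}^{-1}$ is itself bounded below on $\FC_1^*$ (because $\abs{k-\omega_l}\leq\mathrm{diam}\,\FC_1^*$), I would conclude $\norm{\nabla_k P_\mu(k)}\leq C\sum_{l=1}^n\abs{k-\omega_l}^{-1}$; the $L^1$ statement is then immediate, since $\FC_1^*$ is bounded and $\abs{\cdot}^{-1}$ is locally integrable in two dimensions.

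The two points needing care --- and where the hypotheses are genuinely used --- are the following. First, one must obtain only a \emph{first} power of $g(k)$ in the denominator; this is exactly what the off-diagonal structure of $\partial_j P_\mu$ (equivalently, the cancellation in the residue calculation producing the Hellmann--Feynman formula) provides. A crude resolvent bound would instead give $g(k)^{-2}\sim\abs{k-\omega_l}^{-2}$, which is \emph{not} integrable in two dimensions, so this cancellation is essential rather than cosmetic --- I expect this to be the main structural obstacle. Second, the linear lower bound $g(k)\gtrsim\abs{k-\omega_l}$ can fail for the bare dispersion $\mu\pm\abs{S_l(k-\omega_l)}+a_l\cdot(k-\omega_l)$ --- the branch $\Lambda_+-\mu$ could vanish along a ray if $\abs{a_l}$ were too large relative to $S_l$ --- and is precisely what the uniform cone condition \eqref{eqn:cone} rules out; this is the one place where Assumption \ref{assum:H} enters beyond mere $C^2$ regularity and continuity of the bands.
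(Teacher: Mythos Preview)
Your proposal is correct and follows essentially the same overall scheme as the paper --- Riesz contour representation, differentiation under the integral, and then a $1/g(k)$ bound combined with the linear lower bound on the gap coming from \eqref{eqn:cone} --- but the mechanism you use to extract the single power of $g(k)$ is different and worth noting. The paper estimates the contour integral for $\partial_j P_\mu(k)$ directly on the four sides of a rectangle whose right edge is the vertical segment $\{\mu+iy: |y|\leq a\}$; on that edge the bound comes from $\int_{-a}^{a}\frac{dy}{y^2+(\mu-\Lambda_-(k))^2}\leq \frac{\pi}{\mu-\Lambda_-(k)}$, which is where the single inverse power of the gap appears. You instead evaluate the residues first, obtaining the Hellmann--Feynman expression with energy denominators $\Lambda_a(k)-\Lambda_b(k)$, and then bound each denominator below by $g(k)$. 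Both routes are legitimate; yours makes the ``off-diagonal cancellation'' you highlight completely transparent (it is literally the statement that only cross terms $a$-below/$b$-above survive), whereas in the paper the same cancellation is encoded analytically in the fact that the $y$-integral over the critical edge converges and scales like $1/g$ rather than $1/g^2$. Your observation that the cone condition \eqref{eqn:cone} applied to both $k-\omega_l$ and $\omega_l-k$ gives separate linear lower bounds on $\Lambda_+-\mu$ and $\mu-\Lambda_-$ is slightly sharper than what the paper actually uses (only $\mu-\Lambda_-$), but either suffices here.
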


\begin{proof}[Proof of Proposition \ref{prop:right-derivative}]
We shall rewrite the function $f_{jl}(\eta)$ by using an integration by parts in time $t$ and then prove identity \eqref{eqn:rewsch}.
First notice that $[P_\mu(k),H(k)]=0$ for all $k\in\FC_1^*$. 
In view of Assumption \ref{assum:H} and Lemma \ref{lem:norm-derivative-projector-estimate}, $H$ and $P_\mu$ are $C^2$ outside the Fermi points, thus by differentiating the previous identity we obtain that 
\begin{equation}
\label{eqn:dercommHP}
[ P_\mu(k),J_l(k)]=[P_\mu(k),\partial_l H(k)] = - [ \partial_l P_\mu(k),H(k)]\qquad\text{for any $k\in \FC_1^*\setminus\{\omega_1,\dots,\omega_n\}$.}
\end{equation} 
Therefore, for every $k\in \FC_1^*\setminus\{\omega_1,\dots,\omega_n\}$ we can rewrite the trace in \eqref{eqn:fjl-eta} as 
\begin{align*}
&\Tr\(\,J_j(k) \left[P_\mu(k),\tau_t(J_l(k)  )\right] \,\) 
= 
-i\frac{d}{d t}\Tr\( J_j(k) \tau_t\(\partial_l P_\mu(k)\)\).
\end{align*}

Hence, by performing an integration by parts, we get that
\begin{equation}
\label{eqn:rewfjl}
\begin{aligned}
    f_{jl}(\eta) &
   =
   \left.\frac{1}{(2\pi)^2}\int_{\FC_1^*}dk\,e^{\eta t}\Tr\( J_j(k)\, \tau_t\(\partial_l P_\mu(k)\) \,\)\right|_{-\infty}^{0}\\
   &\qquad-\frac{1}{(2\pi)^2}\int_{\FC_1^*}dk\,\int_{-\infty}^0\, dt\, \eta e^{\eta t}\Tr\( J_j(k)\, \tau_t\(\partial_l P_\mu(k)\) \).
\end{aligned}
\end{equation}
Notice that the contributions coming from the boundary terms in time are well-defined since $\partial_l P_\mu\in L^1(\FC_1^*)$ by Lemma \ref{lem:norm-derivative-projector-estimate}. In particular, the term at $t=-\infty$ vanishes, while for the term at $t=0$ we observe that
\begin{equation}
\label{eqn:firsttermfjl}
\int_{\FC_1^*}dk\Tr\( J_j(k)\, \partial_l P_\mu(k) \,\)=
-\int_{\FC_1^*}dk \Tr\( \partial^2_{jl}H(k)\,  P_\mu(k)\)
\end{equation}
where we have used an integration by parts in $k$.
Thus, to conclude that $\mathrm{s}_{jl}=-f_{jl}(0^+)$ we are left to show that 
\begin{equation}
\label{eqn:Setato0}    
\mathrm{R}(\eta):=\frac{1}{(2\pi)^2}\int_{\FC_1^*}dk\,\int_{-\infty}^0\, dt\, \eta e^{\eta t}\Tr\( J_j(k)\, \tau_t\(\partial_l P_\mu(k)\) \,\)\to 0\qquad \text{as $\eta \to 0^+$}.
\end{equation}

Observe that the operator $\partial_j P_\mu$ is off-diagonal with respect to the decomposition induced by $P_\mu$, namely:
\[
\partial_jP_\mu(k) = P_\mu(k)\partial_jP_\mu(k)(\Id-P_\mu(k)) +\text{adj} \qquad \text{for every $k \in \FC_1^*\setminus \{\omega_1,\ldots, \omega_n\}$}
\]
where \virg{$+$adj} means that the adjoint of the sum of all operators to the left is added.
Therefore, in view of the spectral decomposition, by using Remark \ref{rem:assum}\ref{it:spectraldec} and denoting by $P_j(k)$ the projector associated with $\Lambda_j(k)$ for any $1\leq j \leq N$, we have that
\begin{align*}
\tau_t\( \partial_jP_\mu(k)\)&=
 e^{iH(k)t}\partial_j P_\mu(k)e^{-iH(k)t} \\
 &=\sum_{l=1}^m\sum_{q=m+1}^{N} e^{i(\Lambda_l(k) - \Lambda_q(k))t}P_l(k)\partial_j P_\mu(k) P_q(k) + \text{adj}.
\end{align*} 
Thus, we get that
\begin{equation*}
\begin{split}
    \mathrm{R}(\eta) &= 
     \re \sum_{l=1}^m\sum_{q=m+1}^{N} \frac{1}{(2\pi)^2}\int_{\FC_1^*} dk \frac{\eta}{\eta + i(\Lambda_l(k) - \Lambda_q(k))} \Tr\big( J_l(k)P_l(k)\partial_j P_\mu(k) P_q(k)\big),
\end{split}
\end{equation*}
where $\re(\,\cdot\,)$ denotes the real part of the scalar to which it is applied. In view of Assumption \ref{assum:H} and Lemma \ref{lem:norm-derivative-projector-estimate}, we have:
$$
\abs{\frac{\eta}{\eta + i(\Lambda_l - \Lambda_q)} \Tr\big( J_l P_l(k)\partial_j P_\mu(k) P_q(k)\big)}\leq N\max_{k\in \FC_1^*}\norm{\partial_j H(k)}\abs{\nabla_k P_\mu} \in L^1(\FC_1^*)
$$
and the claim \eqref{eqn:Setato0} follows from dominated convergence theorem.
\end{proof}

\subsection{Complex time deformation}
\label{ssec:2}
In this subsection we will prove that $f_{jj}$ in \eqref{eqn:fjl-eta} can be seen as a restriction for positive $\eta$ of the function $\fjj\colon\R\to\R$, defined in \eqref{eqn:f_jj-tilde-definition}. The proof is based on the complex-deformation argument similar to the one used in \cite{GMPhall, GLMP24}, here applied to the case of non-interacting fermions.

For the subsequent analysis, it is convenient to adopt the following notations:
\begin{equation}
    P_b(k) := P_\mu(k) \qquad\qquad P_a(k) := \Id - P_\mu(k),
\end{equation}
where $a$ stands for $a$-bove the Fermi energy and $b$ for $b$-elow (or equal to) it.
Consequently, for any fibered operator $A\colon \R^2 \to \mathcal{L}(\ell^2(\X/\Gamma))\cong\mathcal{L}\(\C^N\)$ we set:
\[
A^{ab}(k): = P_a(k)A(k)P_b(k), \qquad A^{ba}(k): = P_b(k)A(k)P_a(k).
\]

\begin{proposition}
    \label{prop:complex-deformation}
    Under Assumption \ref{assum:H}, consider, for every $\eta\in \R$:
    \begin{equation}
    \label{eqn:f_jj-tilde-definition}
        \fjj(\eta) := - \frac{2}{(2\pi)^2}\int_0^{+\infty}dt\, \cos(\eta t)\int_{\FC_1^*} dk \, \Tr(\tau_{it}(J_j(k))^{ab}J_j(k)^{ba}).
    \end{equation}
    
    The function $\fjj$ is an extension to $\eta \in \mathbb{R}$ of $f_{jj}$, namely
    $$
    \fjj(\eta) = f_{jj}(\eta)\qquad\text{ for $\eta>0$}.
    $$
    \noindent In particular, the longitudinal conductivity can be expressed as:
    \begin{equation}
    \label{eqn:longitudinal-conductivity-complex-deformation}
    \sigma_{jj} = \lim_{\eta \to 0^+}\frac{1}{\eta}\left( \fjj(\eta) - \fjj(0^+) \right).
    \end{equation}
\end{proposition}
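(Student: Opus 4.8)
The plan is to start from the rewritten expression for $f_{jj}(\eta)$ obtained in the course of proving Proposition \ref{prop:right-derivative}, decompose the time integrand into its matrix elements with respect to the spectral decomposition of $H(k)$, and then analytically continue the time variable $t\mapsto it$ inside the integral. First I would recall that, using \eqref{eqn:dercommHP} and the cyclicity of the trace as in the proof of Proposition \ref{prop:right-derivative}, together with the off-diagonality of $\partial_j P_\mu(k)$ with respect to $P_\mu(k)$, one can write for $\eta>0$
\begin{align*}
f_{jj}(\eta) = \frac{i}{(2\pi)^2}\int_{\FC_1^*}dk\int_{-\infty}^0 dt\, e^{\eta t}\,\Tr\big(J_j(k)\big[P_\mu(k),\tau_t(J_j(k))\big]\big),
\end{align*}
and then observe that the commutator structure, combined with $J_j(k)=\partial_j H(k)$ and the spectral resolution $H(k)=\sum_{r}\Lambda_r(k)P_r(k)$, produces only the genuinely off-diagonal matrix elements $J_j(k)^{ab}$ and $J_j(k)^{ba}$; the diagonal blocks commute with $P_\mu(k)$ and drop out. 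Concretely one obtains a representation of the form $f_{jj}(\eta)=\frac{c}{(2\pi)^2}\int_{\FC_1^*}dk\int_{-\infty}^0 dt\, e^{\eta t}\sum_{\text{pairs}} e^{i(\Lambda_l-\Lambda_q)t}\,(\text{matrix element})$ with a real prefactor, where the sum runs over occupied–empty band pairs.

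Next I would perform the $t$-integral, or rather set it up for complex deformation. For $\eta>0$ the integrand decays at $t\to-\infty$; substituting $t\mapsto -t$ to integrate over $(0,\infty)$ and then rotating the contour to the positive imaginary axis $t\mapsto it$ is justified because (i) on the occupied–empty block the generator $i(\Lambda_l(k)-\Lambda_q(k))$ has a definite sign, so $\tau_{it}(J_j(k))^{ab}=e^{-H(k)t}(\cdots)e^{H(k)t}$ restricted to this block is bounded uniformly in $t\geq 0$ (the exponentials act in the "right" direction on each block), and (ii) the arc contributions vanish, which follows from the exponential decay $e^{\eta t}$ together with the boundedness just mentioned and the gap $\Lambda_q(k)-\Lambda_l(k)>0$ away from the Fermi points, controlled in an integrable way near the Fermi points via Lemma \ref{lem:norm-derivative-projector-estimate} (the $|k-\omega_l|^{-1}$ singularity is $L^1$ in two dimensions). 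Carrying out the rotation turns $e^{\eta t}\to e^{i\eta t}$ and, after symmetrizing the resulting expression (using that $f_{jj}$ is real, so one may take the real part, producing $\cos(\eta t)$ from $e^{i\eta t}$, and pairing the $ab$ and $ba$ blocks), yields exactly \eqref{eqn:f_jj-tilde-definition}. The even-in-$\eta$ extension to all $\eta\in\R$ is then manifest from the $\cos(\eta t)$ factor, and \eqref{eqn:longitudinal-conductivity-complex-deformation} is immediate from \eqref{eqn:linear-response-right-derivative} of Proposition \ref{prop:right-derivative} combined with the identity $\fjj=f_{jj}$ on $(0,\infty)$.

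The main obstacle I expect is the rigorous justification of the contour rotation, i.e. controlling the $k$-integral uniformly enough to move the $t$-contour through the complex plane. Away from the Fermi points everything is smooth and the spectral gap $\Lambda_+(k)-\Lambda_-(k)$ is bounded below (Remark \ref{rem:assum}\ref{it:eigenvalcont}), so there the estimates are routine; the delicate region is a neighbourhood of each $\omega_l$, where the gap closes linearly by \eqref{eqn:conical}. There one must check that, although $\tau_{it}(J_j(k))^{ab}$ may grow in $t$ like $e^{(\Lambda_q(k)-\Lambda_l(k))t}$ when the sign is unfavourable — but in fact on the $ab$ block the sign is $\Lambda_a>\mu\geq\Lambda_b$, so $\tau_{it}(J_j(k))^{ab}=P_a e^{-H t} J_j e^{H t} P_b$ decays like $e^{-(\Lambda_a-\Lambda_b)t}$, which is the favourable direction — the contributions remain integrable in $k$ after the rotation, using the bound on $\|\nabla_k P_\mu\|$ and $\|J_j(k)\|\leq \max_k\|\partial_j H(k)\|$. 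Making this precise, including a careful treatment of the arcs at infinity near the band crossings, is the technical heart of the argument; everything else is bookkeeping with the spectral decomposition and the reality of $f_{jj}$.
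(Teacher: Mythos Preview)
Your outline matches the paper's proof: rewrite $f_{jj}$ in terms of the off-diagonal blocks $J_j^{ab}$, $J_j^{ba}$ (the paper does this via cyclicity of the trace and $P_a+P_b=\Id$ rather than by invoking \eqref{eqn:dercommHP}, but the outcome is the same), rotate the time contour to the imaginary axis by Cauchy's theorem on a triangular path, and use reality of $f_{jj}$ to turn $e^{-i\eta t}$ into $\cos(\eta t)$.

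The obstacle you anticipate is simpler than you think, and Lemma~\ref{lem:norm-derivative-projector-estimate} is not needed here. On the closing arc (parametrized by $z=-s+i(s-R)$, $s\in[0,R]$) one has
\[
\big\|P_b\,e^{iH(k)z}\big\|\,\big\|e^{-iH(k)z}P_a\big\|\;\le\; e^{(\Lambda_+(k)-\Lambda_-(k))(s-R)}\;\le\;1,
\]
since $s-R\le 0$ and $\Lambda_+(k)\ge\Lambda_-(k)$. This bound already dominates by the integrable function $C\,e^{-\eta s}$ (in $s$) times a constant (in $k$, over the compact $\FC_1^*$), and it tends to $0$ pointwise as $R\to\infty$ except on the measure-zero set of Fermi points; dominated convergence then kills the arc directly, with no need to control $\|\nabla_k P_\mu\|$ or to treat neighbourhoods of the $\omega_l$ separately.
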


\begin{proof}
We start by rewriting $f_{jj}$ in \eqref{eqn:fjl-eta} as:
\begin{equation}
\label{eqn:rewf}
    f_{jj}(\eta) = \frac{i}{(2\pi)^2} \int_{\FC_1^*} dk \int_{-\infty}^0dt e^{\eta t} \Tr\left(\left[ \tau_t(J_j(k)), J_j(k)  \right] P_b(k) \right),
\end{equation}
where we have used that
\begin{equation*}
    [J_j(k)P_b(k),\tau_t(J_j(k))] = J_j(k)[P_b(k),\tau_t(J_j(k))]-[\tau_t(J_j(k)),J_j(k)]P_b(k).
\end{equation*}
Then, we rewrite the trace inside the integral in \eqref{eqn:rewf} as:
\begin{align*}
    \Tr([\tau_t(J_j(k)), J_j(k)]P_b(k)) 
    &= \Tr\left(\tau_t(J_j(k))^{ba}J_j(k)^{ab}\right) - \Tr\left(J_j(k)^{ba}\tau_t(J_j(k))^{ab}\right)
\end{align*}
where we have used the cyclicity of trace.
Thus, defining
\begin{equation*}
    F(\eta) := \frac{i}{(2\pi)^2} \int_{-\infty}^0dt \, e^{\eta t}\int_{\FC_1^*} dk \Tr\left( \tau_t(J_j(k))^{ba}J_j(k)^{ab}\right),
\end{equation*}
we obtain that 
\begin{equation}
\label{eqn:fjj-real-part}
\begin{split}
    f_{jj}(\eta) 
    &=   F(\eta) + \overline{F(\eta)} = 2\mathrm{Re}\left( F(\eta) \right).
\end{split}
\end{equation}
Introducing the following function
\begin{equation*}
    \C\ni\mapsto g_\eta(z) := \frac{e^{\eta z}}{(2\pi)^2}\int_{\FC_1^*} dk\, \Tr\left( \tau_z(J_j(k))^{ba} J_j^{ab}(k)\right),
\end{equation*}
we have that
\begin{equation}
\label{eqn:fsharp-as-gsharp}
    F(\eta) =  i\int_{-\infty}^0dt\, g_\eta(t).
\end{equation}
Since $g_\eta$ is analytic, the Cauchy integral theorem implies that
\begin{equation}
\label{eqn:cauchyg}
\int_{I}dz\, g_\eta(z) + \int_{II}dz\,g_\eta(z)  + \int_{III}dz\,g_\eta(z)=\oint_\mathcal{C}dz\, g_\eta (z)=0,
\end{equation}
where $\mathcal{C}$ is the closed curve drawn in Figure \ref{fig:complex-path-deformation} with $R>1$.
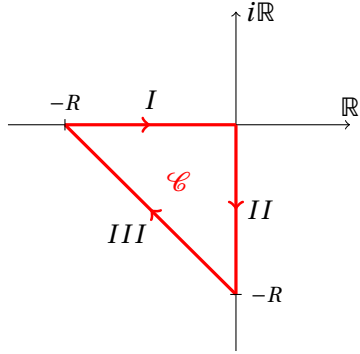
\begin{figure}
     \centering
     \usetikzlibrary{decorations.markings}
     \begin{tikzpicture}[scale=1.5]
        \draw[->] (-2,0) -- (1,0) node[anchor=south] {$\R$};
        \draw[->] (0,-2) -- (0,1) node[anchor=west] {$i\R$};
        \begin{scope}[very thick,decoration={
                markings,
                mark=at position 0.5 with {\arrow{>}}}
                ]
        \draw[postaction={decorate},color=red] (-1.5,0)-- node[above=2pt] {{\color{black} $I$}} (0,0);
        \draw[postaction={decorate},color=red] (0,0)-- node[anchor=west] {{\color{black} $II$}} (0,-1.5); 
        \draw[postaction={decorate},color=red] (0,-1.5)-- node[xshift=-9pt,yshift=-9pt] {{\color{black} $III$}} (-1.5,0);
        \node[color=red] at (-.5,-.5) {$\mathcal{C}$};
        \end{scope}
        \draw (-1.5,-.05) -- (-1.5,.05) node[anchor=south] {\footnotesize $-R$};
        \draw (-.05,-1.5) -- (.05,-1.5) node[anchor=west] {\footnotesize  $-R$};
     \end{tikzpicture}
    \caption{Path used in the complex-time deformation.}
    \label{fig:complex-path-deformation}
\end{figure}
It is easy to see that
\begin{equation}
\label{eqn:cauchy-complex-deformation}
    \lim_{R\to \infty}\int_{III}dz\,g_\eta (z)  =0.
\end{equation}
Therefore, taking the limit $R\to\infty$ of \eqref{eqn:cauchyg} and using \eqref{eqn:fsharp-as-gsharp} we get that
\begin{equation}
\label{eqn:fA-complex-time}
\begin{aligned}
     F(\eta)
    &=- \frac{1}{(2\pi^2)}\int^{+\infty}_0dt\int_{\FC_1^*} dk e^{-i\eta t}\Tr\left(\tau_{it}(J_j(k))^{ab} J_j^{ba}(k)\right),
\end{aligned}
\end{equation}
where 
we have parametrized $II$ as $\gamma:[0,R]\ni t \mapsto -it$ and 
we have exploited the cyclicity of trace.
Computing the real part of $F(\eta)$, the proof is concluded.
\end{proof}

\begin{remark}
\label{rmk:complex-deformation-k-domain}
    Let $B\subset \Bril$ be an open set. Notice that in the proof of Proposition \ref{prop:complex-deformation} the complex time deformation holds also if in the expression of $\fjj$ we replace the integration domain $\Bril$ with $B$.
\end{remark}

Obviously $\fjj$ is an even function. If $\fjj$ was differentiable, then its derivative in zero would vanish; hence, by Proposition \ref{prop:complex-deformation} $\sigma_{jj}$ would be zero. As we will see, differentiability in zero does not hold. However, we will use this observation to get rid of all the contributions to the conductivity associated with energies away from the Fermi level. We will be left with the contributions due to the Fermi points, which we will be able to explicitly determine.

\subsection{Singular and regular parts of $\fjj$}
\label{sec:reular-part-fjj}
\begin{figure}
    \centering
    \begin{tikzpicture}[scale=1.5]
        \draw[->] (-1,0) -- (3.5,0);
        \draw[->] (0,-.5) -- (0,1.5);
        \draw[dashed] (1,-.1) node[anchor=north] {$1$} -- (1,1);
        \draw[dashed] (2,-.1) node[anchor=north] {$2$} -- (2,1);
        \draw[very thick, color=orange] (0,1)  -- (1,1) node[above] {$\chi$}  .. controls (1.5,1) and (1.5,0) .. (2,0) -- (3.4,0);
        \draw[very thick, color=blue] (0,0) -- (1,0)  .. controls (1.5,0) and (1.5,1) .. (2,1) node[above] {$1-\chi$} -- (3.4,1);
    \end{tikzpicture}
    \caption{Graphs of the smooth cut-off functions $\chi$, $1-\chi$.}
    \label{fig:chi-example}
\end{figure}
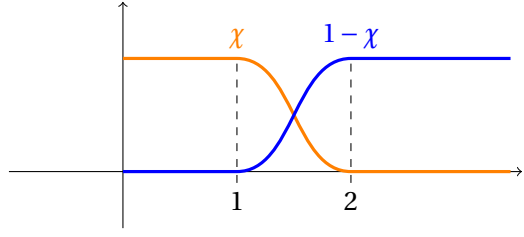

First we single out from $\fjj$ the singular part, denoted by $\fjjsing$, which is due to the energies close to the Fermi energy $\mu$. We proceed as follows.

\begin{enumerate}[label=(\roman*), ref=(\roman*)]
\item\label{it:delta} Let $\delta>0$ small enough, and let $\chi \in C^\infty(\R_+)$ such that $\chi(x) = 1$ if $x \leq 1$ and $\chi(x) = 0$ if $x \geq  2$, see \eg Figure \ref{fig:chi-example}. We define:
\begin{equation}
\label{eqn:chi<-definition}
    \chi_<(k) := \chi\left(\delta^{-1}|H(k) - \mu|\right) \qquad \text{ and }\quad \chi_>(k): = \Id - \chi_<(k);
\end{equation}
The function $\chi_{>}(k)$ introduces a smooth cutoff, supported away from the Fermi energy. 
\item \label{it:eps} Let $\eps >0$ small enough such that expansions \eqref{eqn:conical} for $\Lambda_{\pm}$ hold true and the following condition is satisfied.
Let:
\begin{equation}
    d_F:=\min\left\{|S_k(\omega_l-\omega_j)| \,:\, j,k,l=1,\ldots,n\right\}.
\end{equation}

Let $\eps < d_F$. We define the set $B_\eps$ as
\begin{equation}
\label{eqn:definition-Bepsilon}
    B_\eps := \bigcup_{l=1}^n B_\eps^{(l)} \quad \text{ with } \quad B_\eps^{(l)} := \left\{ k \in \Bril \,:\, 2|S_l(k-\omega_l)|< \eps \right\}.
\end{equation}

That is, $B_\eps$ is the union of $n$ disjoint open sets, each of them containing exactly one Fermi point.
\end{enumerate}

We then define the singular and the regular parts of $\fjj$ as:
    \begin{align}
    \label{eqn:fjjsing-def}
    \begin{split}
        \fjjsing(\eta) &:= - \frac{2}{(2\pi)^2}\int_0^{+\infty}dt\,\cos(\eta t)\int_{B_\eps} dk\,\Tr\left(\tau_{it}\left((\chi_<J_j\chi_<)(k)\right)^{ab}\left(\chi_<J_j\chi_<\right)^{ba}(k)\right)
    \end{split} \\
        \fjjreg(\eta) &:= \fjj(\eta) - \fjjsing(\eta).\label{eqn:fjjreg}
    \end{align}

\begin{proposition}
\label{prp:regular-part-fjj}
    Under Assumption \ref{assum:H}, the map $\fjjreg$ is even and differentiable. In particular:
    \begin{equation}
        \label{eqn:longitudinal-conductivity-singular}
        \sigma_{jj} = \lim_{\eta \to 0+}\frac{1}{\eta}\left( \fjjsing(\eta) - \fjjsing(0^+)\right).
    \end{equation}
\end{proposition}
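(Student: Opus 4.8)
The plan is to show that $\fjjreg$ is smooth in $\eta$ by decomposing the trace in its integral representation into a part where at least one spectral cutoff $\chi_>$ appears and a part built purely from $\chi_<$ but with the $k$-integration over $\Bril \setminus B_\eps$. Recall that $\fjjreg(\eta) = \fjj(\eta) - \fjjsing(\eta)$, and substitute $J_j = \chi_< J_j \chi_< + \chi_> J_j \chi_< + \chi_< J_j \chi_> + \chi_> J_j \chi_>$ inside the definition \eqref{eqn:f_jj-tilde-definition} of $\fjj$. Expanding the bilinear trace $\Tr(\tau_{it}(J_j)^{ab} J_j^{ba})$ accordingly produces sixteen terms; fifteen of them contain a factor $\chi_>$ somewhere, and the sixteenth is exactly the $\chi_<$--$\chi_<$ term but integrated over all of $\Bril$. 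Subtracting $\fjjsing$, which is this last term integrated over $B_\eps$ only, leaves $\fjjreg$ as a sum of two contributions: (a) the fifteen terms involving at least one $\chi_>$, integrated over $\Bril$; and (b) the pure $\chi_<$--$\chi_<$ term integrated over $\Bril \setminus B_\eps$.

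For contribution (a), the key point is that $\chi_>(k)$ is supported on $\{|H(k) - \mu| \geq \delta\}$, so in any such term the relevant spectral gap appearing in the off-diagonal factors $A^{ab}$, $A^{ba}$ is bounded below uniformly: whenever one factor carries a $\chi_>$, the operator $\tau_{it}(\,\cdot\,)^{ab}$ produces a factor $e^{-t(\Lambda_+ - \Lambda_-)}$ with $\Lambda_+ - \Lambda_- \geq c(\delta) > 0$ on that support, giving exponential decay in $t$ uniformly in $k$. Hence the $t$-integral converges absolutely and, crucially, can be differentiated in $\eta$ arbitrarily many times under the integral sign: each $\partial_\eta$ brings down a factor $t$ from $\cos(\eta t)$, still dominated by the exponential. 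Since $k \mapsto H(k)$ is $C^2$ and $\chi$ is smooth, the cutoffs $\chi_<(k), \chi_>(k)$ are continuous (indeed the relevant traces are continuous in $k$ by Remark \ref{rem:assum}\ref{it:eigenvalcont}), so the $k$-integral over the compact set $\Bril$ is finite and the dominated convergence / differentiation-under-the-integral argument applies. This shows contribution (a) is $C^\infty$ in $\eta$ on all of $\R$.

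For contribution (b), the pure $\chi_<$--$\chi_<$ term integrated over $\Bril \setminus B_\eps$: here $\chi_<(k)$ restricts to $|H(k)-\mu| \leq 2\delta$, so the dangerous small-gap region is near the Fermi points; but we have explicitly removed neighbourhoods $B_\eps^{(l)}$ of all the $\omega_l$. On $\Bril \setminus B_\eps$ the distance $|S_l(k - \omega_l)| \geq \eps/2$ for every $l$, which via the cone condition \eqref{eqn:cone} and the asymptotics \eqref{eqn:conical} forces $\Lambda_+(k) - \Lambda_-(k) \geq c(\eps) > 0$ uniformly on this set (using continuity of $k \mapsto \Lambda_+(k) - \Lambda_-(k)$ from Remark \ref{rem:assum}\ref{it:eigenvalcont} together with the fact that it is strictly positive away from the Fermi points and bounded below near them by the conical lower bound). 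Consequently the same exponential-decay-in-$t$ estimate applies, and differentiation under the integral in $\eta$ is again justified, so contribution (b) is likewise $C^\infty$. Combining (a) and (b), $\fjjreg$ is differentiable (in fact smooth); and since $\fjj$ is even and $\fjjsing$ is manifestly even (the integrand in \eqref{eqn:fjjsing-def} depends on $\eta$ only through $\cos(\eta t)$), $\fjjreg = \fjj - \fjjsing$ is even. An even differentiable function has vanishing derivative at $0$, so $\fjjreg(\eta) = \fjjreg(0^+) + o(\eta)$, whence $\lim_{\eta \to 0^+} \eta^{-1}(\fjjreg(\eta) - \fjjreg(0^+)) = 0$; subtracting this from \eqref{eqn:longitudinal-conductivity-complex-deformation} and using $\fjj = \fjjsing + \fjjreg$ gives \eqref{eqn:longitudinal-conductivity-singular}.

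The main obstacle I anticipate is making the uniform lower bound on $\Lambda_+(k) - \Lambda_-(k)$ on $\Bril \setminus B_\eps$ fully rigorous in contribution (b): one must patch together the conical asymptotic estimate \eqref{eqn:conical}–\eqref{eqn:cone}, which controls the gap only in a shrinking neighbourhood of each $\omega_l$, with a compactness argument on the complement of those neighbourhoods where the continuous, strictly positive function $\Lambda_+ - \Lambda_-$ attains a positive minimum. A secondary technical point is to verify that the off-diagonal operators $(\chi_< J_j \chi_<)^{ab}$ etc.\ are trace class with trace-norm bounds that survive the $t$- and $\eta$-differentiations; this follows from the discreteness of the model (finite $N$) and the $C^2$ regularity of $H(k)$, but should be stated carefully.
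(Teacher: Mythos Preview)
Your proposal is correct and follows the same strategy as the paper's proof: split off the terms carrying at least one $\chi_>$ (uniform exponential decay in $t$ because the spectral gap is $\geq\delta$), then observe that the pure $\chi_<$--$\chi_<$ contribution over $\Bril\setminus B_\eps$ also decays exponentially since $\Lambda_+-\Lambda_-$ is continuous and strictly positive on that compact set. The paper in fact inserts the cutoff decomposition only around the \emph{first} $J_j$ (four terms $I_{\alpha_1,\alpha_2}$ rather than your sixteen), so its leftover term $I_{<,<}$ restricted to $B_\eps$ does not literally coincide with $\fjjsing$; your sixteen-term expansion is the cleaner way to make this match exact, though the missing piece is handled by the very same $\chi_>$-decay argument.

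One point to sharpen in your write-up: when the lone $\chi_>$ sits on the \emph{second} $J_j$, the factor $\tau_{it}(\chi_<J_j\chi_<)^{ab}$ alone need not decay in $t$; the gap $\geq\delta$ appears only after pairing with $(\cdots)^{ba}$ inside the trace, because the outer $P_b$ of the first factor is glued to the $P_b\chi_>$ of the second, and $P_b\chi_>$ selects eigenvalues $<\mu-\delta$. Your two anticipated obstacles are non-issues: the uniform lower bound on $\Lambda_+-\Lambda_-$ over $\Bril\setminus B_\eps$ is pure compactness (the conical estimate is not needed there), and since each fiber is an $N\times N$ matrix the trace-class concern is vacuous.
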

\begin{proof}
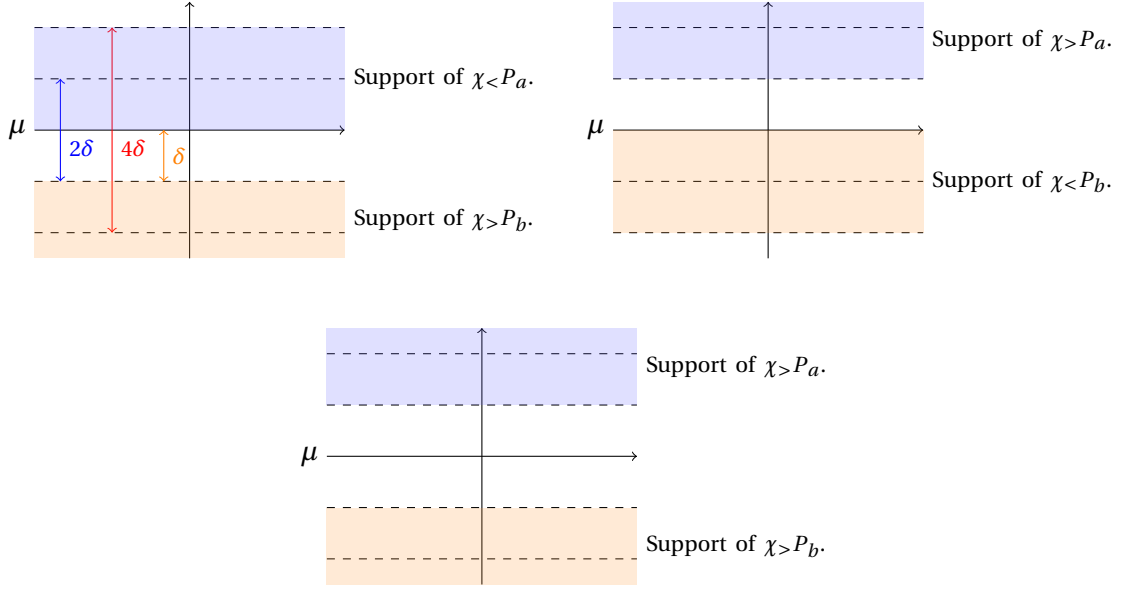
\begin{figure}
    \centering
    \begin{subfigure}
        \centering
        \begin{tikzpicture}[scale=0.68]
            \draw[->] (-3,0) node[anchor=east] {$\mu$} -- (3,0);
            \draw[->] (0,-2.5) -- (0,2.5);
            \foreach \n in {-2,-1,1,2}
                \draw[dashed] (-3,\n) -- (3,\n);
            \draw[<->, color=blue] (-2.5,-1) -- node[anchor=north west] {\footnotesize $2\delta$} (-2.5,1);
            \draw[<->, color=red] (-1.5,-2) -- node[anchor=north west] {\footnotesize $4\delta$} (-1.5,2);
            \draw[<->, color=orange] (-0.5,-1) -- node[anchor=west] {\footnotesize $\delta$} (-0.5,0);
            \path[fill=blue!60, opacity=0.2] (-3,0) -- ++(6,0) -- node[anchor=west, opacity=1, text width=24mm] {\color{black} \footnotesize Support of $\chi_<P_a$.} ++(0,2) -- ++(-6,0) -- cycle;
            \path[fill=orange!80, opacity=0.2] (-3,-1) -- ++(6,0) -- node[anchor=west, opacity=1, text width=24mm] {\color{black} \footnotesize Support of $\chi_>P_b$.} ++(0,-1.5)  -- ++(-6,0) -- cycle;
        \end{tikzpicture}
    \end{subfigure}
    \hspace{.2cm}
    \begin{subfigure}
        \centering
        \begin{tikzpicture}[scale=0.68]
            \draw[->] (-3,0) node[anchor=east] {$\mu$} -- (3,0);
            \draw[->] (0,-2.5) -- (0,2.5);
            \foreach \n in {-2,-1,1,2}
                \draw[dashed] (-3,\n) -- (3,\n);
            \path[fill=blue!60, opacity=0.2] (-3,1) -- ++(6,0) -- node[anchor=west, opacity=1, text width=24mm] {\color{black} \footnotesize Support of $\chi_>P_a$.} ++(0,1.5) -- ++(-6,0) -- cycle;
            \path[fill=orange!80, opacity=0.2] (-3,0) -- ++(6,0) -- node[anchor=west, opacity=1, text width=24mm] {\color{black} \footnotesize Support of $\chi_<P_b$.} ++(0,-2)  -- ++(-6,0) -- cycle;
        \end{tikzpicture}
    \end{subfigure}
    
    \vspace{.7cm}
    \begin{subfigure}{}
        \centering
        \begin{tikzpicture}[scale=0.68]
            \draw[->] (-3,0) node[anchor=east] {$\mu$} -- (3,0);
            \draw[->] (0,-2.5) -- (0,2.5);
            \foreach \n in {-2,-1,1,2}
                \draw[dashed] (-3,\n) -- (3,\n);
            \path[fill=blue!60, opacity=0.2] (-3,1) -- ++(6,0) -- node[anchor=west, opacity=1, text width=24mm] {\color{black} \footnotesize Support of $\chi_>P_a$.} ++(0,1.5) -- ++(-6,0) -- cycle;
            \path[fill=orange!80, opacity=0.2] (-3,-1) -- ++(6,0) -- node[anchor=west, opacity=1, text width=24mm] {\color{black} \footnotesize Support of $\chi_>P_b$.} ++(0,-1.5)  -- ++(-6,0) -- cycle;
        \end{tikzpicture}
        \begin{subfigure}{}
        \centering
        \begin{tikzpicture}[scale=0.68]
            \draw[->] (-3,0) node[anchor=east] {$\mu$} -- (3,0);
            \draw[->] (0,-2.5) -- (0,2.5);
            \foreach \n in {-2,-1,1,2}
                \draw[dashed] (-3,\n) -- (3,\n);
            \path[fill=blue!60, opacity=0.2] (-3,0) -- ++(6,0) -- node[anchor=west, opacity=1, text width=24mm] {\color{black} \footnotesize Support of $\chi_<P_a$.} ++(0,2) -- ++(-6,0) -- cycle;
            \path[fill=orange!80, opacity=0.2] (-3,0) -- ++(6,0) -- node[anchor=west, opacity=1, text width=24mm] {\color{black} \footnotesize Support of $\chi_<P_b$.} ++(0,-2)  -- ++(-6,0) -- cycle;
        \end{tikzpicture}
    \end{subfigure}
    \end{subfigure}
    \caption{Graphical representation of the energy cutoff. The colored regions represent the region of the spectrum selected by $\chi_{\alpha_1} P_\pm$. In all the terms that we estimate a gap appears, which allows us to write the inequality \eqref{eqn:ineqwithone>}. }
    \label{fig:energy-cut}
\end{figure}

For every $\alpha_1,\alpha_2\in \{>,<\}$, let us define
\begin{equation}
\label{eqn:Ialphabeta}
I_{\alpha_1,\alpha_2}(\eta):=\int_0^{+\infty}dt\, \cos(\eta t)\int_{\FC_1^*} dk \, \Tr\left(\tau_{it}\left((\chi_\alpha J_j\chi_\beta)(k)\right)^{ab}\left(J_j\right)(k)^{ba}\right).
\end{equation}
Thus, we have that
\begin{equation}
\label{eqn:splittingf}
\fjj(\eta)=I_{<,<}(\eta)+I_{<,>}(\eta)+I_{>,<}(\eta)+I_{>,>}(\eta).
\end{equation}
We shall prove that the last three summands in \eqref{eqn:splittingf} are differentiable; since they are even functions, they do not contribute to $\sigma_{jj}$. In view of the spectral decomposition for $e^{H(k)t}$ and the definition of $\chi_<$ and $\chi_>$, it is easy to see that there is a gap of at least $\delta$ in the eigenvalues that remain in the expression of the last three terms. This is illustrated in Figure \ref{fig:energy-cut}.
More specifically we obtain the following inequality for some constant $C>0$
\begin{equation}
\label{eqn:ineqwithone>}    
\norm{t^p\tau_{it}\left((\chi_\alpha J_j\chi_\beta)(k)\right)^{ab}}\leq C e^{-\frac{\delta}{2} t}
\end{equation}
for all $\alpha_1, \alpha_2$: $\alpha_i$ equals $>$, for any $t>0$ and $p\in\{0,1\}$. Lebesgue's dominated convergence theorem implies that $I_{<,>}, I_{>,<}, I_{>,>}$ are $C^1$ function in $\eta\in\R$. Hence, only $I_{<,<}$ contributes to the conductivity $\sigma_{ll}$; we want to show that moments inside $B_\eps$ gives contribution to $\sigma_{jj}$. 
This can be done by a similar argument, using that outside of $B_\eps$ the eigenvalues that remain in the expression have a spectral gap, this time of length $c\eps$ for some constant $c>0$.
Therefore only $I_{<,<}$ where the integral in $k$ is performed only on $B_\eps$ contribute to $\sigma_{jj}$, proving the claim.
\end{proof}

\subsection{Final rewriting of $\sigma_{jj}$}
\label{sec:explicit-conductivity}
In this last subsection, we impose two additional conditions on the small positive parameters $\delta$ and $\eps$, introduced previously respectively in Subsection \ref{sec:reular-part-fjj}\ref{it:delta} and \ref{it:eps}.

First, letting $\Lambda_1(k),\ldots,\Lambda_N(k)$ the eigenvalues of $H(k)$, we choose $\eps\equiv\eps_\delta$ small enough so that 
\begin{equation}
\label{eqn:condition-eps-delta}
\text{if} \quad   k\in B_\eps\quad \text{ then }\quad \begin{cases}
    |\Lambda_j(k) -\mu| < \delta & \quad \text{if } \Lambda_j(k) = \Lambda_-(k) \text{ or } \Lambda_j(k) = \Lambda_+(k) \\
    |\Lambda_j(k) -\mu| > 2\delta & \quad \text{otherwise}.
    \end{cases}
\end{equation}
This condition implies that if $k\in B_\eps$ then  $\chi_<(k) = P_-(k) + P_+(k)$, where $P_-$ and $P_+$ are the projectors associated to $\Lambda_-$ and $\Lambda_+$ respectively.

Second, by Assumption \ref{assum:H}, it is possible to choose $\eps$ sufficiently small in order to have two constants $m,M>0$ such that 
\begin{equation}
\label{eqn:condition-cone-eps}
    m\abs{k-\omega_l}\leq \Lambda_+(k) - \Lambda_-(k) \leq M\abs{k-\omega_l}\quad\text{for any $k\in B_\eps^{(l)}$.}
\end{equation}

Moreover, at this point it is useful to point out that in view of Assumption \ref{assum:H}\ref{item:conical} and by using Remark \ref{rem:assum}\ref{it:eigenvalcont} the map 
\begin{equation}
\label{eqn:sumPpm}
B_\eps\ni k\mapsto P_-(k)+P_+(k)\,\text{ is $C^2$},
\end{equation}
since the spectral subset $\{\Lambda_-(k)\cup\Lambda_+(k)\}$ is uniformly separated (by a distance at least $\delta$ assuming condition \eqref{eqn:condition-eps-delta}) by the rest of the spectrum of $H(k)$ for all $k\in B_\eps$ . 

Starting from Proposition \ref{prp:regular-part-fjj} we shall prove the final expression for the longitudinal conductivity $\sigma_{jj}$ given by \eqref{eqn:main}, by showing formulas being increasingly more explicit, and involving step by step only the expansions of the two eigenvalues around the Fermi points \eqref{eqn:conical}.
To begin, we perform the time integration; this is the content of the next proposition.

\begin{proposition}
\label{prp:hjj-two-eigenvalues}
    Under Assumption \ref{assum:H}, let $\fjjsing$ be defined as in \eqref{eqn:fjjsing-def} and let $P_-(k)$, $P_+(k)$ be the projectors associated with the eigenvalues $\Lambda_-(k)$, $\Lambda_+(k)$ respectively. Choosing $\eps$ and $\delta$ so that condition \eqref{eqn:condition-eps-delta} holds, we have:
    \begin{equation}
    \label{eqn:fjjsing-two-eigenvalues-current}
    \fjjsing(\eta)=\frac{2}{(2\pi)^2}\int_{B_\eps} dk \frac{(\Lambda_-(k) - \Lambda_+(k))}{\eta^2 + (\Lambda_-(k) - \Lambda_{+}(k))^2} \Tr\big(  P_-(k)J_j(k)P_+(k)J_j(k)P_-(k)\big).
    \end{equation}
\end{proposition}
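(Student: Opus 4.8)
The plan is to start from the definition of $\fjjsing$ in \eqref{eqn:fjjsing-def} and exploit condition \eqref{eqn:condition-eps-delta}, which guarantees that on $B_\eps$ one has $\chi_<(k)=P_-(k)+P_+(k)$. First I would substitute this into the trace: since $\tau_{it}$ acts by conjugation with $e^{\mp H(k)t}$ on the $a$/$b$ components, and the only eigenvalues of $H(k)$ in the range $(\mu-2\delta,\mu+2\delta)$ on $B_\eps$ are $\Lambda_-(k)$ and $\Lambda_+(k)$, the operator $(\chi_< J_j \chi_<)(k)^{ab}$ reduces to $P_+(k)J_j(k)P_-(k)$ and $(\chi_< J_j \chi_<)(k)^{ba}$ reduces to $P_-(k)J_j(k)P_+(k)$. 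Applying $\tau_{it}$ picks up the exponential factor $e^{(\Lambda_-(k)-\Lambda_+(k))t}$, so the integrand becomes $e^{(\Lambda_-(k)-\Lambda_+(k))t}\,\Tr\big(P_+(k)J_j(k)P_-(k)J_j(k)P_+(k)\big)$, where I used cyclicity of the trace to close up the projectors. Note that $\Lambda_-(k)-\Lambda_+(k)<0$ by Assumption~\ref{item:conical}, so the exponential decays in $t$, and moreover \eqref{eqn:condition-cone-eps} bounds $|\Lambda_--\Lambda_+|$ from below by $m|k-\omega_l|>0$ away from the Fermi points, making all manipulations legitimate.

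Next I would carry out the elementary time integral $\int_0^{+\infty}dt\,\cos(\eta t)\,e^{\lambda t}$ with $\lambda:=\Lambda_-(k)-\Lambda_+(k)<0$, which equals $\tfrac{-\lambda}{\eta^2+\lambda^2}=\tfrac{\Lambda_+(k)-\Lambda_-(k)}{\eta^2+(\Lambda_-(k)-\Lambda_+(k))^2}$. Inserting this and carrying the overall factor $-\tfrac{2}{(2\pi)^2}$ through, the sign works out: $-\tfrac{2}{(2\pi)^2}\cdot\tfrac{\Lambda_+-\Lambda_-}{\eta^2+(\Lambda_--\Lambda_+)^2}=\tfrac{2}{(2\pi)^2}\cdot\tfrac{\Lambda_--\Lambda_+}{\eta^2+(\Lambda_--\Lambda_+)^2}$, matching the prefactor in \eqref{eqn:fjjsing-two-eigenvalues-current}. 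Finally, the trace $\Tr\big(P_+(k)J_j(k)P_-(k)J_j(k)P_+(k)\big)$ equals $\Tr\big(P_-(k)J_j(k)P_+(k)J_j(k)P_-(k)\big)$ by cyclicity (conjugating the middle $P_-$ to the outside), giving exactly the stated formula. To justify exchanging the $t$-integral with the $k$-integral I would invoke Fubini/Tonelli, using that $\tau_{it}((\chi_< J_j\chi_<)(k))^{ab}$ is bounded in norm by $\norm{J_j(k)}e^{(\Lambda_-(k)-\Lambda_+(k))t}$ and that $\int_{B_\eps}dk\int_0^\infty dt\,\norm{J_j(k)}e^{(\Lambda_--\Lambda_+)t}=\int_{B_\eps}dk\,\tfrac{\norm{J_j(k)}}{\Lambda_+(k)-\Lambda_-(k)}$ is finite by \eqref{eqn:condition-cone-eps}, since $|k-\omega_l|^{-1}$ is integrable in two dimensions.

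The only genuinely delicate point is the spectral bookkeeping in the first step: one must be careful that $\chi_<(k)=P_-(k)+P_+(k)$ really holds on all of $B_\eps$ (this is precisely what \eqref{eqn:condition-eps-delta} is engineered to give), and that no other eigenvalue contributes to the $ab$/$ba$ blocks after the cutoffs are applied; the $C^2$ regularity of $k\mapsto P_-(k)+P_+(k)$ on $B_\eps$ recorded in \eqref{eqn:sumPpm} ensures the integrand is well-behaved. Everything else — the time integral, cyclicity of the trace, and the dominated-convergence/Fubini step — is routine. I expect the main obstacle to be merely organizing the projector algebra cleanly, rather than any substantive analytic difficulty.
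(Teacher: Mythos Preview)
Your argument is correct and, in fact, more direct than the paper's. You work entirely in the complex-time representation \eqref{eqn:fjjsing-def}: once $\chi_<(k)=P_-(k)+P_+(k)$ on $B_\eps$, the blocks $(\chi_< J_j\chi_<)^{ab}=P_+J_jP_-$ and $(\chi_< J_j\chi_<)^{ba}=P_-J_jP_+$ are identified immediately, $\tau_{it}$ produces the single exponential $e^{(\Lambda_--\Lambda_+)t}$, and the elementary integral $\int_0^\infty \cos(\eta t)e^{\lambda t}\,dt=-\lambda/(\eta^2+\lambda^2)$ finishes the job. The paper instead first invokes Remark~\ref{rmk:complex-deformation-k-domain} to \emph{undo} the complex deformation on $B_\eps$, rewriting $\fjjsing$ back in real time as $\frac{i}{(2\pi)^2}\int_{B_\eps}dk\int_{-\infty}^0 dt\,e^{\eta t}\Tr\big(P_b[\tau_t(\chi_<J_j\chi_<),\chi_<J_j\chi_<]P_b\big)$, then expands the commutator in the $P_\pm$ basis to obtain the pair of oscillatory exponentials $e^{\pm i(\Lambda_--\Lambda_+)t}$, and integrates those against $e^{\eta t}$. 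Both routes land on the same Lorentzian kernel; your approach simply avoids the detour through Remark~\ref{rmk:complex-deformation-k-domain} and the commutator algebra, at the modest cost of justifying Fubini for the $(t,k)$-integral (which you do correctly via \eqref{eqn:condition-cone-eps} and the integrability of $|k-\omega_l|^{-1}$ in two dimensions). One small cosmetic point: your Fubini bound should carry an extra factor of $\norm{J_j(k)}$ (or $N$) coming from the second current insertion in the trace, but this is uniformly bounded on $B_\eps$ and does not affect the argument.
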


At the second step, we rewrite the expression for $\sigma_{jj}$ involving only the eigenvalues $\Lambda_{\pm}$ and their derivatives up to the second order around the Fermi points.

\begin{proposition}
\label{prp:longitudinal-conductivity-eigenvalues}
    Under Assumption \ref{assum:H}, we define the following function depending on $\eta>0$:
    \begin{equation}
    \label{eqn:zjj-definition}
    \begin{split}
        \zjj(\eta) &:= \frac{1}{(2\pi)^2}\int_{B_\eps} dk \frac{\Lambda_-(k) - \Lambda_+(k)}{\eta^2 + (\Lambda_+(k) - \Lambda_-(k))^2}\cdot \\
        & \quad\quad\cdot\left( \frac{1}{2}\partial_j^2 \left( (\Lambda_+(k) -\mu)^2 + (\Lambda_-(k) - \mu)^2  \right)   -(\partial_j\Lambda_+(k))^2 - (\partial_j\Lambda_-(k))^2\right).
    \end{split}
    \end{equation}
    Then, by choosing $\eps$ satisfying  conditions \eqref{eqn:condition-eps-delta} and \eqref{eqn:condition-cone-eps}: 
    \begin{equation}
    \label{eqn:lingitudinal-conductivity-eigenvalues}
        \sigma_{jj} = \lim_{\eta\to 0^+}\frac{1}{\eta}(\zjj(\eta) - \zjj(0^+)).
    \end{equation}
\end{proposition}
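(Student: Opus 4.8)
The plan is to start from the expression \eqref{eqn:fjjsing-two-eigenvalues-current} for $\fjjsing(\eta)$ given in Proposition~\ref{prp:hjj-two-eigenvalues} and to rewrite the matrix element $\Tr\big(P_-(k)J_j(k)P_+(k)J_j(k)P_-(k)\big)$ entirely in terms of the two bands $\Lambda_\pm(k)$ and their first two $k$-derivatives, modulo a remainder that will turn out not to affect the right derivative at $\eta=0^+$. On $B_\eps$ the pair of bands $\Lambda_\pm$ is isolated from the rest of the spectrum by \eqref{eqn:condition-eps-delta}, and by \eqref{eqn:conical}--\eqref{eqn:condition-cone-eps} (the conical crossing being one between two simple bands) $\Lambda_\pm$ are simple and mutually separated on $B_\eps\setminus\{\omega_1,\dots,\omega_n\}$; there they are $C^2$, and writing locally $P_\pm(k)=\ket{u_\pm(k)}\bra{u_\pm(k)}$, first- and second-order Rayleigh--Schr\"odinger perturbation theory give $\partial_j\Lambda_\pm=\scal{u_\pm}{\partial_jH\,u_\pm}$ and $\partial_j^2\Lambda_\pm=\scal{u_\pm}{\partial_j^2H\,u_\pm}+2\sum_{m\neq\pm}\abs{\scal{u_m}{\partial_jH\,u_\pm}}^2(\Lambda_\pm-\Lambda_m)^{-1}$, while $\Tr\big(P_-J_jP_+J_jP_-\big)=\abs{\scal{u_-}{\partial_jH\,u_+}}^2$.

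Next I would multiply the second-order formula for $\partial_j^2\Lambda_+$ by $(\Lambda_+-\mu)$ and that for $\partial_j^2\Lambda_-$ by $(\Lambda_--\mu)$ and add them. The two resonant contributions, \ie the terms $m=\mp$, combine — because $\big((\Lambda_+-\mu)-(\Lambda_--\mu)\big)/(\Lambda_+-\Lambda_-)=1$ — into precisely $2\,\Tr\big(P_-J_jP_+J_jP_-\big)$, and all the remaining contributions (the terms $\scal{u_\pm}{\partial_j^2H\,u_\pm}$ and the non-resonant sums over $m\neq\pm$, each weighted by $(\Lambda_\pm-\mu)$) are collected into a function $R(k)$. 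Using the elementary identity $\tfrac12\partial_j^2\big((\Lambda_+-\mu)^2+(\Lambda_--\mu)^2\big)-(\partial_j\Lambda_+)^2-(\partial_j\Lambda_-)^2=(\Lambda_+-\mu)\partial_j^2\Lambda_++(\Lambda_--\mu)\partial_j^2\Lambda_-$, this shows that $2\,\Tr\big(P_-J_jP_+J_jP_-\big)$ equals the bracket appearing in \eqref{eqn:zjj-definition} minus $R(k)$, for almost every $k\in B_\eps$. Plugging this into \eqref{eqn:fjjsing-two-eigenvalues-current} yields $\fjjsing(\eta)=\zjj(\eta)-\mathrm{r}(\eta)$, where $\mathrm{r}(\eta):=\frac{1}{(2\pi)^2}\int_{B_\eps}dk\,\frac{\Lambda_-(k)-\Lambda_+(k)}{\eta^2+(\Lambda_-(k)-\Lambda_+(k))^2}R(k)$. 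By Proposition~\ref{prp:regular-part-fjj} it then suffices to prove that $\lim_{\eta\to0^+}\tfrac1\eta\big(\mathrm{r}(\eta)-\mathrm{r}(0^+)\big)=0$ (which in particular forces $\zjj(0^+)$ to exist).

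To control $\mathrm{r}$, I would first note that on $B_\eps$ one has $\Lambda_-(k)\le\mu\le\Lambda_+(k)$, with equality only at the Fermi points — a consequence of \eqref{eqn:conical} and \eqref{eqn:cone} for $\eps$ small — so that $\abs{\Lambda_+-\mu}+\abs{\Lambda_--\mu}=\Lambda_+-\Lambda_-$; moreover $\abs{\Lambda_\pm-\Lambda_m}\ge\delta$ for $m\neq\pm$ on $B_\eps$ by \eqref{eqn:condition-eps-delta}, and $\norm{\partial_jH(k)}$, $\norm{\partial_j^2H(k)}$ are bounded on the compact cell $\Bril$. Hence $\abs{R(k)}\le C(\Lambda_+(k)-\Lambda_-(k))$ on $B_\eps$. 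Since $\mathrm{r}$ is manifestly even, using $\frac{\Lambda_--\Lambda_+}{\eta^2+(\Lambda_--\Lambda_+)^2}-\frac{1}{\Lambda_--\Lambda_+}=\frac{-\eta^2}{(\Lambda_--\Lambda_+)(\eta^2+(\Lambda_--\Lambda_+)^2)}$ gives $\abs{\mathrm{r}(\eta)-\mathrm{r}(0)}\le C\eta^2\int_{B_\eps}\frac{dk}{\eta^2+(\Lambda_+-\Lambda_-)^2}$. By the lower bound $\Lambda_+-\Lambda_-\ge m\abs{k-\omega_l}$ on $B_\eps^{(l)}$ from \eqref{eqn:condition-cone-eps} and the fact that each $B_\eps^{(l)}$ sits inside a disc around $\omega_l$, the planar integral is $\le C\,n\,\log(1/\eta)$ for $0<\eta<1$, so $\abs{\mathrm{r}(\eta)-\mathrm{r}(0)}\le C\eta^2\log(1/\eta)$ and therefore $\tfrac1\eta\big(\mathrm{r}(\eta)-\mathrm{r}(0^+)\big)\to0$. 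Together with Proposition~\ref{prp:regular-part-fjj} this gives \eqref{eqn:lingitudinal-conductivity-eigenvalues}.

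The main obstacle I anticipate is the bookkeeping of the perturbative computation: verifying that exactly the resonant part reproduces $\Tr\big(P_-J_jP_+J_jP_-\big)$ and that the remainder $R$ retains only ``gapped'', hence bounded, quantities, so that the pointwise bound $\abs{R}\le C(\Lambda_+-\Lambda_-)$ holds uniformly on $B_\eps$ (including up to the Fermi points). This is the step where the conical structure \eqref{eqn:conical} enters concretely. The closing estimate is then borderline — the integral $\int_{B_\eps}dk/(\eta^2+(\Lambda_+-\Lambda_-)^2)$ diverges logarithmically as $\eta\to0^+$ — and goes through only because it is multiplied by $\eta^2$ rather than by $\eta$.
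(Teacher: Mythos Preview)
Your proof is correct and reaches the same conclusion as the paper, but via a different algebraic route. The paper introduces the auxiliary operator $\tilde H(k)=(H(k)-\mu)(P_-(k)+P_+(k))$, which is $C^2$ on all of $B_\eps$ (including the Fermi points), and rewrites $\Tr(P_-J_jP_+J_jP_-)$ through the identities $\Tr\big((\partial_j\tilde H)^2\big)=\tfrac12\partial_j^2\Tr(\tilde H^2)-\Tr(\tilde H\,\partial_j^2\tilde H)$ and a decomposition $\Id=P_-+P_++P_e$. Its remainder $\xi(\eta)$ is expressed in terms of $\Tr(\tilde H^2(\partial_jP_e)^2)$ and $\Tr(\tilde H\,\partial_j^2\tilde H)$, and the paper shows $\xi'(\eta)\to 0$ by differentiating under the integral and using the same $|k-\omega_l|$ bounds you use. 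You instead apply second-order Rayleigh--Schr\"odinger perturbation theory to the simple bands $\Lambda_\pm$ and identify the resonant piece of $(\Lambda_+-\mu)\partial_j^2\Lambda_++(\Lambda_--\mu)\partial_j^2\Lambda_-$ with $2\Tr(P_-J_jP_+J_jP_-)$; your remainder $R(k)$ then consists of the diagonal $\partial_j^2H$ terms and the non-resonant sums, for which the spectral gap $|\Lambda_\pm-\Lambda_m|\ge\delta$ from \eqref{eqn:condition-eps-delta} gives the bound $|R|\le C(\Lambda_+-\Lambda_-)$ directly. Your difference-quotient estimate $|\mathrm r(\eta)-\mathrm r(0)|\le C\eta^2\log(1/\eta)$ is essentially equivalent to the paper's $|\xi'(\eta)|\le C\eta\log(1+C/\eta^2)$. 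The trade-off: your argument is more elementary and transparent about \emph{why} only the two-band resonant coupling survives, but it relies on local smooth eigenvectors $u_\pm$ on $B_\eps\setminus\{\omega_l\}$ and on simplicity of $\Lambda_\pm$ there (which indeed follows from \eqref{eqn:condition-cone-eps}); the paper's approach is basis-free and the $C^2$-regularity of $\tilde H$ and $P_e$ holds across the Fermi points without any caveat.
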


The proofs of Propositions \ref{prp:hjj-two-eigenvalues}, \ref{prp:longitudinal-conductivity-eigenvalues} are deferred to Appendix \ref{sec:aux}.

We would like to express $\sigma_{jj}$ by only using the linearization of the eigenvalues around the Fermi energy. This is a delicate point, since the expression  \eqref{eqn:lingitudinal-conductivity-eigenvalues} includes the second order derivatives of the eigenvalues $\Lambda_\pm$. This issue is circumvented integrating by parts, as the next proposition shows.

\begin{proposition}
\label{prp:linearization-conductivity}
    Let Assumption \ref{assum:H} hold true, and let $\eps$ satisfy  conditions \eqref{eqn:condition-eps-delta} and \eqref{eqn:condition-cone-eps}. Then the longitudinal conductivity $\sigma_{jj}$ can be written as
    \begin{equation}
    \label{eqn:conductivity -linearization}
        \sigma_{jj} =\frac{1}{(2\pi)^2} \sum_{l=1}^n\lim_{\eta\to 0^+}\eta \int_{B_\eps^{(l)}}dk \frac{\partial_j^2\big( |S_l(k-\omega_l)| \big)}{\eta^2 +4 |S_l(k-\omega_l)|^2}\,,
    \end{equation}
where $\omega_l$ is the $l$-th Fermi point, $S_l$ is its associated matrix as in Assumption \ref{assum:H}\ref{item:conical} and $B_\eps^{(l)}$ is a neighborhood of $\omega_l$ as defined in \eqref{eqn:definition-Bepsilon}.    
\end{proposition}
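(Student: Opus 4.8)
The plan is to start from the representation $\sigma_{jj}=\lim_{\eta\to 0^+}\frac1\eta\big(\zjj(\eta)-\zjj(0^+)\big)$ of Proposition~\ref{prp:longitudinal-conductivity-eigenvalues}, localize the integral to the $n$ disjoint sets $B_\eps^{(l)}$, and replace the eigenvalue combinations in \eqref{eqn:zjj-definition} by the conical data \eqref{eqn:conical}, keeping track of the errors. Set $D(k):=\Lambda_+(k)-\Lambda_-(k)$ and let $W(k)$ denote the bracket appearing in \eqref{eqn:zjj-definition}, so that $\zjj(\eta)=-\frac1{(2\pi)^2}\int_{B_\eps}\frac{D(k)W(k)}{\eta^2+D(k)^2}\,dk$. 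Using the identity $\frac{D}{\eta^2+D^2}-\frac1D=-\frac{\eta^2}{(\eta^2+D^2)D}$ and the fact that $B_\eps$ is the disjoint union $\bigcup_l B_\eps^{(l)}$,
\[
\frac1\eta\big(\zjj(\eta)-\zjj(0^+)\big)=\sum_{l=1}^n\frac{\eta}{(2\pi)^2}\int_{B_\eps^{(l)}}\frac{W(k)}{\big(\eta^2+D(k)^2\big)D(k)}\,dk ,
\]
all integrals converging absolutely because $D(k)\ge m|k-\omega_l|$ on $B_\eps^{(l)}$ by \eqref{eqn:condition-cone-eps}, while $W$ is bounded there (see the next step).

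The core step is to prove that, on $B_\eps^{(l)}$ and as $k\to\omega_l$,
\[
W(k)=2\,|S_l(k-\omega_l)|\;\partial_j^2\big(|S_l(k-\omega_l)|\big)+o(1),\qquad D(k)=2\,|S_l(k-\omega_l)|+o\big(|k-\omega_l|\big).
\]
Since $\Lambda_+$ and $\Lambda_-$ are not individually $C^2$ — in fact not even $C^1$ — at the Fermi point, these expansions cannot be read off term by term from \eqref{eqn:conical}. Instead one works with the symmetric combinations $\Lambda_++\Lambda_-=\Tr\big(H(P_-+P_+)\big)$, $(\Lambda_+-\mu)^2+(\Lambda_--\mu)^2=\Tr\big((H-\mu)^2(P_-+P_+)\big)$ and $D^2=(\Lambda_++\Lambda_-)^2-4\Lambda_+\Lambda_-$, each of which is $C^2$ on $B_\eps$ by \eqref{eqn:sumPpm}. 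Comparing each of these with the quadratic polynomial produced by \eqref{eqn:conical}, and using that a $C^2$ function which agrees with a quadratic polynomial up to $o(|k-\omega_l|^2)$ has the same Hessian at $\omega_l$ — hence, the second derivatives being continuous, the same Hessian up to $o(1)$ nearby — one obtains $\frac12\partial_j^2((\Lambda_+-\mu)^2+(\Lambda_--\mu)^2)=2(s_{l,1j}^2+s_{l,2j}^2)+2(a_l)_j^2+o(1)$, $\partial_j(\Lambda_++\Lambda_-)=2(a_l)_j+o(1)$ and $\partial_j(D^2)=8(S_l^*S_l(k-\omega_l))_j+o(|k-\omega_l|)$. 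Writing $(\partial_j\Lambda_+)^2+(\partial_j\Lambda_-)^2=\frac12(\partial_j(\Lambda_++\Lambda_-))^2+\frac12(\partial_j D)^2$ and then $(\partial_j D)^2=(\partial_j(D^2))^2/(4D^2)$ — which uses only first derivatives of the regular function $D^2$ — yields $(\partial_j\Lambda_+)^2+(\partial_j\Lambda_-)^2=2(a_l)_j^2+2(\partial_j|S_l(k-\omega_l)|)^2+o(1)$, whence $W(k)=2[(s_{l,1j}^2+s_{l,2j}^2)-(\partial_j|S_l(k-\omega_l)|)^2]+o(1)$; the asserted form then follows from the elementary identity $|Sq|\,\partial_j^2|Sq|=(S^*S)_{jj}-(\partial_j|Sq|)^2$ together with $(S_l^*S_l)_{jj}=s_{l,1j}^2+s_{l,2j}^2$. (As anticipated in the paragraph preceding the statement, one may equivalently integrate the $\partial_j^2$ term of \eqref{eqn:zjj-definition} by parts in $k_j$ before inserting \eqref{eqn:conical}; the resulting boundary term on $\partial B_\eps^{(l)}$, where $k$ stays at distance of order $\eps$ from $\omega_l$, is smooth and even in $\eta$ and therefore does not affect the right derivative at $0$.)

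Inserting these expansions, on $B_\eps^{(l)}$ one has $\frac{\eta W(k)}{(\eta^2+D^2)D}=\frac{\eta\,\partial_j^2|S_l(k-\omega_l)|}{\eta^2+4|S_l(k-\omega_l)|^2}+R_l(\eta,k)$, where $R_l$ collects the two remainders: one of size $\eta\,o(1)\big/\big((\eta^2+c|k-\omega_l|^2)|k-\omega_l|\big)$, coming from the expansion of $W$, and one of size $\eta\,o(|k-\omega_l|)\big/\big((\eta^2+c|k-\omega_l|^2)|k-\omega_l|^2\big)$, coming from replacing $D$ by $2|S_l(k-\omega_l)|$ in the denominator (here one uses $D\ge m|k-\omega_l|$ and the bound $|S_l(k-\omega_l)|\,|\partial_j^2|S_l(k-\omega_l)||\le(S_l^*S_l)_{jj}$). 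Passing to polar coordinates around $\omega_l$ and rescaling $|k-\omega_l|=\eta s$, each error integral is dominated by $\int_0^{\infty}\frac{\kappa(\eta s)}{1+s^2}\,ds$ with $\kappa$ bounded and $\kappa(r)\to 0$ as $r\to 0^+$, so it vanishes as $\eta\to 0^+$ by dominated convergence. Hence $\sigma_{jj}=\lim_{\eta\to 0^+}\sum_{l=1}^n\frac{\eta}{(2\pi)^2}\int_{B_\eps^{(l)}}\frac{\partial_j^2|S_l(k-\omega_l)|}{\eta^2+4|S_l(k-\omega_l)|^2}\,dk$; since each of these $n$ limits exists — as follows from the explicit evaluation performed in the step leading to \eqref{eqn:main} — the limit of the sum equals the sum of the limits, which is exactly \eqref{eqn:conductivity -linearization}.

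The main obstacle is the core step above: obtaining the $o(1)$ control on the bracket $W$ in spite of the non-differentiability of $\Lambda_\pm$ at the Fermi point. The resolution is to channel everything through the $C^2$ symmetric functions $\Lambda_++\Lambda_-$, $(\Lambda_+-\mu)^2+(\Lambda_--\mu)^2$ and $D^2$, whose second-order Taylor jet at $\omega_l$ is already determined by the linearization \eqref{eqn:conical}, and to eliminate the genuinely singular factor $\partial_j^2 D$ — which blows up like $|k-\omega_l|^{-1}$ — through the identity $(\partial_j D)^2=(\partial_j(D^2))^2/(4D^2)$, so that only first derivatives of the regular function $D^2$ ever enter the estimates.
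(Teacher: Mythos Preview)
Your proof is correct, and it takes a genuinely different route from the paper's.

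The paper rewrites the bracket in \eqref{eqn:zjj-definition} via the identity $\tfrac12\partial_j^2(f^2+g^2)-(\partial_j f)^2-(\partial_j g)^2=f\partial_j^2 f+g\partial_j^2 g$ (applied with $f=\Lambda_+-\mu$, $g=\Lambda_--\mu$) to get $W=\tfrac12(s\,\partial_j^2 s+d\,\partial_j^2 d)$, where $s=\Lambda_++\Lambda_--2\mu$ and $d=D$. The $s$ term is harmless because $s$ is $C^2$ and $O(|k-\omega_l|)$; the $d$ term, however, still carries the singular factor $\partial_j^2 d$, bounded only by $C/|k-\omega_l|$. The paper deals with this by first replacing $d^2$ in the denominator by $4|S_l(k-\omega_l)|^2$ and then performing a \emph{double integration by parts} in $k_j$: move $\partial_j^2$ from $d$ onto $1/(\eta^2+4|S_l(k-\omega_l)|^2)$, replace the resulting $d$ by its linearization $2|S_l(k-\omega_l)|$, and integrate by parts back. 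Several boundary terms on $\partial B_\eps^{(l)}$ must be shown to vanish as $\eta\to 0^+$.

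You bypass the singular $\partial_j^2 D$ altogether. Your key observation is that the three symmetric combinations $(\Lambda_+-\mu)^2+(\Lambda_--\mu)^2$, $\Lambda_++\Lambda_-$, and $D^2$ are all $C^2$ on $B_\eps$ (via \eqref{eqn:sumPpm}), so their second-order Taylor jets at $\omega_l$ are determined by the $o(|k-\omega_l|)$ expansion \eqref{eqn:conical}; then the identity $(\partial_j D)^2=(\partial_j(D^2))^2/(4D^2)$ lets you express $W$ using only \emph{first} derivatives of $D^2$, yielding $W=2|S_l(k-\omega_l)|\,\partial_j^2|S_l(k-\omega_l)|+o(1)$ directly. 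In effect, you show that the bracket $W$ is itself bounded with an explicit conical leading term --- the singularity of $\partial_j^2 D$ is cancelled pointwise by the factor $D$ --- so no integration by parts is needed and there are no boundary terms to track. The remaining error control (replacing $D$ by $2|S_l(k-\omega_l)|$ in the denominator, rescaling, dominated convergence) is the same in spirit as the paper's.

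Both approaches rely on the $C^2$ regularity of $P_-+P_+$ on $B_\eps$. Your route is shorter and more transparent about \emph{why} the singularities cancel; the paper's double integration by parts is a bit more mechanical and has the minor advantage of isolating the quantity $\partial_j^2 d$ that one would naturally want to linearize, at the cost of the boundary-term bookkeeping.
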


\begin{proof}
It is useful to introduce the following auxiliary functions. For any $k\in B_\eps$:
\begin{equation*}
    s(k):= (\Lambda_+(k)-\mu) + (\Lambda_-(k)-\mu) \qquad d(k) := (\Lambda_+(k)-\mu) - (\Lambda_-(k)-\mu)\equiv\Lambda_+(k)-\Lambda_-(k),
\end{equation*}
in terms of which:
\begin{equation}
\label{eqn:inversion-s-d}
    \Lambda_+(k) - \mu = \frac{s(k) + d(k)}{2}, \qquad \Lambda_-(k) - \mu = \frac{s(k) - d(k)}{2}.
\end{equation}
Notice that $B_\eps\ni k\mapsto s(k)=\Tr(( H(k)-\mu)(P_-(k)+P_+(k)))$ is $C^2$ by using \eqref{eqn:sumPpm} and Assumption \ref{assum:H}\ref{item:perselfadj}. On the other hand, the function $B_\eps\ni k\mapsto d(k)$ is not $C^1$ since $\Lambda_\pm$ are in general not $C^1$. Furthermore, we claim that:
\begin{equation}
\label{eqn:inequalities-derivative-d}
\begin{split}
    |\partial_j d(k)| &\leq C  \\
    |\partial_j^2 d(k)| &\leq  \frac{C}{|k-\omega_l|}
\end{split}
\qquad \text{for all } k \in B_\eps^{(l)}\setminus\{\omega_l\},
\end{equation}
where the neighborhood $B_\eps^{(l)}$ of the Fermi point $\omega_l$ is defined in \eqref{eqn:definition-Bepsilon}.
We shall obtain the above inequalities by proving them for the eigenvalues $\Lambda_\pm$. Let us analyze only the eigenvalue $\Lambda_-$, the same argument applies to $\Lambda_+$. In view of the identity $\Lambda_-(k)=\Tr(H(k)P_-(k))$ we have that, for every $k \in B_\eps \setminus\{\omega_1 ,\ldots, \omega_n \}$:
\begin{equation}
\label{eqn:derLambda-}
\partial_j\Lambda_-(k)=\Tr(\partial_j H(k) P_-(k)).
\end{equation}
Thus, the first inequality in \eqref{eqn:inequalities-derivative-d} follows from \eqref{eqn:derLambda-}. 
Let us now prove the second inequality in \eqref{eqn:inequalities-derivative-d}. To this end, we observe that, for every $k \in B_\eps^{(l)} \setminus\{\omega_l \}$: 
\[
\abs{\partial^2_j\Lambda_-(k)}\leq \abs{\Tr(\partial^2_j H(k)P_-(k))}+\abs{\Tr(\partial_j H(k)\partial_j P_-(k))}\leq \frac{C}{|k-\omega_l|}, 
\]
where we have applied Lemma \ref{lem:norm-derivative-projector-estimate}. Thus, the second inequality in \eqref{eqn:inequalities-derivative-d} holds. 

By Proposition \ref{prp:longitudinal-conductivity-eigenvalues} the longitudinal conductivity is given by $\sigma_{jj} = \lim_{\eta\to 0^+}\frac{1}{\eta}(\zjj(\eta) - \zjj(0^+))$, where the function $\zjj(\eta)$ is defined as in \eqref{eqn:zjj-definition}; we shall now rewrite this expression in terms of $s(k)$ and $d(k)$. For any $f,g$, whose first and second partial derivatives exist, we have:
\begin{equation}
\label{eqn:dersumsquares}
 \frac{1}{2}\partial_j^2(f^2 +g^2) - (\partial_j f)^2 -  (\partial_j g)^2 =  f\partial_j^2 f +  g\partial_j^2 g.
\end{equation}
Thus:
\begin{equation*}
\frac{1}{2}\partial_j^2 \left( (\Lambda_+(k) -\mu)^2 + (\Lambda_-(k) - \mu)^2  \right)   -(\partial_j\Lambda_+(k))^2 - (\partial_j\Lambda_-(k))^2
=\frac{1}{2}\(s\,\partial_j^2s +d\,\partial_j^2 d \),
\end{equation*}
where in the second equality we have used \eqref{eqn:inversion-s-d}. Therefore, we obtain:
\begin{equation*}
 \zjj(\eta) = -\frac{1}{2(2\pi)^2}\int_{B_\eps} dk \frac{d(k)}{\eta^2 + d^2(k)}\(s(k)\,\partial_j^2s(k) +d(k)\,\partial_j^2 d(k) \)   
\end{equation*}
which implies:
\begin{equation}
\label{eqn:sigmainsed}
\sigma_{jj}=\frac{1}{2}\frac{1}{(2\pi)^2}\lim_{\eta \to 0^+} \(\eta \int_{B_\eps}dk\frac{s(k)\partial_j^2 s(k)}{\left(\eta^2 + d^2(k)\right)d(k)} + \eta\int_{B_\eps}dk\frac{\partial_j^2 d(k)}{\eta^2 + d^2(k)} \).   
\end{equation}
It is easy to see that the contribution coming from the first term in the right-hand side of \eqref{eqn:sigmainsed} vanishes, by using $\abs{s(k)}\leq C\abs{k-\omega_l}$ in view of \eqref{eqn:conical}, $\abs{\partial_j^2 s(k)}\leq C$ in view of its $C^2$ regularity and \eqref{eqn:condition-cone-eps} for every $k\in B_\eps^{(l)}$.
To conclude, we shall prove that only the linearization of $d(k)$ contributes to $\sigma_{jj}$. To this end, in view of \eqref{eqn:conical} we observe that  
\begin{equation}
\label{eqn:d(k)}
d(k)=2\abs{S_l(k-\omega_l)}+r_l(k-\omega_l)\qquad\text{for every $k\in B_\eps^{(l)}$},    
\end{equation}
where $r_l(k-\omega_l)=o(\abs{k-\omega_l})$. Therefore:
\begin{equation*}
\frac{1}{\eta^2 + d^2(k)}=\frac{1}{\eta^2 + 4 \abs{S_l(k-\omega_l)}^2}+\frac{R_l(k-\omega_l)}{(\eta^2 + d^2(k))(\eta^2 + 4 \abs{S_l(k-\omega_l)}^2)},   
\end{equation*}
where $R_l(k-\omega_l)=o(\abs{k-\omega_l}^2)$ for any $k\in B_\eps^{(l)}$. 
We have that:
\begin{equation}
\label{eqn:linsecterm}
\lim_{\eta\to 0^+}\eta\int_{B_\eps^{(l)}}dk\frac{\partial_j^2 d(k) R_l(k-\omega_l)}{(\eta^2 + d^2(k))(\eta^2 + 4 \abs{S_l(k-\omega_l)}^2)}=0,
\end{equation}
using \eqref{eqn:condition-cone-eps}, \eqref{eqn:inequalities-derivative-d} and dominated convergence theorem.
Thus, we obtain that:
\begin{equation}
\label{eqn:sigmainsed2}
\sigma_{jj}=\frac{1}{2}\frac{1}{(2\pi)^2}\sum_{l=1}^n\lim_{\eta \to 0^+}  \eta\int_{B_\eps^{(l)}}dk\frac{\partial_j^2 d(k)}{\eta^2 + 4 \abs{S_l(k-\omega_l)}^2}.   
\end{equation}
To conclude, we are left with showing that we can replace $d(k)$ at the numerator with its linearization. To this end, we perform a double integration by parts, proving that the boundary terms do not contribute to $\sigma_{jj}$. We have:
\begin{equation}
\label{eqn:first-integration-parts}
\begin{split}
	 \int_{B_\eps^{(l)}}dk \frac{\partial_j^2 d(k)}{\eta^2 + 4|S_l(k-\omega_l)|^2} &= 
    \int_{\partial B_\eps^{(l)}} d\sigma (k) \frac{\nu_j \partial_j d(k)}{\eta^2 + 4|S_l(k-\omega_l)|^2} \\
    &\quad - \int_{\partial B_\eps^{(l)}}d\sigma (k)\, \nu_j \partial_j\left(\frac{1}{\eta^2 + 4|S_l(k-\omega_l)|^2}\right) d(k)\\
    &\quad+ \int_{B_\eps^{(l)}}dk \,\partial_j^2\left(\frac{1}{\eta^2 + 4|S_l(k-\omega_l)|^2}\right)d(k),
\end{split}
\end{equation}
where $\nu_j$ is the $j$-th component of the outward unit normal vector to $\partial B_\eps^{(l)}$ and $d\sigma (k)$ is the line element. 
It is easy to show that both the boundary terms do not contribute to $\sigma_{jj}$ in the limit $\eta \to 0^+$, by using that $\partial_j \abs{S_l(k-\omega_l)}\leq C$ and the first inequality in \eqref{eqn:inequalities-derivative-d}.
Now substituting \eqref{eqn:d(k)} in the last term of \eqref{eqn:first-integration-parts} only the part containing $|2S_l(k-\omega_l)|$ in the numerator gives contribution to $\sigma_{jj}$ in the adiabatic limit.
Therefore we obtain:
\begin{equation}\label{eq:fin}
\sigma_{jj}=\frac{1}{(2\pi)^2}\sum_{l=1}^n\lim_{\eta \to 0^+}  \eta \int_{B_\eps^{(l)}}dk \,\partial_j^2\left(\frac{1}{\eta^2 + 4|S_l(k-\omega_l)|^2}\right)\abs{S_l(k-\omega_l)}.
\end{equation}
Performing again double integration by parts concludes the proof.
\end{proof}

The next lemma concludes the proof of Theorem \ref{thm:main}.

\begin{lemma}
\label{lem:final}  
Under the hypotheses of Proposition \ref{prp:linearization-conductivity}, we have:
\begin{equation}\label{eq:ff}
\sigma_{jj} = \frac{1}{16}\sum_{l=1}^n \frac{s_{l,1j}^2 + s_{l,2j}^2}{|\det S_l|}.
\end{equation}
\end{lemma}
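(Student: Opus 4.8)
\emph{Proof proposal.} The plan is to take as starting point the representation of $\sigma_{jj}$ obtained in Proposition \ref{prp:linearization-conductivity},
\begin{equation*}
\sigma_{jj} =\frac{1}{(2\pi)^2} \sum_{l=1}^n\lim_{\eta\to 0^+}\eta \int_{B_\eps^{(l)}}dk \,\frac{\partial_j^2\big( |S_l(k-\omega_l)| \big)}{\eta^2 +4 |S_l(k-\omega_l)|^2}\,,
\end{equation*}
and to evaluate the $n$ limits explicitly by linearly straightening each cone. First I would change variables $p := S_l(k-\omega_l)$: then $|S_l(k-\omega_l)|=|p|$, the Jacobian is $|\det S_l|^{-1}$, and by \eqref{eqn:definition-Bepsilon} the domain $B_\eps^{(l)}$ is mapped bijectively (for $\eps$ small, as chosen in Subsection \ref{sec:reular-part-fjj}) onto the Euclidean disc $\{|p|<\eps/2\}$. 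Using the chain rule $\partial_{k_j}=\sum_{i=1,2}s_{l,ij}\,\partial_{p_i}$ together with $\partial_{p_i}\partial_{p_{i'}}|p|=\delta_{ii'}|p|^{-1}-p_ip_{i'}|p|^{-3}$ on $\R^2\setminus\{0\}$, one obtains
\begin{equation*}
\partial_j^2\big(|S_l(k-\omega_l)|\big)=\frac{|c_l|^2}{|p|}-\frac{(c_l\cdot p)^2}{|p|^3}\,,\qquad c_l:=(s_{l,1j},s_{l,2j}),
\end{equation*}
i.e.\ $c_l$ is the $j$-th column of $S_l$ and $|c_l|^2=s_{l,1j}^2+s_{l,2j}^2$.

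Next I would pass to polar coordinates $p=\rho\,\omega$ with $\omega=(\cos\theta,\sin\theta)$ and $\omega^\perp=(-\sin\theta,\cos\theta)$. Since $\{\omega,\omega^\perp\}$ is an orthonormal frame, the numerator equals $\rho^{-1}\big(|c_l|^2-(c_l\cdot\omega)^2\big)=\rho^{-1}(c_l\cdot\omega^\perp)^2$; the $\rho$ coming from $dp=\rho\,d\rho\,d\theta$ cancels this, the angular integral is $\int_0^{2\pi}(c_l\cdot\omega^\perp)^2\,d\theta=\pi|c_l|^2$, and the radial integral is elementary, giving
\begin{equation*}
\eta\int_{B_\eps^{(l)}}dk\,\frac{\partial_j^2(|S_l(k-\omega_l)|)}{\eta^2+4|S_l(k-\omega_l)|^2}
=\frac{\pi|c_l|^2}{|\det S_l|}\,\eta\int_0^{\eps/2}\frac{d\rho}{\eta^2+4\rho^2}
=\frac{\pi|c_l|^2}{2|\det S_l|}\,\arctan\!\big(\eps/\eta\big)\,.
\end{equation*}
As $\eta\to0^+$ the arctangent tends to $\pi/2$, so the $l$-th limit equals $\pi^2|c_l|^2/(4|\det S_l|)$; summing over $l$ and using $(2\pi)^{-2}\cdot\pi^2/4=1/16$ produces precisely \eqref{eq:ff}. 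As a sanity check, only $|c_l|^2$ and $|\det S_l|$ enter the answer, in agreement with Lemma \ref{lem:sufficient-longitudinal-quantization}.

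The step I expect to be the genuine obstacle is the rigorous justification of the above near the Fermi point, where $\partial_j^2(|S_l(k-\omega_l)|)$ is only $L^1_{\mathrm{loc}}$ and truly singular. One must check that the pointwise identity for $\partial_j^2|p|$ valid on $\R^2\setminus\{0\}$ is also the correct locally integrable function entering \eqref{eqn:conductivity -linearization}, i.e.\ that straightening the cone and differentiating do not create a Dirac mass at the apex. This can be settled either by a short distributional computation (test against a smooth function and verify that the surface terms over $\{|p|=r\}$ vanish as $r\to0$, using that $|p|$ is Lipschitz), or by excising a disc $\{|p|<r\}$ from $B_\eps^{(l)}$, carrying out the computation on the resulting annulus, and letting $r\to0$ after checking that the excised contribution is $O(r)$ uniformly in $\eta$. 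Everything else is the explicit integral displayed above, and the limit $\eta\to0^+$ needs no extra estimate because it is carried by a single $\arctan$.
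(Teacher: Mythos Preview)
Your proof is correct and follows essentially the same route as the paper: a linear change of variables to straighten the cone, explicit computation of $\partial_j^2|p|$, angular integration (the paper uses oddness of the cross terms and the swap $k_1\leftrightarrow k_2$ where you use the orthonormal-frame identity $|c_l|^2-(c_l\cdot\omega)^2=(c_l\cdot\omega^\perp)^2$), and the same radial $\arctan$ limit. Your extra care about a possible Dirac mass at the apex is well placed and the check you outline works; the paper leaves this implicit in its double integration by parts.
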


\begin{proof}[Proof of Lemma \ref{lem:final}]
For the sake of readability, we shall omit the $l$-dependence of the matrix element $s_{ij}\equiv s_{l,ij}$. In view of equality \eqref{eqn:conductivity -linearization}, we compute:
\begin{equation*}
\partial_j^2\left( |S_l k| \right) = \sum_{\alpha =1}^2 \left(\frac{(s_{\alpha j})^2}{|S_l k|} - \frac{(S_l k)_\alpha^2 (s_{\alpha j})^2}{|S_l k|^3} \right)  - \sum_{\alpha\neq \beta}^{1,2}\frac{(S_l k)_\alpha(S_l k)_\beta (s_{\alpha j})(s_{\beta j})}{|S_l k|^3},
\end{equation*}
where $(S_l k)_\alpha$ is the $\alpha$-th component of the vector $S_l k$.
Thus,
\begin{equation}
\label{eqn:main1}
\begin{aligned}
\int_{B_\eps^{(l)}}dk \frac{\partial_j^2\big( |S_l(k-\omega_l)| \big)}{\eta^2 +4|S_l(k-\omega_l)|^2} 
    =& \frac{1}{2\abs{\det S_l}}\int_{\abs{k}<\eps} dk \frac{1}{\eta^2 + \abs{k}^2} \sum_{\alpha =1}^2 \left(\frac{(s_{\alpha j})^2}{\abs{k}} - \frac{k_\alpha^2 (s_{\alpha j})^2}{\abs{k}^3} \right) \\
	& -\frac{1}{2\abs{\det S_l}} \int_{\abs{k}<\eps} dk \frac{1}{\eta^2 + \abs{k}^2}\sum_{\alpha\neq \beta}^{1,2}\frac{k_\alpha k_\beta s_{\alpha j}s_{\beta j}}{\abs{k}^3},  
\end{aligned}
\end{equation}
where in the last equality we have used the change of variable $k\mapsto 2S_l(k-\omega_l)$, recalling also that $B_\eps^{(l)} = \left\{ k\in \Bril \, : \, 2|S_l(k-\omega_l)| < \eps    \right\}$.
Observe that the second term on the right-hand side of \eqref{eqn:main1} vanishes since the integrand is odd in $k$. Concerning the first term, we can rewrite it as:
\begin{equation*}
\frac{1}{2\abs{\det S_l}}\int_{\abs{k}<\eps} dk \frac{1}{\eta^2 + \abs{k}^2} \sum_{\alpha =1}^2 \left(\frac{(s_{\alpha j})^2}{\abs{k}} - \frac{k_\alpha^2 (s_{\alpha j})^2}{\abs{k}^3} \right)=\frac{(s_{1j})^2+(s_{2j})^2}{4\abs{\det S_l}}\int_{\abs{k}<\eps} dk \frac{1}{(\eta^2 + \abs{k}^2)\abs{k} },  
\end{equation*}
where we have implemented the change of variable $k_1\leftrightarrow k_2$ in the second term in brackets above. In conclusion:
\begin{align*}
\sigma_{jj} 
&=\frac{1}{(2\pi)^2} \sum_{l=1}^n \frac{(s_{1j})^2+(s_{2j})^2}{4\abs{\det S_l}}\lim_{\eta\to 0^+}\eta\int_{\abs{k}<\eps} dk \frac{1}{(\eta^2 + \abs{k}^2)\abs{k} }
=\frac{1}{16}\sum_{l=1}^n \frac{(s_{1j})^2+(s_{2j})^2}{\abs{\det S_l}}.
\end{align*}
This concludes the proof of (\ref{eq:ff}).
\end{proof}

\section{Proof of Theorem \ref{thm:response}}\label{sec:proofresp}

Following \cite{GLMP24}, we start by approximating the original dynamics with a suitable auxiliary evolution. Let us denote by $\widetilde{\mathcal{H}}(t)$ the time-dependent Hamiltonian after having replaced the adiabatic parameter $\eta$ by $\eta_{\beta}$; the reason for this replacement will be discussed below. Let us denote by $\widetilde{\Gamma}(t)$ the solution of (\ref{eq:dyn}), with $\mathcal{H}(t)$ replaced by $\widetilde{\mathcal{H}}(t)$. Proceeding as in \cite{GLMP24}, Proposition 4.1, we have:
\begin{equation}\label{eq:comp}
\Big| \Tr_{\mathcal{F}} \mathcal{O}_{z} \Gamma(t) -  \Tr_{\mathcal{F}} \mathcal{O}_{z} \widetilde{\Gamma}(t) \Big|\leq  \frac{C_{\mathcal{O}}}{\eta^{4} \beta}\;.
\end{equation}
The proof of (\ref{eq:comp}) follows from a standard application of Lieb-Robinson bounds for non-autonomous dynamics. The second error term in (\ref{eq:Rest}) is produced by this comparison. Next, let us consider the Duhamel expansion for $\widetilde{\Gamma}(t)$. We have:
%
%
%
\begin{equation}\label{eq:duha}
\begin{split}
&\Tr_{\mathcal{F}} \mathcal{O}_{z} \widetilde{\Gamma}(0) - \Tr_{\mathcal{F}} \mathcal{O}_{z} \Gamma_{\beta,L} \\
&= \sum_{m\geq 1} (-i)^{m} \int_{-\infty \leq t_{m} \leq \ldots \leq t_{1} \leq 0} d\underline{t} \, \Big\langle \Big[\cdots \Big[ \Big[ \mathcal{O}_{z}, \tau_{t_{1}}(\mathcal{P}(\eta_{\beta} t_{1})) \Big], \tau_{t_{2}}(\mathcal{P}(\eta_{\beta} t_{2})) \Big] \cdots \tau_{t_{n}}(\mathcal{P}(\eta_{\beta} t_{m})) \Big] \Big\rangle_{\beta,L}\;,
\end{split}
\end{equation}
with $\langle \cdot \rangle_{\beta,L}$ the average with respect to the equilibrium state $\Gamma_{\beta,L}$. Recall that
\begin{equation}\label{eq:expP}
\mathcal{P}(\eta_{\beta} t) = \sum_{n\geq 1} \frac{e^{n \eta_{\beta} t}}{n!} \mathcal{P}_{n}^{A}\;,
\end{equation}
with $\mathcal{P}_{n}^{A}$ finite-ranged. At finite $\beta, L$, the convergence of the Duhamel series, non-uniformly in $\eta$, follows from standard arguments (see {\it e.g.} \cite{GLMP24}, Section 4.1). Our goal will be to improve the estimates, and to gain uniformity in $\eta$ and in $L$. 

Using (\ref{eq:expP}), we rewrite the argument of the integral in (\ref{eq:duha}) as:
\begin{equation}\label{eq:nth}
\begin{split}
&\Big\langle \Big[\cdots \Big[ \Big[ \mathcal{O}_{z}, \tau_{t_{1}}(\mathcal{P}(\eta_{\beta} t_{1})) \Big], \tau_{t_{2}}(\mathcal{P}(\eta_{\beta} t_{2})) \Big] \cdots \tau_{t_{n}}(\mathcal{P}(\eta_{\beta} t_{m})) \Big] \Big\rangle_{\beta,L} \\
&\quad = \sum_{n_{1} \geq 1}\cdots \sum_{n_{m}\geq 1} \frac{e^{n_{1} \eta_{\beta} t_{1}}}{n_{1}!} \cdots \frac{e^{n_{m} \eta_{\beta} t_{m}}}{n_{m}!}\Big\langle \Big[\cdots \Big[ \Big[ \mathcal{O}_{z}, \tau_{t_{1}}(\mathcal{P}^{A}_{n_{1}}) \Big], \tau_{t_{2}}(\mathcal{P}^{A}_{n_{2}}) \Big] \cdots \tau_{t_{n}}(\mathcal{P}^{A}_{n_{m}}) \Big] \Big\rangle_{\beta,L}\;,
\end{split}
\end{equation}
where $\mathcal{P}^{A}_{n_{1}}$ is of order $n_{1}$ in $A$, recall (\ref{eq:PnA}). In particular, (\ref{eq:duha}) can be rewritten as:
\begin{equation}\label{eq:lth}
\begin{split}
&\Tr_{\mathcal{F}} \mathcal{O}_{z} \widetilde{\Gamma}(0) - \Tr_{\mathcal{F}} \mathcal{O}_{z} \Gamma_{\beta,L} \\
&\qquad = \sum_{\ell \geq 1} \sum_{m = 1}^{\ell} (-i)^{m} \sum_{\substack{\{n_{i}\}_{i=1}^{m}: n_{i}\geq 1, \\ n_{1} + \ldots + n_{m} = \ell}} \int_{-\infty \leq t_{m} \leq t_{m-1} \leq \ldots \leq t_{1} \leq 0} d\underline{t} \,\frac{e^{n_{1} \eta_{\beta} t_{1}}}{n_{1}!} \cdots \frac{e^{n_{m} \eta_{\beta} t_{m}}}{n_{m}!} \\
&\quad\qquad \cdot \Big\langle \Big[\cdots \Big[ \Big[ \mathcal{O}_{z}, \tau_{t_{1}}(\mathcal{P}^{A}_{n_{1}}) \Big], \tau_{t_{2}}(\mathcal{P}^{A}_{n_{2}}) \Big] \cdots \tau_{t_{n}}(\mathcal{P}^{A}_{n_{m}}) \Big] \Big\rangle_{\beta,L} \\
&\qquad\equiv \sum_{\ell \geq 1} \Big(\Tr_{\mathcal{F}} \mathcal{O}_{z} \widetilde{\Gamma}(0)\Big)^{(\ell)}
\end{split}
\end{equation}
where $(\cdot)^{(\ell)}$ denotes the $\ell$-th order in $A$. Following \cite{GLMP24}, Lemma 4.2, we use that the real-time integral of the right-hand side of (\ref{eq:lth}) can be conveniently rewritten in imaginary time. We have:
\begin{equation}\label{eq:wick}
\begin{split}
&\sum_{\substack{\{n_{i}\}_{i=1}^{m}: n_{i}\geq 1, \\ n_{1} + \ldots + n_{m} = \ell}}\int_{-\infty \leq t_{m} \ \leq \ldots \leq t_{1} \leq 0} d\underline{t} \,  \frac{e^{n_{1} \eta_{\beta} t_{1}}}{n_{1}!} \cdots \frac{e^{n_{m} \eta_{\beta} t_{m}}}{n_{m}!}\Big\langle \Big[\cdots \Big[ \Big[ \mathcal{O}_{z}, \tau_{t_{1}}(\mathcal{P}^{A}_{n_{1}}) \Big], \tau_{t_{2}}(\mathcal{P}^{A}_{n_{2}}) \Big] \cdots \tau_{t_{n}}(\mathcal{P}^{A}_{n_{m}}) \Big] \Big\rangle_{\beta,L} \\
& \quad = \frac{(-i)^{m}}{m!} \sum_{\substack{\{n_{i}\}_{i=1}^{m}: n_{i}\geq 1, \\ n_{1} + \ldots + n_{m} = \ell}}\int_{[0,\beta]^{m}} d\underline{t}\, \frac{e^{i n_{1} \eta_{\beta} t_{1}}}{n_{1}!} \cdots \frac{e^{i n_{m} \eta_{\beta} t_{m}}}{n_{m}!} \langle {\bf T} \gamma_{t_{1}}(\mathcal{P}^{A}_{n_{1}}) \;; \gamma_{t_{2}}(\mathcal{P}^{A}_{n_{2}})\;; \cdots \;; \gamma_{t_{m}}(\mathcal{P}^{A}_{n_{m}})\;; \mathcal{O}_{z}  \rangle_{\beta,L}\;.
\end{split}
\end{equation}
The reason for approximating $\eta$ with $\eta_{\beta}$ is this identity, which only holds for these special values of the adiabatic parameter (``bosonic Matsubara frequencies''). Let us explain the meaning of the objects involved in the right-hand side. The operator $\gamma_{s}(\cdot)$ is the Euclidean evolution,
\begin{equation}
\gamma_{s}(\mathcal{A}) = e^{s(\mathcal{H} - \mu \mathcal{N})} \mathcal{A} e^{-s(\mathcal{H} - \mu \mathcal{N})}\;,\qquad 0\leq s\leq \beta\;.
\end{equation}
The argument of the integral in (\ref{eq:wick}) is a time-ordered cumulant, see \cite{GLMP24}, Eq. (2.14), for a definition in terms of the usual correlations. The symbol ${\bf T}$ denotes the fermionic time-ordering. It acts as follows on any fermionic monomial as, omitting the $\rho$-labels:
\begin{equation}
{\bf T} \gamma_{t_{1}}(a^{\sharp_{1}}_{x_{1}}) \cdots \gamma_{t_{n}}(a^{\sharp_{n}}_{x_{n}}) = (-1)^{\pi} \gamma_{t_{\pi(1)}}(a^{\sharp_{\pi(1)}}_{x_{\pi(1)}}) \cdots \gamma_{t_{\pi(n)}}(a^{\sharp_{\pi(n)}}_{x_{\pi(n)}})
\end{equation} 
for all $t_{i}$ in $[0,\beta)$, with $a^{\sharp}$ equal to $a$ or $a^{*}$, and where $\pi$ is the permutation such that $t_{\pi(1)} > t_{\pi(2)} > \cdots > t_{\pi(n)}$. In case two times are equal, the ambiguity is solved by normal order. Since the Hamiltonian is invariant under translations, the Gibbs state is invariant under translations as well. Also, the time-ordered cumulants, initially defined only for $t_{i} \in [0,\beta)$, can be periodically extended to all $t_{i}\in \mathbb{R}$, ultimately thanks to the Kubo-Martin-Schwinger identity. The resulting function is periodic with period $\beta$ in all times, it is continuous at $t_{i} \in \beta\mathbb{Z}$ and it is time-translation invariant. See \cite{GLMP24} for a review of these known facts.

In our case, the time-ordered cumulant in (\ref{eq:wick}) is evaluated for a quasi-free state, and therefore it can be computed in terms of the two-point function, 
\begin{equation}
g_{\rho_{1}\rho_{2}}((t_{1},x_{1}); (t_{2}, x_{2})) := \langle {\bf T} \gamma_{t_{1}}(a_{x_{1},\rho_{1}}) \gamma_{t_{2}}(a^{*}_{x_{2},\rho_{2}}) \rangle_{\beta,L}
\end{equation}
using the fermionic Wick's rule. Given the form (\ref{eq:PnA}) of the operators $\mathcal{P}^{A}_{n}$, the outcome is particularly simple: it is given by a sum of loop Feynman diagrams. To introduce them, let us represent the two-point function in momentum space, as follows:
\begin{equation}\label{eq:2pt2}
\begin{split}
g((t_{1},x_{1}); (t_{2}, x_{2})) &= \lim_{N\to \infty} \frac{1}{\beta L^{2}}  \sum_{\substack{k_{0} \in M_{\beta} \\ |k_{0}| \leq N}} \sum_{k \in \mathcal{C}_{1,L}^{*}} e^{ik_{0} (t_{1} - t_{2})} e^{i k\cdot (x_{1} - x_{2})} \frac{1}{ik_{0} + H(k) - \mu} \\
&\equiv \lim_{N\to \infty} \frac{1}{\beta L^{2}}  \sum_{\substack{k_{0} \in M_{\beta} \\ |k_{0}| \leq N}}\sum_{k \in \mathcal{C}_{1,L}^{*}} e^{ik_{0} (t_{1} - t_{2})} e^{i k\cdot (x_{1} - x_{2})} \hat g({\bf k})
\end{split}
\end{equation}
where: $M_{\beta} = \frac{2\pi}{\beta} (\mathbb{Z} + 1/2)$ is the set of fermionic Matsubara frequencies; ${\bf k} = (k_{0}, k) \in M_{\beta} \times \FC_{1,L}^{*}$ with $\FC_{1,L}^{*}$ as in (\ref{eq:CL}); and $\hat g({\bf k})^{-1} = ik_{0} + H(k) - \mu$. The identity (\ref{eq:2pt2}) holds at non-coinciding space-time points, and it is all we will need to evaluate the cumulants in (\ref{eq:wick}). For $\underline{\alpha} = (\alpha_{1}, \ldots, \alpha_{n})$, and $\underline{p} = (p_{1}, \ldots, p_{n})$, let:
\begin{equation}
\hat A^{(n)}_{\theta,\underline{\alpha}}(\underline{p}) := \prod_{i=1}^{n} \hat A_{\theta,\alpha_{1}}(p_{1}) \cdots \hat A_{\theta,\alpha_{n}}(p_{n})
\end{equation}
where $\hat A_{\theta,\alpha}(p)$ is defined in (\ref{eq:hatA}). We rewrite (\ref{eq:PnA}) in a more compact form as, setting $P = \sum_{i=1}^{n} p_{i}$:
\begin{equation}
\begin{split}
\mathcal{P}^{A}_{n} &= \frac{1}{L^{2}} \sum_{k} \frac{1}{L^{2n}} \sum_{\underline{p}} \sum_{\underline{\alpha}} \sum_{\rho,\rho'} \hat A^{(n)}_{\theta,\underline{\alpha}}(\underline{p}) \hat a^{*}_{k + P,\rho} \hat a_{k,\rho'} \hat Q_{n;\underline{\alpha}, \rho, \rho'}(k,\underline{p}) \\
&\equiv \frac{1}{L^{2n}} \sum_{\underline{p}} \sum_{\underline{\alpha}}\hat A^{(n)}_{\theta,\underline{\alpha}}(\underline{p})  \hat W_{n,\underline{\alpha}}(\underline{p})\;.
\end{split}
\end{equation}
%
%
%
%
%
Let:
\begin{equation}
\hat W_{n,\underline{\alpha}}(n\eta_{\beta}; \underline{p}) := \int_{0}^{\beta} dt\, e^{i n \eta_{\beta} t} \gamma_{t}(\hat W_{n,\underline{\alpha}}(\underline{p}))\;.
\end{equation}
Thus, we have:
\begin{equation}
\begin{split}
&\int_{[0,\beta]^{m}} d\underline{t}\, e^{i n_{1} \eta_{\beta} t_{1}} \cdots e^{i n_{m} \eta_{\beta} t_{m}} \langle {\bf T} \gamma_{t_{1}}(\mathcal{P}^{A}_{n_{1}}) \;; \gamma_{t_{2}}(\mathcal{P}^{A}_{n_{2}})\;; \cdots \;; \gamma_{t_{m}}(\mathcal{P}^{A}_{n_{m}})\;; \mathcal{O}_{z}  \rangle_{\beta,L} \\
&=\frac{1}{L^{2(n_{1} + \ldots + n_{m})}} \sum_{\{\underline{p}_{i}\},\, \{\underline{\alpha}_{i}\}}\hat A^{(n_{1})}_{\theta,\underline{\alpha}_{1}}(\underline{p}_{1}) \cdots  \hat A^{(n_{m})}_{\theta,\underline{\alpha}_{m}}(\underline{p}_{m})\cdot\\
&\qquad\cdot \langle {\bf T} \hat W_{n_{1},\underline{\alpha}_{1}}(n_{1}\eta_{\beta}; \underline{p}_{1})\;; \cdots\;; \hat W_{n_{m},\underline{\alpha}_{m}}(n_{m}\eta_{\beta}; \underline{p}_{m})\;; \mathcal{O}_{z} \rangle_{\beta,L}\;;
\end{split}
\end{equation}
by space-time translation invariance of the time-ordered correlation functions:
\begin{equation}\label{eq:momspace}
\begin{split}
&\langle {\bf T} \hat W_{n_{1},\underline{\alpha}_{1}}(n_{1}\eta_{\beta}; \underline{p}_{1})\;; \cdots\;; \hat W_{n_{m},\underline{\alpha}_{m}}(n_{m}\eta_{\beta}; \underline{p}_{m})\;; \mathcal{O}_{z} \rangle_{\beta,L} \\
&\quad = \frac{e^{iz\cdot (P_{1} + \ldots +P_{m})}}{\beta L^{2}}\langle {\bf T} \hat W_{n_{1},\underline{\alpha}_{1}}(n_{1}\eta_{\beta}; \underline{p}_{1})\;; \cdots\;; \hat W_{n_{m},\underline{\alpha}_{m}}(n_{m}\eta_{\beta}; \underline{p}_{m})\;; \hat{\mathcal{O}}_{-{\bf P}_{1} - \ldots - {\bf P}_{m}} \rangle_{\beta,L}
\end{split}
\end{equation}
with ${\bf P}_{i} = (n_{i}\eta_{\beta}, P_{i})$, $P_{i} = \sum_{j = 1}^{n} p_{i,j}$, and:
\begin{equation}
\hat{\mathcal{O}}_{-{\bf P}_{1} - \ldots - {\bf P}_{m}} := \int_{0}^{\beta} dt\, e^{-i \sum_{j=1}^{m} n_{j} \eta_{\beta} t} \sum_{z\in \Lambda_{L}} e^{-i \sum_{j=1}^{m}P_{j} \cdot z} \gamma_{t}(\mathcal{O}_{z})\;.
\end{equation}
We are now ready to apply Wick's theorem to evaluate the cumulant. We have, omitting labels to avoid cluttering the notation:
\begin{equation}
\label{eq:props}
\begin{split}
&(\ref{eq:momspace}) = -  \frac{1}{\beta L^{2(1 + n_{1} + \ldots + n_{m})}}\cdot\\
&\qquad\qquad\cdot \sum_{\pi \in S_{m+1}} \sum_{{\bf k}} \sum_{\underline{p}_{1}, \ldots, \underline{p}_{m}} \hat A^{(n_{\pi(1)})}_{\theta}(\underline{p}_{\pi(1)})\cdots \hat A^{(n_{\pi(m)})}_{\theta}(\underline{p}_{\pi(m)})e^{iz\cdot (P_{1} + \ldots +P_{m})}\\
&\qquad\qquad\cdot \tr \Big\{ \hat g({\bf k}) Q_{\pi(1)} \hat g({\bf k} + {\bf P}_{\pi(1)})  Q_{\pi(2)}  \hat g({\bf k} + {\bf P}_{\pi(1)} + {\bf P}_{\pi(2)}) \\
&\qquad\qquad\quad\cdots Q_{\pi(m)} \hat g({\bf k} + {\bf P}_{\pi(1)} + \ldots + {\bf P}_{\pi(m)}) Q_{\pi(m+1)}\Big\}\;,
\end{split}
\end{equation}
where $\pi$ is a permutation of the natural numbers $\{1,2,\ldots, m+1\}$, and with the understanding that $\hat Q_{m+1} = \hat O$.  Now, since by the conical intersections of the energy bands we have, for $|k - \omega_{l}|$ small enough, $ |\Lambda_{\pm}(k) - \mu| \geq C |k-\omega_{l}|$, the following estimate for the two-point function in momentum space holds:
\begin{equation}\label{eq:gest0}
\| \hat g({\bf k}) \| \leq \sum_{l} \frac{C}{|k_{0}| + |k - \omega_{l}|}
\end{equation}
where the sum runs over the Fermi points. Let us discuss the estimate for the ${\bf k}$ sum in (\ref{eq:props}). For a given ${\bf k}$, suppose that all the arguments of the propagators are in norm bigger than $\eta_{\beta} /2$. Then, every propagator is bounded by $C/\eta_{\beta}$; we use this bound for all propagators except three, to be integrated. Alternatively, suppose that one of the arguments of the propagators is in norm less than $\eta_{\beta}/2$. Then, the arguments of the other propagators are in norm at least $\eta_{\beta}/2$, due to the shift of at least $\eta_{\beta}$ in the temporal component of the other momenta. In this second case, we again estimate by $C/\eta_{\beta}$ all propagators except three, and we include the propagator with the smallest denominator within the three to be integrated. All in all we have, for $\beta, L$ large enough:
\begin{equation}
\begin{split}
\frac{1}{\beta L^{2}} &\sum_{{\bf k}} \Big| \tr \Big\{ \hat g({\bf k}) Q_{\pi(1)} \hat g({\bf k} + {\bf P}_{\pi(1)})  Q_{\pi(2)}  \hat g({\bf k} + {\bf P}_{\pi(1)} + {\bf P}_{\pi(2)})\\& \cdots Q_{\pi(m)} \hat g({\bf k} + {\bf P}_{\pi(1)} + \ldots + {\bf P}_{\pi(m)}) Q_{\pi(m+1)}\Big\} \Big| \\
& \leq \frac{C^{m}}{\eta^{m-2}} \sup^{*}_{n_{a}, n_{b}, n_{c}} \sup_{P_{a}, P_{b}, P_{c}}  \int_{\mathbb{R} \times \mathbb{T}^{2}} d{\bf k}\, \frac{1}{|k_{0} + n_{a}\eta| + |k + P_{a}|} \frac{1}{|k_{0} + n_{b}\eta| + |k + P_{b}|} \\&\qquad\qquad \cdot \frac{1}{|k_{0} + n_{c}\eta| + |k + P_{c}|}
\end{split}
\end{equation}
where the star on the supremum means that $n_{a}$, $n_{b}$, $n_{c}$ are three different natural numbers. Let ${\bf P}_{a} = (n_{a}\eta, P_{a})$. Denoting by $\theta(A)$ the function which is equal to $1$ if the condition $A$ holds and $0$ otherwise, we write:
\begin{equation}\label{eq:int}
\begin{split}
&\int_{\mathbb{R} \times \mathbb{T}^{2}} d{\bf k}\, \frac{1}{|k_{0} + n_{a}\eta| + |k + P_{a}|} \frac{1}{|k_{0} + n_{b}\eta| + |k + P_{b}|}\frac{1}{|k_{0} + n_{c}\eta| + |k + P_{c}|} \\
&\quad = \int_{\mathbb{R} \times \mathbb{T}^{2}} d{\bf k}\, \frac{\theta(\| {\bf k} + {\bf P}_{a}\| > \eta/2)}{|k_{0} + n_{a}\eta| + |k + P_{a}|} \frac{1}{|k_{0} + n_{b}\eta| + |k + P_{b}|}\frac{1}{|k_{0} + n_{c}\eta| + |k + P_{c}|}\\
&\qquad + \int_{\mathbb{R} \times \mathbb{T}^{2}} d{\bf k}\, \frac{\theta(\| {\bf k} + {\bf P}_{a}\| \leq \eta/2)}{|k_{0} + n_{a}\eta| + |k + P_{a}|} \frac{1}{|k_{0} + n_{b}\eta| + |k + P_{b}|}\frac{1}{|k_{0} + n_{c}\eta| + |k + P_{c}|} \equiv \text{I} + \text{II}\;.
\end{split}
\end{equation}
Consider $\text{II}$. The constraint implies that $|k_{0} + n_{a}\eta| \leq \eta/2$. Since the natural numbers $n_{a}, n_{b}, n_{c}$ are different, we have that $|k_{0} + n_{b}\eta| \geq \eta/2$, $|k_{0} + n_{c}\eta| \geq \eta/2$. Then:
\begin{equation}
\text{II} \leq \frac{C}{\eta^{2}} \int_{\mathbb{R} \times \mathbb{T}^{2}} d{\bf k}\, \frac{\theta(\| {\bf k} + {\bf P}_{a}\| \leq \eta/2)}{|k_{0} + n_{a}\eta| + |k + P_{a}|} \leq C\;.
\end{equation}
Repeating a similar argument in $\text{I}$, we easily get:
\begin{equation}
\begin{split}
(\ref{eq:int}) &\leq K +  \int_{\mathbb{R} \times \mathbb{T}^{2}} d{\bf k}\, \frac{\theta(\| {\bf k} + {\bf P}_{a}\| > \eta/2)}{|k_{0} + n_{a}\eta| + |k + P_{a}|} \frac{\theta(\| {\bf k} + {\bf P}_{b}\| > \eta/2)}{|k_{0} + n_{b}\eta| + |k + P_{b}|}\frac{\theta(\| {\bf k} + {\bf P}_{c}\| > \eta/2)}{|k_{0} + n_{c}\eta| + |k + P_{c}|} \\
&\leq K + C\int_{\mathbb{R} \times \mathbb{T}^{2}} d{\bf k}\, \frac{\theta(\| {\bf k}\| > \eta/2)}{\| {\bf k}\|^{3}} \\
&\leq K + C|\log \eta|\;.
\end{split}
\end{equation}
Therefore:
\begin{equation}\label{eq:loops}
\begin{split}
&\frac{1}{\beta L^{2}} \sum_{{\bf k}} \Big| \tr \Big\{ \hat g({\bf k}) Q_{\pi(1)} \\&\quad\qquad \cdot \hat g({\bf k} + {\bf P}_{\pi(1)})  Q_{\pi(2)}  \hat g({\bf k} + {\bf P}_{\pi(1)} + {\bf P}_{\pi(2)}) \cdots Q_{\pi(n-1)} \hat g({\bf k} + {\bf P}_{\pi(1)} + \ldots + {\bf P}_{\pi(m)}) Q_{\pi(m)}\Big\} \Big| \\
&\qquad \leq \frac{C^{m}}{\eta^{m-2}}|\log \eta|\;.
\end{split}
\end{equation}
Next, to control the sum over $\underline{p}_{i}$ in (\ref{eq:props}) we use that, by (\ref{eq:Aests}):
\begin{equation}\label{eq:Aest}
\frac{1}{L^{2n_{i}}} \sum_{\underline{p}_{i}} \Big| \hat A^{(n_{i})}_{\theta}(\underline{p}_{i}) \Big|  \leq C^{n_{i}} \theta^{n_{i}}\;.
\end{equation}
All in all, from (\ref{eq:props}), (\ref{eq:loops}), (\ref{eq:Aest}), we have:
\begin{equation}
\begin{split}
|(\ref{eq:wick})| &\leq  \sum_{\substack{\{n_{i}\}_{i=1}^{m}: n_{i}\geq 1, \\ n_{1} + \ldots + n_{m} = \ell}} \frac{1}{n_{1}!\cdots n_{m}!}\frac{(C \theta)^{\sum_{i} n_{i}}}{\eta^{m}} \eta^{2} |\log \eta| \\
&\leq \sum_{\substack{\{n_{i}\}_{i=1}^{m}: n_{i}\geq 1, \\ n_{1} + \ldots + n_{m} = \ell}} {\ell \choose n_{1} \ldots n_{m}} \frac{1}{\ell!}\frac{(C \theta)^{\ell}}{\eta^{m}} \eta^{2} |\log \eta| \\
&\leq \frac{m^{\ell}}{\ell!} \frac{(C \theta)^{\ell}}{\eta^{m}} \eta^{2} |\log \eta|\;.
\end{split}
\end{equation}
Observe that the factor $1/m!$ in (\ref{eq:wick}) has been used to control the sum over the permutations. Thus, recalling (\ref{eq:lth}), we have:
\begin{equation}
\begin{split}
\Big| \Big(\Tr_{\mathcal{F}} \mathcal{O}_{z} \widetilde{\Gamma}(0)\Big)^{(\ell)} \Big| &\leq \sum_{m = 1}^{\ell} \frac{m^{\ell}}{\ell!} \frac{(C \theta)^{\ell}}{\eta^{m}} \eta^{2} |\log \eta| \\
&\leq K^{\ell} \Big(\frac{\theta}{\eta}\Big)^{\ell} \eta^{2} |\log \eta|\;.
\end{split}
\end{equation}
In particular, this bound shows that, for $ |\theta|/\eta$ sufficiently small: 
\begin{equation}\label{eq:duharest}
\Big|\sum_{\ell \geq 2} \Big(\Tr_{\mathcal{F}} \mathcal{O}_{z} \widetilde{\Gamma}(0)\Big)^{(\ell)} \Big| \leq C \theta^{2}  |\log \eta|\;.
\end{equation}
This concludes the proof of Theorem \ref{thm:response}.\qed
\begin{remark} Let us comment about the bound for the first order term. In this case, the expression (\ref{eq:props}) only involves two propagators, and it is integrable in ${\bf k}$ uniformly in $\eta$. Combined with $\| \hat A_{\theta} \|_{\ell^{1}} \leq C|\theta|$, this shows that the first order term is bounded by $\text{(const.)} |\theta|$, as claimed in Remark \ref{rem:resp}.
\end{remark}

%
%
%
%
%
%

\paragraph{Acknowledgments} G.~M. and M.~P. acknowledge support by the European Research Council through the ERC-StG MaMBoQ, n. 802901. G.~M. acknowledges financial support from the Independent Research Fund Denmark--Natural Sciences, grant DFF–10.46540/2032-00005B and from the European Research Council through the ERC CoG UniCoSM, grant agreement n.724939. M.~P. acknowledges support from the MUR, PRIN 2022 project MaIQuFi cod. 20223J85K3. This work has been carried out under the auspices of the GNFM of INdAM. We thank the anonymous referees for comments on a previous version of this manuscript.

\vspace{0.5cm}

\noindent\textbf{Statements and Declarations}\\ 
\textbf{Competing Interests} All authors declare that they have no conflicts of interest to disclose.

\vspace{0.5cm}

\noindent\textbf{Data Availability} Data sharing is not applicable to this article as no datasets were generated or
analyzed during the current study.

\appendix
\section{Auxiliary results}
\label{sec:aux}

In this appendix we collect the proofs of some auxiliary results.

\begin{proof}[Proof of Lemma \ref{lem:sufficient-longitudinal-quantization}]

Let us recall that 
\begin{equation}
\label{eqn:det}
        \det S = \sqrt{s_{11}^2 + s_{21}^2}\sqrt{s_{12}^2 + s_{22}^2}\sin\theta,
\end{equation}
where $\theta$ is the angle between the vectors $(s_{11},s_{21}),(s_{12}, s_{22})$. If the following conditions hold: $s_{11}^2 + s_{21}^2 = s_{12}^2 + s_{22}^2$ and $(s_{11},s_{21})\cdot(s_{12}, s_{22})=0$, then $|\det S|= s_{11}^2 + s_{21}^2=s_{12}^2 + s_{22}^2$ and so equality \eqref{eqn:condS} is implied. Viceversa, if \eqref{eqn:condS} holds true, then $s_{11}^2 + s_{21}^2 =\abs{\det S} =s_{12}^2 + s_{22}^2$; thus, we get that $\abs{\det S}=\sqrt{s_{12}^2 + s_{22}^2}\sqrt{s_{12}^2 + s_{22}^2}$, and so by comparison with \eqref{eqn:det} we have that $\cos \theta=0$. 
\end{proof}

\begin{proof}[Proof of Lemma \ref{lem:norm-derivative-projector-estimate}]
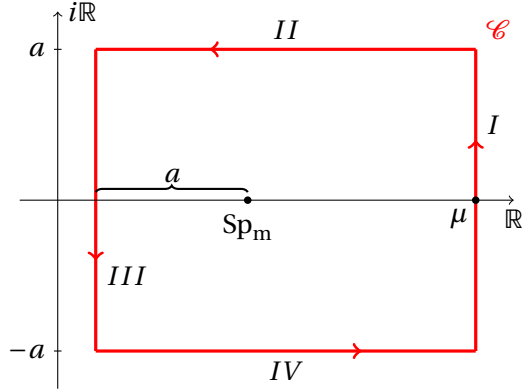
\begin{figure}[h]
    \centering
    \begin{tikzpicture}
        \begin{scope}[very thick,decoration={
            markings,
            mark=at position 0.7 with {\arrow{>}}}
            ]
            \draw[postaction={decorate},color=red] (6,-2) -- node[anchor=west,yshift=1cm] {{\color{black} $I$}} ++(0,4) node[anchor=south west] {$\mathcal{C}$};
            \draw[postaction={decorate},color=red] (6,2)-- node[anchor=south] {{\color{black} $II$}} ++(-5,0);
            \draw[postaction={decorate},color=red] (1,2)-- node[anchor=west,yshift=-1cm] {{\color{black} $III$}} ++(0,-4);
            \draw[postaction={decorate},color=red] (1,-2)-- node[anchor=north] {{\color{black} $IV$}} ++(5,0);
        \end{scope}
        \draw[decorate, decoration={brace}, yshift=3pt, thick] (1,0) -- node[above] {$a$} (3,0);
        \draw[->] (0,0) -- (6.5,0) node[below] {$\R$};
        \draw[->] (0.5,-2.5) -- ++(0,5) node[anchor=west] {$i\R$};
        \path (6,0) node[circle,fill=black, inner sep=1pt] {} -- ++(0,0) node[anchor=north east, inner sep=3pt] {$\mu$};
        \path (3,0) node[circle,fill=black, inner sep=1pt] {} -- ++(0,0) node[anchor=north, inner sep=5pt] {$\footnotesize \mathrm{Sp}_\text{m}$};
        \foreach \y/\name in {-2/-a, 2/a}
            \draw (0.55,\y cm) -- (0.45,\y cm) node[anchor=east,fill=white] {$\name$};
    \end{tikzpicture}
    \caption{Complex path used to prove Lemma \ref{lem:norm-derivative-projector-estimate}. We denote the minimum of the spectrum of $H$ as $\mathrm{Sp}_\text{m}$.}
    \label{fig:path-projector}
\end{figure}
For any $k\in \FC_1^*\setminus\{\omega_1,\dots,\omega_n\}$ we can choose a specific complex path $\mathcal{C}$, as illustrated in Figure \ref{fig:path-projector}, by setting $\mathrm{Sp}_{\text{m}}:=\min\{\mathrm{Sp}(H(k))\}$, to define the Fermi projector $P_\mu(k)$  via the Riesz formula:
 \begin{equation}
 \label{eqn:fermiprojriesz}
 P_\mu(k)=\frac{1}{2\pi i}\oint_{\mathcal{C}}dz\, (z \Id-H(k))^{-1}.
\end{equation}
The fact that $\mathcal{C}$ does not depend on $k$ implies that $P_\mu(k)$ is differentiable at every $k$ which is not a Fermi point.
Moreover, for every $1\leq j\leq 2$ we have that
\begin{equation}
    \label{eqn:projector-derivative-cauchy}
    \partial_j P_\mu(k) = \frac{1}{2\pi i}\int_{\mathcal{C}}dz\, (z\Id-H(k))^{-1}\,\partial_j H(k)\,(z\Id-H(k))^{-1}.
\end{equation}
We shall estimate $\norm{\partial_j P_\mu(k)}$ by using the particular choice for the closed curve $\mathcal{C}$ and expression \eqref{eqn:projector-derivative-cauchy}.
Let us define $M:=\max_{k\in \FC_1^*}\norm{\partial_j H(k)}$.
Notice that for the contribution coming from the path $II$ we get that
\begin{align*}
&\norm{\int_{II}dz\, (z\Id-H(k))^{-1}\,\partial_j H(k)\,(z\Id-H(k))^{-1}}\leq M \int_{b-a}^{\mu}dx\, \norm{(x\Id-H(k))^{-1}}^2\leq  M\frac{\mu -b +a}{a^2},
\end{align*}
where we have used that $\norm{(x\Id-H(k))^{-1}}=\dist (\mathrm{Sp}(H(k)), x)^{-1}\leq \frac{1}{a}$. Similarly we can estimate the contributions coming from paths $III, IV$. While for the term coming from $I$ we obtain that
\begin{align*}
&\norm{\int_{I} dz\, (z\Id-H(k))^{-1}\,\partial_j H(k)\,(z\Id-H(k))^{-1}}\leq 
\frac{M\pi}{\mu - \Lambda_-(k)}\leq C\sum_{l=1}^n\frac{1}{\abs{k-\omega_l}},
\end{align*}
where we have used equality \eqref{eqn:conical} and inequality \eqref{eqn:cone}. 
\end{proof}

\begin{proof}[Proof of Proposition \ref{prp:hjj-two-eigenvalues}]
First, by using Remark \ref{rmk:complex-deformation-k-domain} we undo the complex deformation in time of $\fjjsing(\eta)$ on the integration domain $B_\eps$, rewriting it in real time. Thus, by also exploiting rewriting \eqref{eqn:rewf}, for any $\eta>0$ we have that
\begin{equation*}
\fjjsing(\eta) = \frac{i}{(2\pi)^2} \int_{B_\eps} dk \int_{-\infty}^0dt e^{\eta t} \Tr\left(P_b(k)\left[ \tau_t((\chi_<J_j\chi_<)(k)), (\chi_<J_j\chi_<)(k)\right] P_b(k) \right).
\end{equation*}
By using that $\chi_<(k)=P_-(k)+P_+(k)$ for every $k\in B_\eps$ due to condition \eqref{eqn:condition-eps-delta}, we get that (here, we do not write the $k$-dependence for the sake of readability) 
\begin{align*}
P_b\left[ \tau_t(\chi_<J_j\chi_<), \chi_<J_j\chi_<\right] P_b
&=e^{i (\Lambda_- -\Lambda_+) t} P_- J_j P_+J_j P_- - e^{i (\Lambda_+-\Lambda_-) t}P_- J_j P_+J_j P_-,
\end{align*}
where we have exploited that $P_b(k)P_+(k)=0$ and $P_b(k)P_-(k)=P_-(k)$ \alev in $k$.
Therefore, we conclude that
\begin{align*}
\fjjsing(\eta) 
&=\frac{2}{(2\pi)^2}\int_{B_\eps} dk\frac{(\Lambda_-(k) - \Lambda_+(k))}{\eta^2 + (\Lambda_-(k) - \Lambda_{+}(k))^2} \Tr\big(  P_-(k)J_j(k)P_+(k)J_j(k)\big).
\end{align*}
\end{proof}

\begin{proof}[Proof of Proposition \ref{prp:longitudinal-conductivity-eigenvalues} ]
We start by defining the interval $I_e := (-\infty, \mu-2\delta)\cup(\mu+2\delta,\infty)$ and its characteristic function $\chi_{I_e}$.
Then we introduce the following fibered operators:
\begin{align}
\label{eqn:tildeH-definition}
    \tilde{H}(k) &:= (\Lambda_-(k)-\mu) P_-(k) + (\Lambda_+(k)-\mu) P_+(k)\equiv( H(k)-\mu)(P_-(k)+P_+(k)), \\
    P_e(k) &:= \chi_{I_e}(H(k)).
\end{align}
By \eqref{eqn:condition-eps-delta} we get that $k\in B_\eps$ implies that the eigenvalues $\Lambda_\pm(k)\in (\mu-\delta,\mu+\delta)$ and the other eigenvalues of $H(k)$ are in the interval $I_e$.
Hence, we can write the identity $\Id$ on $\ell^2(\X/\Gamma)$ as
\begin{equation}
\label{eqn:decid}
    \Id = P_-(k) + P_+(k) + P_e(k) \qquad \text{for every } k \in B_\eps.
\end{equation}
Thus, by \eqref{eqn:sumPpm} we obtain that the map $B_\eps\ni k\mapsto P_e(k)$ is $C^2$ and by Assumption \ref{assum:H}\ref{item:perselfadj} the function $B_\eps\ni k\mapsto \tilde{H}(k)$ is $C^2$ as well.

Now we shall rewrite $\fjjsing$ as the sum of the function $\zjj$ defined in \eqref{eqn:zjj-definition} and the map $\xi$ defined in \eqref{eqn:xi}, which does not contribute to $\sigma_{jj}$, as shown below.
In the following we omit the explicit dependence on the $k$-variable and the equalities have to be understood \alev in $k\in B_\eps$ (actually, we shall only exclude the Fermi points $\omega_1,\dots,\omega_n$ in $B_\eps$). We also use the shorthand notation:
\begin{equation*}
    \Jj^{+-} := P_+(\partial_j\Tilde{H})P_-, \qquad \Jj^{-+} := P_-(\partial_j\Tilde{H})P_+.
\end{equation*}
In view of the identity $J_j= \partial_j H = \partial_j(H-\mu)=\partial_j(\Tilde{H} +(H-\mu)P_e)$, we rewrite the trace term inside the integral in \eqref{eqn:fjjsing-two-eigenvalues-current} as
\begin{equation}
\label{eqn:rewriting-trace-fjjsing}
    \Tr\big( P_- \,J_j\, P_+\, J_j\, P_-   \big)= 
    \Tr\left( \Jj^{-+}\, \Jj^{+-} \right)
\end{equation}
where we have used that $P_e\,P_+=0$, $(\partial_j P_e)P_+ = - P_e(\partial_j P_+)$ together with $P_-\, P_e=0$.
Now we rewrite the expression of $\Tr\big(\Jj^{-+}\Jj^{+-}\big)$, by decoupling the contributions containing only $P_-$ and the ones involving only $P_+$:
\begin{equation}
\label{eqn:rewrite-JjTilde-1}
\begin{split}
    \Tr\big(\Jj^{-+}\Jj^{+-}\big)  
    &= \frac{1}{2}\Tr\big(P_-(\partial_j \tilde{H})^2\big) - \frac{1}{2}\Tr\big(P_-(\partial_j \tilde{H})P_-(\partial_j\tilde{H})P_-) \\ 
    & \quad - \frac{1}{2}\Tr\big(P_-(\partial_j \tilde{H})P_e(\partial_j \tilde{H})P_-\big) + (m \leftrightarrow m+1),
\end{split}
\end{equation}
where we have used \eqref{eqn:decid} and the cyclicity of the trace, and the notation \virg{$+(m \leftrightarrow m+1)$} means that it is added the term given by substituting $m$ with $m+1$ to all the summands to the left.
To extract the function $\zjj$ from $\fjjsing$, we proceed by computing $\partial_j \tilde{H}$. By the very definition of $\tilde{H}$ in \eqref{eqn:tildeH-definition}, we have that
\begin{align*}
\partial_j \tilde{H}
=(\partial_j\Lambda_-) P_-  + (\Lambda_- -\mu)(\partial_j P_- )+ (\partial_j\Lambda_+) P_+ +(\Lambda_+ -\mu)(\partial_j P_+).
\end{align*}
Then, we obtain:
\begin{align}
\Tr\big(\Jj^{-+}\Jj^{+-}\big)  
\label{eqn:rewJtilde}
&=\frac{1}{2}\Tr\big((\partial_j \tilde{H})^2\big)-\Tr\big(P_e(\partial_j \tilde{H})^2P_e\big)- \frac{1}{2} {(\partial_j\Lambda_-)}^2 - \frac{1}{2} {(\partial_j\Lambda_+)}^2.
\end{align}
Next, we proceed by rewriting the first summand in \eqref{eqn:rewJtilde} as
\begin{align}
\label{eqn:trdeltatildeH2}
\begin{split}
    \Tr\big((\partial_j \tilde{H})^2\big) & = \partial_j\Tr\big( \tilde{H}\,\partial_j \tilde{H} \big) - \Tr\big(\tilde{H}\,\partial_j^2\tilde{H}\big) 
    = \frac{1}{2}\partial_j^2\Tr\big(\tilde{H}^2\big) - \Tr\big(\tilde{H}\,\partial_j^2 \tilde{H} \big) \\
    &= \frac{1}{2}\partial_j^2\big((\Lambda_--\mu)^2 + (\Lambda_+-\mu)^2\big) - \Tr\big(\tilde{H}\,\partial_j^2 \tilde{H}\big),
\end{split}
\end{align}
where in the last equality we have used the definition of $\tilde{H}$ in \eqref{eqn:tildeH-definition}.
Plugging \eqref{eqn:trdeltatildeH2} into \eqref{eqn:rewJtilde}, we get that
\begin{equation}
\label{eqn:eqn:rewJtilde2}  
\begin{aligned}
\Tr\big(\Jj^{-+}\Jj^{+-}\big)&=\frac{1}{4}\partial_j^2\big((\Lambda_--\mu)^2 + (\Lambda_+-\mu)^2\big)- \frac{1}{2} {(\partial_j\Lambda_-)}^2 - \frac{1}{2} {(\partial_j\Lambda_+)}^2\\
&\quad-\frac{1}{2} \Tr\big(\tilde{H}\,\partial_j^2 \tilde{H}\big) 
   -\Tr\big(P_e(\partial_j \tilde{H})^2 P_e\big).
   \end{aligned}
\end{equation}
Recalling \eqref{eqn:fjjsing-two-eigenvalues-current}, up to this point we have proven that for any $\eta>0$:
\begin{equation*}
\fjjsing(\eta) = \zjj(\eta) + \xi(\eta),
\end{equation*}
where $\zjj$ is defined in \eqref{eqn:zjj-definition} and  
\begin{equation}
\label{eqn:xi}
\begin{aligned}
    \xi(\eta) &:= \int_{B_\eps} dk \,\Xi(\eta, k),\\
    \Xi(\eta, k)&:=\frac{1}{{(2\pi)}^2} \frac{2(\Lambda_+(k) - \Lambda_-(k))}{\eta^2 + (\Lambda_+(k) - \Lambda_-(k))^2}
     \bigg(\Tr\big(\tilde{H}(k)^2 (\partial_j P_e(k))^2\big) + \frac{1}{2} \Tr\big(\tilde{H}(k)\partial_j^2\tilde{H}(k)\bigg),
    \end{aligned}
\end{equation}
where we have used that $\tilde{H}(k)\partial_j P_e(k)=-\partial_j\tilde{H}(k) P_e(k)$ as $\tilde{H}(k)\, P_e(k)=0$. 
We conclude the proof by showing that the function $\xi$ is differentiable and that $\lim_{\eta\to 0^+} \xi'(\eta)= 0$, that is $\xi$ does not contribute to the longitudinal conductivity $\sigma_{jj}$. 
We shall compute explicitly the derivative $\xi'(\eta)$ for $\eta\to 0$ on each $B_\eps^{(l)}$ (whose disjoint union gives the whole set $B_\eps$, recall \eqref{eqn:definition-Bepsilon}) with $1\leq l\leq n$, by using the  dominated convergence theorem. 
First, for \alev in $k\in B_\eps$ we observe that 
\begin{equation}
\label{eqn:derXi}
\partial_\eta \Xi(\eta, k)=\frac{1}{{(2\pi)}^2} \frac{4\eta(\Lambda_-(k) - \Lambda_+(k))}{{\(\eta^2 + (\Lambda_+(k) - \Lambda_-(k))^2\)}^2}
     \bigg(\Tr\big(\tilde{H}(k)^2 (\partial_j P_e(k))^2\big) + \frac{1}{2} \Tr\big(\tilde{H}(k)\partial_j^2\tilde{H}(k)\bigg).
\end{equation}
Second, we notice that there exists a constant $C_1$:
\begin{equation}
\label{eqn:estxi}
\abs{\Tr\big(\tilde{H}(k)^2 (\partial_j P_e(k))^2\big)}\leq C_1 \abs{k-\omega_l},\qquad\abs{\Tr\big(\tilde{H}(k)\partial_j^2\tilde{H}(k)\big)}\leq C_1\abs{k-\omega_l}\quad\text{\alev $k\in B_\eps^{(l)}$},  
\end{equation}
by using that in view of Assumption \ref{assum:H}\ref{item:conical} there exists a constant $C_2$ such that $\norm{\tilde{H}(k)}\leq C_2\abs{k-\omega_l}$ for \alev $k\in B_\eps^{(l)}$, and by exploiting the $C^2$ regularity of the maps $B_\eps\ni k\mapsto P_e(k)$ and $B_\eps\ni k\mapsto \tilde{H}(k)$, as shown previously. Therefore, by also using condition \eqref{eqn:condition-cone-eps}, we have that there exists a constant $C_3$ such that
\begin{equation}
\label{eqn:ub1}
\abs{\partial_\eta\Xi(\eta,k)}\leq \frac{C_3\eta\abs{k-\omega_l}^2}{\left(\eta^2 + m^2\abs{k-\omega_l}^2\right)^2}\qquad\text{ for \alev $k\in B_\eps^{(l)},$ for every $ 1\leq l\leq n$.}
\end{equation}
For any $a>0$, let us introduce  the function $F_a\colon (0,\infty)\to \R$ such that $F_a(x):=\frac{xa}{{(x^2+a^2)}^2}$. It is easy to check that $\max_{x>0}F_a(x)=\frac{3\sqrt{3}}{16 a^2}$. Therefore, we can estimate the right-hand side of \eqref{eqn:ub1} as:
\begin{equation*}
\frac{C_3\eta\abs{k-\omega_l}^2}{\left(\eta^2 + m^2\abs{k-\omega_l}^2\right)^2}=\frac{C_3 \abs{k-\omega_l}}{m}F_{(m\abs{k-\omega_l)}}(\eta)\leq \frac{C_3 \abs{k-\omega_l}}{m}\frac{3\sqrt{3}}{16 {(m\abs{k-\omega_l})}^2}=\frac{3\sqrt{3}C_3}{16 m^3 \abs{k-\omega_l}}   
\end{equation*}
which is integrable (uniformly in $\eta$). Thus, the dominated convergence theorem and inequality \eqref{eqn:ub1} implies that
\begin{align*}
\abs{\xi'(\eta)}&=\abs{\sum_{l=1}^n\int_{B_\eps^{(l)}}\di k\, \partial_\eta\Xi(\eta,k)}\leq\sum_{l=1}^n\int_{B_\eps^{(l)}}\di k\, \abs{\partial_\eta\Xi(\eta,k)}
\\
&\leq C_3\sum_{l=1}^n\int_{B_\eps^{(l)}}\di k\, 
\frac{\eta\abs{k-\omega_l}^2}{\left(\eta^2 + m^2\abs{k-\omega_l}^2\right)^2}=\frac{n C_3\eta}{2 m^4}\(\ln\(1+\frac{\eps^2 m^2}{\eta^2}\) +\frac{1}{1+\frac{\eps^2 m^2}{\eta^2}}-1\)\to 0\text{ as $\eta\to 0^+$}.  
\end{align*}
\end{proof}



\begin{thebibliography}{00}

\bibitem{AG} M. Aizenman and  G. M. Graf. Localization bounds for an electron gas. {\it J. Phys. A: Math. Gen.} {\bf 31}, 6783 (1998).

\bibitem{avron} J. E. Avron, R. Seiler, and B. Simon. Homotopy and quantization in condensed matter physics. {\it Phys. Rev. Lett.} {\bf 51}, 51 (1983).

\bibitem{AS2} J. E. Avron, R. Seiler, and B. Simon. Charge deficiency, charge transport and comparison of dimensions. {\it Commun. Math. Phys.} {\bf 159}, 399-422 (1994).

\bibitem{ASY} J. E. Avron, R. Seiler, and L. G. Yaffe. Adiabatic theorems and applications to the quantum Hall effect. {\it Comm. Math. Phys.} {\bf 110}, 33-49 (1987).

\bibitem{BdRFrev} S. Bachmann, W. De Roeck, and M. Fraas. The adiabatic theorem in a quantum many-body setting. {\it Contemp. Math.}, {\it Analytic Trends in Mathematical Physics} {\bf 741}, 43-58 (2020).

\bibitem{BdRF} S. Bachmann, W. De Roeck, and M. Fraas. The adiabatic theorem and linear response theory for extended quantum systems. {\it Comm. Math. Phys.} {\bf 361}, 997-1027 (2018).

\bibitem{BBdRF} S. Bachmann, A. Bols, W. De Roeck, and M. Fraas. A Many-Body Index for Quantum Charge Transport. {\it Comm. Math. Phys.} {\bf 375}, 1249-1272, (2020).

\bibitem{BRFL}
S. Bachmann, W. De Roeck, M. Fraas, and M. Lange. Exactness of linear response in the quantum Hall effect. {\it Ann. Henri Poincaré} {\bf 22}, 1113-1132 (2021).


\bibitem{BES} 
J. Bellissard, A. van Elst, and H. Schulz-Baldes. The non-commutative geometry of the quantum Hall effect. {\it J. Math. Phys.} {\bf 35}, 5373 (1994).

\bibitem{CFLSD}
E. Cancès, C. Fermanian Kammerer, A. Levitt, and S. Siraj-Dine. Coherent Electronic Transport in Periodic Crystals. {\it Ann. Henri Poincaré} {\bf 22}, 2643–2690 (2021).


\bibitem{dREF} W. De Roeck, A. Elgart, and M. Fraas. Derivation of Kubo’s formula for disordered systems at zero temperature. {\it Invent. math.} {\bf 235}, 489–568 (2024).

\bibitem{drouot}
A. Drouot. Ubiquity of conical points in topological insulators. {\it  Journal de l’École polytechnique -- Mathématiques} {\bf 8}, 507-532 (2021).

\bibitem{ES}
A. Elgart and B. Schlein. Adiabatic charge transport and the Kubo formula for Landau-type Hamiltonians. {\it Commun. Pure Appl. Math.} {\bf 57}, 590-615 (2004).

\bibitem{Frohlich}
J. Froehlich, U.M. Studer and E. Thiran. Quantum Theory of Large Systems of Non-Relativistic Matter. {\tt arXiv:cond-mat/9508062}

\bibitem{FGR}
S. Fabbri, A. Giuliani and R. Reuvers. Universality of the topological phase transition in the interacting Haldane model. {\it Phys. Rev. B} {\bf 112}, 165113 (2025).

\bibitem{FrMa}
L. Fresta, G. Marcelli. Spin Transport and Lack of Quantisation for Time-Reversal Symmetric Insulators on the Honeycomb Structure. {\it Ann. Henri Poincaré} (2026).

\bibitem{GMJP16}
A. Giuliani, I. Jauslin, V. Mastropietro, and M. Porta. Topological phase transitions and universality in the Haldane-Hubbard model.
{\it Phys. Rev. B} {\bf 94}, 205139 (2016).

\bibitem{GM}
A. Giuliani and V. Mastropietro. The 2D Hubbard model on the honeycomb lattice. {\it Commun. Math. Phys.} {\bf 293}, 301-346 (2010).

\bibitem{GMP12}
A. Giuliani, V. Mastropietro, and M. Porta. Universality of Conductivity in Interacting Graphene. {\it Commun. Math. Phys.} {\bf 311}, 317–355 (2012).

\bibitem{GMPhall} A. Giuliani, V. Mastropietro, and M. Porta. Universality of the Hall Conductivity in Interacting Electron Systems. {\it Comm. Math. Phys.} {\bf 349}, 1107-1161 (2017).

\bibitem{GMPhald} A. Giuliani, V. Mastropietro, and M. Porta. Quantization of the interacting Hall conductivity in the critical regime. {\it J. Stat. Phys.} {\bf 180}, 332-365 (2020).

\bibitem{GMPweyl} A. Giuliani, V. Mastropietro, and M. Porta. Anomaly non-renormalization in interacting Weyl semimetals. {\it Comm. Math. Phys.} {\bf 384}, 997-1060 (2021).

\bibitem{G}
G. M. Graf. Aspects of the integer quantum Hall eﬀect. {\it Proc. Symp. Pure Math.}
{\bf 76}, 429-442 (2007).

\bibitem{GLMP24}
R. L. Greenblatt, M. Lange, G. Marcelli, and M. Porta. Adiabatic Evolution of Low-Temperature Many-Body Systems. {\it Commun. Math. Phys.} {\bf 405}, 75 (2024).

\bibitem{haldane}
F. D. M. Haldane. Model for a Quantum Hall Eﬀect without Landau levels: condensed-matter realization of the \virg{parity anomaly}. {\it Phys. Rev. Lett.} {\bf 61}, 2015--2018 (1988).

\bibitem{rootspolynomial}
G. Harris and C. Martin. Shorter Notes: The Roots of a Polynomial Vary Continuously as a Function of the Coefficients.
{\it Proceedings of the American Mathematical Society}, {\bf 100}, 390-392 (1987).

\bibitem{HM} M. B. Hastings and S. Michalakis. Quantization of Hall conductance for interacting electrons on a torus. {\it Comm. Math. Phys.} {\bf 334}, 433-471 (2015).

\bibitem{henheik}
J. Henheik and S. Teufel. Justifying Kubo’s formula for gapped systems at zero temperature: A brief review and some new result. {\it Reviews in Mathematical Physics}, { \bf 33},  No. 01,  2060004 (2021).

\bibitem{KS}
M. Klein and R. Seiler. Power-law corrections to the Kubo formula vanish in quantum Hall systems.
{\it Commun. Math. Phys.}, {\bf 128}, 141-160 (1990).


\bibitem{klitzing}
K. von Klitzing, G. Dorda, and M. Pepper. New Method for High-Accuracy Determination of the Fine-Structure Constant Based on Quantized Hall Resistance. {\it Phys. Rev. Lett.} {\bf 45}, 494-497 (1980).

\bibitem{kubo}
R.~Kubo. Statistical-mechanical theory of irreversible processes I: General theory and simple applications to magnetic and conduction problems. {\it J. Phys. Soc. Jpn.} {\bf 12}, 570--586 (1957).

\bibitem{laughlin} 
R. B. Laughlin. Quantized Hall conductivity in two dimensions. {\it Phys. Rev. B} {\bf 23}, 5632 (1981).

\bibitem{ludwig}
A. W. W. Ludwig, M. P. A. Fisher, R. Shankar, and G. Grinstein. Integer quantum Hall transition: An alternative approach and exact results. {\it Phys. Rev. B} {\bf 50}, 7526 (1994).

\bibitem{M}
G. Marcelli. Improved energy estimates for a class of time-dependent perturbed Hamiltonians. {\it Lett. Math. Phys.} {\bf 112}, 51 (2022).

\bibitem{MM2}
G. Marcelli and D. Monaco. From charge to spin: Analogies and differences in quantum transport coefficients. {\it J. Math. Phys.} {\bf 63}, 072102 (2022)

\bibitem{MM}
G. Marcelli and D. Monaco. Purely linear response of the quantum Hall current to space-adiabatic perturbations. {\it Lett. Math. Phys.} {\bf 112}, 91 (2022). 

\bibitem{mamomopa}
G. Marcelli, D. Monaco, M. Moscolari, and G. Panati. The Haldane model and its localization dichotomy . {\it Rend. Mat. Appl.} {\bf 39}, 307–327 (2018).

\bibitem{MPTa}
G. Marcelli, G. Panati, and C. Tauber. Spin conductance and spin conductivity in topological insulators:
analysis of Kubo-like terms. {\it Ann. Henri Poincaré} {\bf 20}, 2071–2099 (2019).

\bibitem{MPTe}
G. Marcelli, G. Panati, and S. Teufel. A new approach to transport coefficients in the quantum spin Hall
effect. {\it Ann. Henri Poincaré} {\bf 22}, 1069–1111 (2021).

\bibitem{mh} M. Manna\"i and S. Haddad. Strain tuned topology in the Haldane and the modified Haldane models. {\it J. Phys.: Condens. Matter} {\bf 32}, 225501 (2020).

\bibitem{MT} D. Monaco and S. Teufel. Adiabatic currents for interacting electrons on a lattice. {\it Rev. Math. Phys.} {\bf 31}, 1950009 (2019).

\bibitem{strainrev} G. G. Naumis, S. Barraza-Lopez, M. Oliva-Leyva and H. Terrones. Electronic and optical properties of strained graphene and other strained 2D materials: a review. {\it Rep. Prog. Phys.} {\bf 80} 096501 (2017).

\bibitem{panatisparberteufel}
G.~Panati, C.~Sparber, and S.~Teufel. 
Geometric currents in piezoelectricity. {\it Arch. Rat. Mech. Analysis} {\bf 191}, 387--422 (2009).


\bibitem{PS}
M.~Porta and H.~P.~Singh. Large Scale Response of Gapless $1d$ and Quasi-$1d$ Systems. {\it Ann. Henri Poincaré} (2025). \url{https://doi.org/10.1007/s00023-025-01600-z}

\bibitem{PSS}
M.~Porta, G.~Scola and H.~P.~Singh. Large Scale Dynamical Response of Interacting $1d$ Fermi Systems. {\tt arXiv:2509.08665}



\bibitem{stauber}
T. Stauber, N. M. R. Peres, and A. K. Geim. Optical conductivity of graphene in the visible region of the spectrum. {\it Phys. Rev. B} {\bf 78}, 085432 (2008).

\bibitem{teufel}
S. Teufel. Adiabatic Perturbation Theory in Quantum Dynamics. No. 1821 in Lecture Notes in Mathematics. Springer, Berlin (2003).

\bibitem{tknn}
D. J. Thouless, M. Kohmoto,  M. P. Nightingale, and M. den Nijs. Quantized Hall conductance in a two-dimensional periodic potential. {\it Phys. Rev. Lett.} {\bf 49}, 405 (1982).

\bibitem{WMMMT}
M. Wesle, G. Marcelli, T. Miyao, D. Monaco, and S. Teufel. Near Linearity of the Macroscopic Hall Current Response in Infinitely Extended Gapped Fermion Systems. {\it Commun. Math. Phys.} {\bf 406}, 199 (2025).
\end{thebibliography}
\end{document}